\newtheorem{lemma}{Lemma}
\newtheorem{corollary}{Corollary}
\newtheorem{definition}{Definition}
\newtheorem{observation}{Observation}
\newtheorem{proposition}{Proposition}
\theoremstyle{definition}
\newcommand{\opt}{{\sf OPT}}
\DeclareMathOperator{\operatorClassP}{P}
\newcommand{\classP}{\ensuremath{\operatorClassP}}
\DeclareMathOperator{\operatorClassNP}{NP}
\newcommand{\classNP}{\ensuremath{\operatorClassNP}}
\DeclareMathOperator{\operatorClassFPT}{FPT\xspace}
\newcommand{\classFPT}{\ensuremath{\operatorClassFPT}\xspace}
\DeclareMathOperator{\operatorClassFPTAS}{FPT-AS\xspace}
\newcommand{\classFPTAS}{\ensuremath{\operatorClassFPTAS}\xspace}
\newlength{\RoundedBoxWidth}
\newsavebox{\GrayRoundedBox}
\newenvironment{GrayBox}[1]%
   {\setlength{\RoundedBoxWidth}{.93\textwidth}
    \def\boxheading{#1}
    \begin{lrbox}{\GrayRoundedBox}
       \begin{minipage}{\RoundedBoxWidth}}%
   {   \end{minipage}
    \end{lrbox}
    \begin{center}
    \begin{tikzpicture}%
       \node(Text)[draw=black!20,fill=white,rounded corners,%
             inner sep=2ex,text width=\RoundedBoxWidth]%
             {\usebox{\GrayRoundedBox}};
        \coordinate(x) at (current bounding box.north west);
        \node [draw=white,rectangle,inner sep=3pt,anchor=north west,fill=white] 
        at ($(x)+(6pt,.75em)$) {\boxheading};
    \end{tikzpicture}
    \end{center}}
\newenvironment{defproblemx}[2][]{\noindent\ignorespaces%
                                \FrameSep=6pt%
                                \parindent=0pt%
                \vspace*{-1.5em}
                \ifthenelse{\isempty{#1}}{%
                  \begin{GrayBox}{\textsc{#2}}%
                }{%
                  \begin{GrayBox}{\textsc{#2} parameterized by~{#1}}%
                }
                \begin{tabular*}{\textwidth}{@{\hspace{.1em}} >{\itshape} p{1.8cm} p{0.8\textwidth} @{}}%
            }{
                \end{tabular*}%
                \end{GrayBox}%
                \ignorespacesafterend
            }
\newcommand{\defproblema}[3]{
  \begin{defproblemx}{#1}
    Input:  & #2 \\
    Task: & #3
  \end{defproblemx}
}%
\newcommand{\Oh}{\mathcal{O}}
\newcommand{\lr}[1]{\left( #1\right)}
\newcommand{\cP}{\mathcal{P}}
\newcommand{\cS}{\mathcal{S}}
\newcommand{\cL}{\mathcal{L}}
\newcommand{\pname}{\textsc}
\newcommand{\ProblemFormat}[1]{\pname{#1}}
\newcommand{\ProblemIndex}[1]{\index{problem!\ProblemFormat{#1}}}
\newcommand{\ProblemName}[1]{\ProblemFormat{#1}\ProblemIndex{#1}{}\xspace}
\newcommand{\probPack}{\ProblemName{Disk Repacking}}
\newcommand{\probCPack}{\ProblemName{Disk Appending}}
\newcommand{\probMPack}{\ProblemName{Max Disk Repacking}}
\begin{document}

\title{(Re)packing Equal Disks into Rectangle\thanks{The research leading to these results has received funding from the Research Council of Norway via the project  BWCA (grant no. 314528), the European Research Council (ERC) via grant LOPPRE, reference 819416, and Israel Science Foundation (ISF) grant no. 1176/18.} \footnote{A preliminary version of this article \cite{FominG0Z22} appears in the proceedings of the 49th International Colloquium on Automata, Languages, and Programming, ICALP 2022.}
}

\author{
Fedor V. Fomin\thanks{
Department of Informatics, University of Bergen, Norway.}
\and
Petr A. Golovach\addtocounter{footnote}{-1}\footnotemark{}
\and
Tanmay Inamdar\addtocounter{footnote}{-1}\footnotemark{}
\and
Saket Saurabh\addtocounter{footnote}{-1}\footnotemark{} \thanks{The Institute of Mathematical Science, HBNI, Chennai, India}
\and
Meirav Zehavi~\thanks{Ben-Gurion University of the Negev, Beer-Sheva, Israel} 
}

\date{}

\maketitle

\begin{abstract}
The problem of packing of equal  disks (or circles) into a rectangle  is a fundamental geometric problem.  (By a packing here we mean  an   arrangement of disks in a rectangle without overlapping.)
We consider the following  algorithmic generalization  of the equal disk packing problem.
In this  problem, for a given packing of equal disks into a rectangle, the question is whether by changing positions of a small number of disks, we can allocate space for packing more disks. More formally, in the repacking problem,  for a given set of $n$ equal disks packed into a rectangle and integers $k$ and $h$, we ask whether it is possible by changing positions of at most  $h$ disks  to  pack $n+k$ disks.  Thus  the problem of packing equal  disks is the special case of our problem with  $n=h=0$.

While the computational complexity of  packing equal  disks into a rectangle remains open, we prove that the repacking problem is 
NP-hard already  for $h=0$. Our main algorithmic contribution is   an algorithm that solves the repacking problem in time $(h+k)^{\Oh(h+k)}\cdot |I|^{\Oh(1)}$, where $I$ is the input size. 
That is, the problem is fixed-parameter tractable parameterized by $k$ and $h$.



\end{abstract}


\section{Introduction}\label{sec:intro}
Packing of equal circles inside a rectangle or a square is one of the oldest packing problems. In addition to many common-life applications, like packing bottles or cans in a box \cite{Goldberg70},  packings of circles  have a variety of industrial applications, including  circular cutting problems, communication networks, facility location,  and dashboard layout. 
We refer to the survey of Castillo,  Kampas,   and Pintér 
\cite{castillo2008solving} for an interesting overview of industrial applications of circle packings. 
 
The mathematical study of packing equal circles can be traced to Kepler \cite{J.-Kepler:1611sf}. 
Packing of circles also poses exciting mathematical and algorithmic challenges. After the significant efforts spent on packing for several decades \cite{schaer1965densest,toth2013lagerungen,maranas1995new,nurmela1999more,locatelli2002packing,szabo2007new,nurmela1997packing,FeketeMS19}, optimal packings of  equal circles inside a square are known only for instances of up to tens of circles  \cite{croft2012unsolved,specht2015best}.
The computational complexity of packing of  equal circles (NP-hardness or membership in NP) remains elusive. For packing circles with different radii,  Demaine, Fekete, and  Lang 
claimed 
  NP-hardness~\cite{DemaineFJ10}. See also the work  of Abrahamsen, Miltzow, and Seiferth \cite{AbrahamsenMS20} for a generic framework for establishing $\exists \mathbb{R}$-completeness for packing problems.


Our paper establishes several results on computational and parameterized complexity of a natural generalization of packing equal circles inside a rectangle. A remark in the terminology is in order. In the literature on packing, both terms, circles and disks, could be found. While the term circle is much more popular than disk, we decided to use disks for the following reason: In our hardness proof, it is more convenient to operate with open disks.
%
 Thus all disks we consider  are open and  unit (that is of radius one).  
Let us remind, that a family of disks forms a \emph{packing} if they are pairwise nonintersecting.\footnote{In the literature, it is often required for geometric packings that a packing should be maximal. In particular, for disk packing, every disk should touch either the bounding rectangle or another disk. However, in our problem, the task is to add a specified number of new disks to a given family and this  makes the maximality condition in our case very artificial.} 
In our problem, we have a packing of disks in a rectangle, and the question is whether we can allocate some space for   more disks by relocating a small amount of disks. More precisely, we consider the following problem. See Figure~\ref{fig:introex} for an example.

\defproblema{\probPack}%
{A packing $\cP$ of $n$ unit disks inside a rectangle $R$ and two integers $h,k\geq 0$.}%
{Decide whether there is a packing $\cP^*$ of $n+k$ unit disks inside $R$ obtained from $\cP$ by adding $k$ new disks and relocating  at most $h$ disks of $\cP$ to new positions.}

\begin{figure} 
\begin{center}
\includegraphics[scale=1.5]{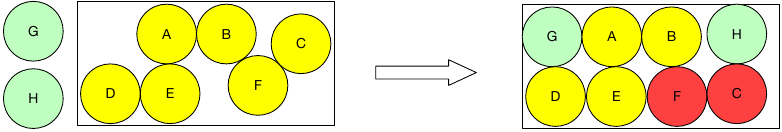}%
\caption{For a packing $\cP$ of  disks $A$--$F$, integers $h=2$, and $k=2$, the repacking  $\cP^*$ of $\cP$  is obtained by relocating disks $C$ and $F$, and by adding disks $G$ and $H$.   
 }\label{fig:introex}
\end{center}
\end{figure}

Thus when $n=0$, that is, initially there are no disks inside the rectangle, this is the classical problem of  packing equal circles inside a rectangle.

\subparagraph{Related Work on Geometric Packing.} Packing problems have received significant attention from the viewpoint of approximation algorithms. For the sake of illustration, let us mention a few examples. In $2$D Geometric Bin Packing, which is a variant of classical Bin Packing, the goal is to pack a given collection of rectangles into the minimum number of unit square bins. Typically, it is required that the rectangles be packed in an axis-parallel manner. There has been a long series of results on this problem, culminating in the currently known best approximation given by Bansal and Khan \cite{BansalK14}. A related problem is that of $2$D Strip Packing problem, where the task is to pack a given set of rectangles into an infinite strip of the given width, so as to minimize the height of packing. This problem has been studied from the context of approximation \cite{HarrenJPS14,JansenR19} as well as parameterized \cite{AshokKMS17} algorithms. Finally, we mention the Geometric Knapsack problem, which is also closely related to Geometric Bin Packing. In Geometric Knapsack, we are given a collection of rectangles, where each rectangle has an associated profit. The goal is to pack a subset of the given rectangles (without rotation) in an axis-aligned square knapsack, so as to maximize the total profit of the packed rectangles. Currently, the best approximation is given by Galvez et al.~\cite{GalvezGIHKW21}. A detailed survey of the literature on the results of these problems is beyond the scope of this work -- we direct an interested reader to the cited works and references therein and the survey paper of Christensen et al.~\cite{ChristensenKPT17}. However, we would like to highlight an important difficulty in \probPack---which is the focus of this work---as compared to the aforementioned geometric packing problems, namely, that packing disks in a rectangle requires the use of intricate geometric arguments as compared to packing rectilinear objects (such as rectangles) in a rectilinear container (such as a unit square, or an infinite strip). 

\subparagraph{Our Results.}
We show that \probPack is \classNP-hard even if the parameter $h=0$ -- we call this special case of problem \probCPack. 

\begin{restatable}{theorem}{themainhard}
\label{thm:compl-hard}
\probCPack is \classNP-hard when constrained to the instances $(R,\cP,k)$ where $R=[0,a]\times[0,b]$ for positive integers $a$ and $b$ and  
the centers of all disks in $\cP$ have rational coordinates. Furthermore, the problem remains \classNP-hard when it is only allowed to add new disks to $\cP$ with rational coordinates of their centers.  
\end{restatable}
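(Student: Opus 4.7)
The plan is to prove NP-hardness by a polynomial-time reduction from the strongly NP-hard problem 3-PARTITION: given positive integers $a_1, \ldots, a_{3m}$ with $\sum_i a_i = mB$ and $B/4 < a_i < B/2$, decide whether the multiset can be split into $m$ triples each summing to $B$. The strong NP-hardness is essential here, because it allows a construction whose size is polynomial in $mB$ with rational coordinates of polynomially bounded bit-complexity; in particular, the rectangle $R=[0,a]\times[0,b]$ can be chosen with positive integer side lengths.

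Given such an instance, I would construct $R$ of size $O(mB)$ and a fixed packing $\cP$ that carves out an empty region consisting of $m$ horizontal \emph{corridor bins}, each of height exactly $2$ (the disk diameter). The height-$2$ choice forces every new unit disk inside a corridor to have its center on the corridor's midline and the new disks to line up single file, giving each corridor a rigid capacity of $B$ unit disks. The numerical values $a_i$ are encoded as $3m$ \emph{item blocks} -- contiguous free sub-corridors of length $2a_i$ -- each attached to the main corridor bins through narrow staging passages shaped so that the $a_i$ disks needed to populate an item block can be inserted only as an indivisible group and only into a single bin. Setting $k := mB = \sum_i a_i$ then makes packing all $k$ new unit disks into the empty region equivalent to assigning the $3m$ items to the $m$ bins so that each bin receives items of total size $B$ -- exactly 3-PARTITION.

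The central obstacle will be establishing \emph{combinatorial rigidity}: ruling out sneaky packings that exploit curvature of the fixed disks or zigzag patterns to beat the intended item-to-bin correspondence. The height-exactly-$2$ choice eliminates vertical freedom inside the corridors, and densely tangent lattices of fixed disks outside the corridors remove all parasitic free space, so the only admissible packing regions are those prescribed by the construction; carefully positioned barrier disks at the corridor endpoints and at each staging passage then block partial item movement. All coordinates in the construction are rational with denominator $O(1)$, which yields the first statement. For the ``Furthermore'' clause, I observe that the witness packing realizing a yes-instance of 3-PARTITION places every added disk at a rational position (on a rational corridor midline at rational horizontal spacing $2$), so restricting new centers to rational coordinates neither destroys nor creates yes-instances; hence NP-hardness holds even under the rational-coordinate restriction.
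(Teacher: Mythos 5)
Your reduction starts from a different source problem (3-\textsc{Partition}) than the paper's, which reduces from \textsc{Independent Set} on planar cubic graphs via a rectilinear embedding and a system of disk-walled ``channels'' with a parity structure. The choice of a strongly NP-hard source to keep the construction polynomial-size with bounded coordinates is sound, but the proposal has two genuine gaps, both located exactly where you flag ``the central obstacle.'' First, a corridor ``of height exactly $2$'' cannot be realized when the walls must themselves be unit disks of $\cP$: the boundary of a union of unit disks is scalloped, so the set of admissible centers for a new disk inside the corridor is not a segment of the midline but a chain of two-dimensional lens-shaped regions. New disks can therefore leave the midline and zigzag; if consecutive centers are offset vertically by $\pm\delta$, the horizontal advance per disk drops to $\sqrt{4-4\delta^2}<2$, so a corridor of nominal length $2B$ can accommodate more than $B$ disks once $B$ is large, which destroys the capacity accounting the whole reduction rests on. Eliminating this freedom requires tuning the wall-disk spacing so that the admissible centers form a \emph{discrete} set, and (as the paper's basic-channel analysis with spacing $2\sqrt{2+\sqrt{3}}$ and height $2+\sqrt{3}$ shows) the discrete set one then obtains is bistable --- two interleaved families of positions with a mutual-exclusion relation --- not a single rigid row. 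This is precisely why the paper encodes a parity/independent-set structure rather than a counting structure.

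Second, the gadget that would make this a reduction at all --- the ``staging passages'' forcing the $a_i$ disks of an item block to be placed as an indivisible group into exactly one bin --- is not described, and it is unclear that it can exist. All new disks are identical, so an ``item of size $a_i$'' must be encoded purely through the shape and capacity of free regions. As stated, if the item blocks are free sub-corridors of total capacity $\sum_i a_i = mB$ and the bins also have total capacity $mB$, nothing prevents a packing that fills item blocks and bins independently, so the target $k=mB$ would be met whether or not a 3-partition exists. You need a mechanism by which total achievable capacity equals $k$ if and only if the items can be grouped into triples summing to $B$, and that mechanism is the entire content of the proof. (The rational-coordinates clause is the least of the issues: a square-lattice filler does give rational centers and leaves no room for new unit disks, and the paper handles the irrational coordinates arising in its tangent configurations by a separate perturbation/robustness argument.)
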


From the positive side, we show that \probPack is \classFPT when parameterized by $k$ and $h$. 
As it is common in Computational Geometry, we assume the \emph{real RAM} computational model, that is, we are working with real numbers and assume that basic operations can be executed in unit time. 
We use $|I|$ to denote the input size. 

\begin{restatable}{theorem}{themainfpt}
\label{thm:repack-fpt}
The \probPack problem is \classFPT when parameterized by $k+h$.  Specifically, it is solvable in time $(h+k)^{\Oh(h+k)}\cdot |I|^{\Oh(1)}$. 
\end{restatable}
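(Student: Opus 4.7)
The plan is to design an algorithm with three stages: a structural canonicalization of solutions, an enumeration of their combinatorial types, and a per-type realization step. Setting $m := h+k$ (the total number of ``movable'' disks, counting both relocated and newly added ones), the goal is to enumerate at most $(h+k)^{\Oh(h+k)}$ abstract descriptions of a solution and decide realizability of each in $|I|^{\Oh(1)}$ time.

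First, I would prove a structural lemma: any valid repacking can be converted into a canonical solution in which each of the $m$ movable disks is tangent to at least two other objects --- other movable disks, stationary disks, or sides of $R$. The proof is a standard compression/pushing argument: a movable disk with fewer than two tangencies can be slid in some direction until it gains one, preserving non-overlap of all other disks. Combined with the classical fact that a unit disk can be tangent to at most six other unit disks (an elementary $60^\circ$ central-angle argument), the total number of tangencies involving movable disks is $\Oh(m)$, and the contact structure has bounded-degree planar complexity.

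Next, a \emph{combinatorial type} is defined to be the contact graph among the movable disks (with edges labeled by tangency), together with each movable disk's list of ``external'' contacts: with sides of $R$ (four choices each) and with abstract ``slots'' for tangencies to stationary disks (whose identities are not yet fixed). By the bounded-degree property, the number of such combinatorial types is $(h+k)^{\Oh(h+k)}$. For realization of a fixed type $\tau$, I would process the movable disks in an order consistent with the tangency structure. Each disk $D_i$'s position is fixed by its two pinning tangencies to earlier objects, which determine at most two candidate positions per instantiation. Whenever a pinning tangency is to an abstract stationary disk slot, the algorithm enumerates candidates; the crucial observation --- from the bounded packing density of unit disks --- is that any region of bounded diameter contains only $\Oh(1)$ stationary-disk centers. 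So once $D_i$'s candidate region is determined by earlier anchors, only $\Oh(1)$ stationary disks are candidates for its external contact. After an initial anchoring (picking at most $\Oh(1)$ reference objects in $|I|^{\Oh(1)}$ ways), propagation accumulates $2^{\Oh(h+k)}$ branches per anchoring, and each fully determined configuration is verified against all stationary disks and the sides of $R$ in $|I|^{\Oh(1)}$ time.

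The main obstacle I foresee is in the realization stage: controlling the number of initial anchor choices so that the algorithm is truly \classFPT rather than merely \classXP. If the movable disks' contact graph splits into several connected components, each component may naively require its own stationary-disk anchor, contributing a factor of $|I|$ per component and leading to an $|I|^{\Omega(h+k)}$ overall cost. Overcoming this cleanly will require either a refined canonicalization lemma that effectively ``connects'' components through shared references, or an inductive/combinatorial argument that bounds the number of truly independent anchors by $\Oh(h+k)$ rather than by the component count. Making this step rigorous --- while also handling degenerate tangency situations in the pushing argument --- is where I expect the core technical work of the proof to concentrate.
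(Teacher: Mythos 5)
There is a genuine gap, and it is exactly the one you flag at the end: the anchoring of components. Your canonicalization and type-enumeration are fine as far as they go, but the realization step is not \classFPT. Consider the case $h=0$: a solution consists of $k$ new disks placed in $k$ pairwise far-apart ``holes'' of the packing, the contact graph on movable disks is an independent set, and each new disk must be anchored by tangencies to stationary disks whose identities range over all of $\cal P$. This gives $|I|^{\Omega(k)}$ anchor choices, i.e., an \classXP\ algorithm, and no local density argument rescues you because the components genuinely are independent. Since the problem is \classNP-hard already for $h=0$ (\Cref{thm:compl-hard}), no amount of cleverness in the propagation stage can avoid confronting this; the difficulty has to be attacked before the enumeration begins.

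The paper's proof resolves precisely this point with a different opening move: a greedy \emph{hole cover}. One repeatedly finds (via \Cref{prop:polyequations}) a unit disk in $R$ disjoint from everything placed so far; if this succeeds $k$ times the instance is trivially a yes-instance, and otherwise one obtains a set $\cal H$ of fewer than $k$ disks such that \emph{every} placeable disk intersects ${\cal P}\cup{\cal H}$ (\Cref{lem:computeHoleCover,obs:cover}). This bounds the number of ``independent anchors'' by $\Oh(h+k)$ up to identifying which $\Oh(h+k)$ disks of $\cal P$ are relevant, and that identification is done not by enumeration over $\cal P$ but by a universal-set coloring (\Cref{prop:universalSet}) that marks the repacked disks blue and their close neighbours red. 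The red disks carve $R$ into \emph{containers}; \Cref{lem:allOrNone} shows a compatible solution repacks either all or none of the disks in a container, which turns the global assembly problem into a {\sc Knapsack} instance over containers, and the value of each container is computed by guessing, for each new disk, which of the $\Oh(h+k)$ description disks it intersects and solving a degree-$2$ polynomial system. If you want to keep your tangency-based realization, you would still need something equivalent to the hole cover and the coloring step to reduce the anchor space from $|{\cal P}|$ to $\Oh(h+k)$; without that, the approach does not yield the claimed running time.
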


\Cref{thm:repack-fpt} also appears to be handy for  approximating the maximum number of disks that can be added to a packing.  
In the optimization variant of \probPack, called  \probMPack, we are given a packing $\cP$ of $n$ disks in a rectangle $R$ and an integer $h$, and the task is to maximize the number of new disks that can be added to the packing if we are allowed to relocate at most $h$ disks of $\cP$.  By combining \Cref{thm:repack-fpt} with the  approach of Hochbaum and Maass \cite{HochbaumM85}, we prove
  that the optimization variant of  \probPack  
  admits the parameterized analog of EPTAS for the parameterization by $h$.
  More precisely, we prove the following theorem. 

\begin{restatable}{theorem}{thefptas}
\label{cor:eptas}
For any $0 < \varepsilon < 1$, there exists an algorithm that, given an instance $(\cP,R,h)$ of \probMPack,  returns a packing $\cP^*$ into $R$ with at least $n + (1-\varepsilon) \cdot \opt_h$ disks in time $\big(\frac{h+1}{\varepsilon}\big)^{\Oh(h/\varepsilon+1/\varepsilon^2)} \cdot |I|^{\Oh(1)}$, where $\opt_h$ is the maximum number of disks that can be added to the input packing if we can relocate at most $h$ disks. 
\end{restatable}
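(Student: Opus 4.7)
The plan is to combine \Cref{thm:repack-fpt} with a two-dimensional shifting strategy in the style of Hochbaum and Maass, splitting the analysis according to whether $\opt_h$ is small or large compared to $h/\varepsilon$. As a first phase, I would iterate $k=0,1,\ldots,\lceil 2h/\varepsilon\rceil$, running the algorithm of \Cref{thm:repack-fpt} with relocation budget $h$ and target $k$; this settles \probMPack exactly whenever $\opt_h\le 2h/\varepsilon$. Each such call costs $(h+k)^{\Oh(h+k)}\cdot|I|^{\Oh(1)}\le\bigl((h+1)/\varepsilon\bigr)^{\Oh(h/\varepsilon)}\cdot|I|^{\Oh(1)}$, and there are $\Oh(h/\varepsilon)$ of them, so this phase fits in the claimed budget.

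In the complementary regime $\opt_h>2h/\varepsilon$, I would run a shifting-based algorithm and return whichever of the two phases produces more added disks. Fix $L=\lceil c/\varepsilon\rceil$ for a sufficiently large absolute constant $c$. For each pair of shifts $(s_x,s_y)\in\{0,1,\ldots,L-1\}^2$, partition the plane into axis-aligned $L\times L$ cells whose lower-left corners lie on the lattice $(s_x+L\mathbb{Z})\times(s_y+L\mathbb{Z})$, and for each cell $C$ meeting $R$ form a local instance of \probPack on $R\cap C$ whose initial packing consists of the disks of $\cP$ contained in $C$; disks of $\cP$ straddling a boundary of $C$ are passed to the local solver as fixed obstacles, which requires only a mild extension of the algorithm of \Cref{thm:repack-fpt}. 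For each local budget $h_C\in\{0,\ldots,h\}$, I would binary-search the largest feasible $k_C$ via repeated calls to \Cref{thm:repack-fpt}, obtaining a profile $\kappa_C\colon\{0,\ldots,h\}\to\mathbb{Z}_{\ge0}$. Since each cell contains $\Oh(L^2)$ disks, both $h_C$ and $k_C$ may be capped at $\Oh(1/\varepsilon^2)$, so every call of \Cref{thm:repack-fpt} runs in time $(1/\varepsilon)^{\Oh(1/\varepsilon^2)}\cdot|I|^{\Oh(1)}$. A standard knapsack dynamic program over cells then chooses $(h_C)_C$ with $\sum_C h_C\le h$ maximizing $\sum_C\kappa_C(h_C)$, and one returns the best packing across the $L^2$ shifts.

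To analyze the approximation ratio I would compare the output to a fixed optimal repacking $\cP^*$ that adds $\opt_h$ new disks using at most $h$ relocations. A unit disk meets a vertical line $x=s_x+iL$ for at most $2$ of the $L$ choices of $s_x$, and analogously in $y$. Averaging over the $L^2$ shifts, the expected number of new disks of $\cP^*$ that straddle some cell boundary is $\Oh(\opt_h/L)$, so for some shift $(s_x,s_y)$ at most $(c'/c)\,\varepsilon\,\opt_h$ new disks are lost this way. For that shift, restricting $\cP^*$ to each cell yields a candidate local solution that certifies the corresponding $\kappa_C$ values sum to at least $(1-\Oh(\varepsilon))\,\opt_h$.

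The main technical hurdle is that the local problems are coupled by relocations in $\cP^*$: a disk of $\cP$ that $\cP^*$ moves out of its cell $C$ survives as a \emph{fictitious} obstacle in our local instance at $C$, so the restriction of $\cP^*$ to $C$ need not actually be feasible there. I would absorb this slack using that at most $h$ disks are relocated in $\cP^*$ and each of them blocks $\Oh(1)$ area, hence at most $\Oh(1)$ disks of the restricted packing per affected cell, for a total loss of $\Oh(h)$; this is at most $\varepsilon\,\opt_h$ precisely because we are in the regime $\opt_h>2h/\varepsilon$. Taking $c$ large enough turns these two sources of slack into a $(1-\varepsilon)$ guarantee, and multiplying the per-cell cost by the $L^2$ shifts and the $\Oh(|I|)$ cells per shift produces the claimed running time $\bigl((h+1)/\varepsilon\bigr)^{\Oh(h/\varepsilon+1/\varepsilon^2)}\cdot|I|^{\Oh(1)}$.
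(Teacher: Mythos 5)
Your overall architecture matches the paper's: handle the regime $\opt_h=\Oh(h/\varepsilon)$ exactly by sweeping $k$ through the algorithm of \Cref{thm:repack-fpt}, and handle the large-$\opt_h$ regime by Hochbaum--Maass shifting, charging all losses against $\opt_h$. Where you diverge is inside the shifting phase: the paper simply sets $h=0$ there, runs a pure no-relocation PTAS (solving each $\ell\times\ell$ cell by a polynomial system, with boundary-straddling disks kept as immovable obstacles), and justifies this with the single observation that each relocated disk of an optimal solution intersects at most $5$ newly added disks, so $\opt_0\ge\opt_h-5h\ge(1-\Oh(\varepsilon))\opt_h$ in the large regime. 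You instead carry per-cell relocation budgets $h_C$ and stitch them together with a knapsack DP --- but your own feasibility analysis of the restricted optimum ultimately falls back on exactly the same ``relocations buy only $\Oh(h)$ disks'' charging argument to handle disks that $\cP^*$ moves across cell boundaries. Since that argument already certifies the candidate with $h_C\equiv 0$, the budgets and the DP add complexity without improving the guarantee; the paper's route is the degenerate special case of yours and is strictly simpler. Two loose ends in your write-up are worth tightening: (i) the ``mild extension'' of \Cref{thm:repack-fpt} to frozen boundary obstacles is not needed if you solve each $\Oh(1/\varepsilon^2)$-size cell directly by brute force plus \Cref{prop:polyequations}, as the paper does, which is cleaner than modifying the FPT machinery; and (ii) your threshold $2h/\varepsilon$ is too tight --- the relocation loss is $ch$ for a concrete constant $c\ge 5$, so you need the threshold $\Omega(ch/\varepsilon)$ (the paper uses $10h/\varepsilon$ together with $\varepsilon'=\varepsilon/2$), and note that your constant ``$c$'' as defined only controls the grid pitch $L$ and hence the boundary loss, not the relocation loss, so ``taking $c$ large enough'' does not by itself repair this. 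Both are constant-level fixes, not conceptual gaps.
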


\section{Preliminaries}\label{sec:prelim} 

\subparagraph{Disks and rectangles.} For two points $A$ and $B$ on the plane, we use $AB$ to denote the line segment with endpoints in $A$ and $B$. The \emph{distance} between 
$A=(x_1,y_1)$ and $B=(x_2,y_2)$  or the \emph{length} of $AB$, is $|AB|=\|A-B\|_2=\sqrt{(x_1-x_2)^2+(y_1-y_2)^2}$. 
The \emph{(open unit) disk} with a \emph{center} $C=(c_1,c_2)$ on the plane is the set of points $(x,y)$ satisfying the inequality $(x-c_1)^2+(y-c_2)^2<1$.  
Whenever we write ``disk'' we mean an open unit disk. 
Throughout the paper, we assume that considered input rectangles $R=[0,a]\times [0,b]$ for some $a,b>0$.

\subparagraph{Parameterized Complexity.} We refer to the book of Cygan et al.~\cite{CyganFKLMPPS15} for introduction to the area and undefined notions.  A \emph{parameterized problem} is a language $L\subseteq\Sigma^*\times\mathbb{N}$, where $\Sigma^*$ is a set of strings over a finite alphabet $\Sigma$. An input of a parameterized problem is a pair $(x,k)$, where $x\in\Sigma^*$ and $k\in \mathbb{N}$ is a \emph{parameter}. 
A parameterized problem is \emph{fixed-parameter tractable} (or \classFPT) if it can be solved in time $f(k)\cdot |x|^{\Oh(1)}$ for some computable function~$f$.  

\subparagraph{Systems of Polynomial Inequalities.} In our algorithms, we will need to find suitable locations for new disks that need to be added such that the locations are compatible with an existing packing. We will achieve this by solving systems of polynomial inequalities. We refer to the book of Basu, Pollack, and Roy~\cite{basu06} for basic tools. 
We use the following result.

\begin{proposition}[Theorem 13.13 in \cite{basu06}] \label{prop:polyequations}
	Let $R$ be a real closed field, and let $\mathcal{P} \subseteq R[X_1, \ldots, X_\ell]$ be a finite set of $s$ polynomials, each of degree at most $d$, and let \[(\exists X_1) (\exists X_2) \ldots (\exists X_\ell) F(X_1, X_2, \ldots, X_\ell)\]
	be a sentence, where $F(X_1, \ldots, X_\ell)$ be a quantifier-free boolean formula involving $\mathcal{P}$-atoms of type $P \odot 0$, where $\odot \in \{ =, \neq, > , < \}$, and $P$ is a polynomial in $\mathcal{P}$. Then, there exists an algorithm to decide the truth of the sentence with complexity $s^{\ell+1} d^{\Oh(\ell)}$ in $D$, where $D$ is the ring generated by the coefficients of the polynomials in $\mathcal{P}$. 
\end{proposition}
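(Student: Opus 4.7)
The plan is to combine the Hochbaum--Maass shifting technique with the fixed-parameter algorithm of Theorem~\ref{thm:repack-fpt}. Fix a cell side length $W=\Theta(1/\varepsilon)$. For each of the $O(W^{2})=O(1/\varepsilon^{2})$ integer shifts $(s_{x},s_{y})\in\{0,\ldots,W-1\}^{2}$, I impose on $R$ the grid with lines $x=s_{x}+iW$ and $y=s_{y}+jW$; each cell has area $O(1/\varepsilon^{2})$ and therefore contains at most $O(1/\varepsilon^{2})$ unit disks in any packing. The standard averaging argument over shifts shows that at least one shift $(s_{x}^{\ast},s_{y}^{\ast})$ ``cuts'' (meets the interior of) at most an $O(\varepsilon)$-fraction of the changed disks (i.e., new additions and relocated disks at their new positions) of an optimal solution $\cP^{\ast}_{\text{opt}}$.

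I dispatch the approximation via a case split on $\opt_{h}$. If $\opt_{h}<h/\varepsilon$, I simply run Theorem~\ref{thm:repack-fpt} directly for $k=0,1,\ldots,\lceil h/\varepsilon\rceil$, returning an exact optimum in time $\big((h+1)/\varepsilon\big)^{\Oh(h/\varepsilon)}\cdot|I|^{\Oh(1)}$, which lies within the claimed bound. Otherwise $h\le\varepsilon\cdot\opt_{h}$, and I use the shifted grid: for each cell $C_{j}$ and each intra-cell relocation budget $h_{j}\in\{0,\ldots,h\}$, I compute $f_{j}(h_{j})$, the maximum number of new disks packable inside $C_{j}$ using at most $h_{j}$ relocations of the original disks of $C_{j}$. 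Each value is obtained by binary search on $k_{j}\le O(1/\varepsilon^{2})$ combined with an invocation of Theorem~\ref{thm:repack-fpt} restricted to $C_{j}$, at cost $(h+1/\varepsilon^{2})^{\Oh(h+1/\varepsilon^{2})}\cdot|I|^{\Oh(1)}$ per cell and per budget. A bounded-knapsack DP over the cells then picks $(h_{j})_{j}$ with $\sum_{j}h_{j}\le h$ maximizing $\sum_{j}f_{j}(h_{j})$, and we return the best outcome across shifts.

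The main obstacle will be the approximation analysis, since the per-cell DP allows only intra-cell relocations whereas $\cP^{\ast}_{\text{opt}}$ may relocate disks across grid boundaries. I plan to modify $\cP^{\ast}_{\text{opt}}$ by (i) deleting every cut new addition and (ii) ``undoing'' every cut relocation as well as every inter-cell non-cut relocation, sending the moved disk back to its original position. Each such un-do may conflict with a disk already at that original position; I resolve the conflict by either un-doing that blocker's relocation (continuing the chain) or, if the blocker is a new addition, by deleting it (terminating the chain). Since every relocation is un-done at most once, the sum of chain lengths is bounded by $h$, and each chain destroys at most one extra addition, so the total new-disk loss is at most $\Oh(\varepsilon\,\opt_{h})+h=\Oh(\varepsilon\,\opt_{h})$ by the assumption $h\le\varepsilon\,\opt_{h}$. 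The surviving configuration has all changed disks strictly inside a single cell, so it factorises across cells and is witnessed by the per-cell DP, yielding at least $(1-\Oh(\varepsilon))\,\opt_{h}$ additions (rescaling $\varepsilon$ by a constant). Multiplying by the $O(1/\varepsilon^{2})$ shifts and the polynomially many cells and knapsack entries, the overall running time is $\big((h+1)/\varepsilon\big)^{\Oh(h/\varepsilon+1/\varepsilon^{2})}\cdot|I|^{\Oh(1)}$, as claimed.
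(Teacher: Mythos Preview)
Your proposal addresses the wrong statement. The proposition in question is simply a citation of Theorem~13.13 from Basu, Pollack, and Roy~\cite{basu06}; the paper does not prove it and uses it only as a black-box tool for deciding existential sentences over polynomial inequalities. There is nothing to prove here.

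What you have written is a proof sketch for Theorem~\ref{cor:eptas} (the \classFPTAS-style result for \probMPack), not for Proposition~\ref{prop:polyequations}. Your outline---shifting grids, a case split on whether $\opt_h$ is small relative to $h/\varepsilon$, per-cell subproblems solved via Theorem~\ref{thm:repack-fpt}, and an averaging argument---is directed at the approximation scheme, and has no bearing on the decidability/complexity claim about polynomial systems that Proposition~\ref{prop:polyequations} actually states. If you intended to prove Theorem~\ref{cor:eptas}, resubmit under that heading; if you intended to address Proposition~\ref{prop:polyequations}, note that it requires no proof in this paper.
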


Furthermore, a point $(X_1^*,\ldots,X_\ell^*)$ satisfying $F(X_1, \ldots, X_\ell)$ can be computed in the same time by Algorithm~13.2 (sampling algorithm) of~\cite{basu06} (see Theorem~13.11 of \cite{basu06}). Note that because we are using the real RAM model in our algorithms, the complexity is stated with respect to the natural parameters.

 

\section{Hardness of \probCPack}\label{sec:repack-hard} 
In this section, we prove \Cref{thm:compl-hard} on the hardness of  \probCPack.   Recall, that \probCPack is the special case of  \probPack with $h=0$. We use the following auxiliary notation in this section.

 We use standard graph-theoretic terminology and refer to the textbook of Diestel~\cite{Diestel12} for missing notions. 
 We consider only finite undirected graphs.  For  a graph $G$,  $V(G)$ and $E(G)$ are used to denote its vertex and edge sets, respectively. 
 For a vertex $v\in V(G)$, $N_G(v)=\{u\in V(G)\mid uv\in E(G)\}$ denotes the \emph{neighborhood} of $v$, and $d_G(v)=|N_G(v)|$ is the \emph{degree} of $v$.
  A graph is \emph{cubic} if every vertex has degree three.   
A graph $G$ is \emph{planar} if it has a plane embedding, that is, it can be drawn on the plane without crossing edges.   
 A \emph{rectilinear} embedding is a planar embedding of $G$ such that vertices are mapped to points with integer coordinates and each edge is mapped into a broken line consisting of an alternate sequence of horizontal and vertical line segments. The switches between horizontal and vertical lines are called \emph{bends}. The \emph{area} of an embedding is minimum $(b_1-a_1)\times(b_2-a_2)$ such that all points of the embedding are in the rectangle $[a_1,b_1]\times [a_2,b_2]$.  

We say that a point $X$ is \emph{properly inside} of a polygon $P$ if it is inside $P$ but $X$ is not on the boundary; if we say that $X$ is inside $P$, we allow it to be on the boundary.    
A disk is \emph{(properly) iniside} of a polygon $ P$ if every point of the disk is (properly) inside of $P$.  

We restate the main theorem of the section. 

\themainhard*


\subparagraph{Proof of \Cref{thm:compl-hard}: Overview.}
We   reduce from  the \textsc{Independent Set} problem. Let us recall that in this problem, for a given a graph $G$ and a positive integer $k$, the task is to decide whether $G$ contains an independent set, that is a set of pairwise nonadjacent vertices,  of size at least $k$. 
It is well-known that \textsc{Independent Set} is \classNP-complete on cubic planar graphs~\cite{GareyJ79} (see also~\cite{Mohar01} for an explicit proof).

Before diving into the details, which are pretty technical, let us outline the main ideas of the reduction. 
Let $G$ be a graph and assume that $\ell_e$ are positive integers given for all $e\in E(G)$. Suppose that $G'$ is obtained from $G$ by subdividing each edge $e$ by $2\ell_e$ times (the edge subdivision operation for $e=uv$, deletes $e$ and creates a new vertex $w_e$ adjacent to both $u$ and $v$). Then it can be shown that $G$ has an independent set of size $k$ if and only if $G'$ has an independent set of size $k+\sum_{e\in E(G)}\ell_e$. We exploit this observation. 
Given a rectilinear embedding of a cubic planar graph $G$, for each vertex of $G$, we create a \emph{node} area formed by surrounding disks. We can place an additional disk in such an area and this encodes the inclusion of the corresponding vertex to an independent set.
Then we join the areas created for vertices by \emph{channels} corresponding to subdivided edges. Similarly to node areas, channels are formed by surrounding disks. 
Each channel contains even number of positions where new disks can be placed, and these positions are divided into ``odd'' and ``even'' in such a way that we can put disks in either all odd or all even positions but no disks could be placed in adjacent even and odd positions. Thus node areas and channels are used to encode a graph, and then we fill the space around them by \emph{filler} disks that prevent placing any new disk outside node areas and channels.   Then placing new disks corresponds to the choice of an independent set in a subdivided graph. Further in this section, we give a formal proof. To avoid unnecessary complications in the already technical proof, we allow algebraic number parameters in our reduction and then explain how we can get rid of these constraints. 

\subparagraph{Proof of \Cref{thm:compl-hard}: Constructing channels and node areas.}
Our construction of node areas and channels follows a rectilinear embedding of a planar graph and we use that fact that rectilinear embeddings can be constructed efficiently. In particular, the following theorem was shown by  Liu, Morgana, and Simeone~\cite{LiuMS98}.

\begin{proposition}[\cite{LiuMS98}]\label{thm:emb}
Every $n$-vertex planar graph of maximum degree at most $4$ admits a rectilinear embedding with at most $3$ bends for every edge with the area $\Oh(n^2)$. Furthermore, such an embedding can be constructed in $\Oh(n)$ time.
\end{proposition}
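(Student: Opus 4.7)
The plan is to prove this constructively. First, I would compute a combinatorial planar embedding of $G$ (a cyclic ordering of the edges around each vertex) in $\Oh(n)$ time using Hopcroft--Tarjan. Because every vertex has degree at most $4$, one can then fix an \emph{orthogonal representation}: assign to each edge, at each of its endpoints, one of the four cardinal directions $\{N,S,E,W\}$ in a way that respects the cyclic order given by the planar embedding and uses every direction at each vertex at most once. This combinatorial structure is what we will realize geometrically in the plane.

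Second, I would produce vertex coordinates on an integer grid of size $\Oh(n)\times\Oh(n)$. A natural tool is a canonical-ordering / shift-method argument in the spirit of de Fraysseix--Pach--Pollack for straight-line drawings, adapted to the orthogonal setting: vertices are added one by one along a shelling order of the planar embedding, and each insertion reserves only a constant number of new rows and columns for the incident edges, which keeps the total area quadratic. The coordinates can be maintained implicitly via linked lists representing the current outer contour so that the entire process runs in $\Oh(n)$ time.

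Third, I would route each edge as a rectilinear polyline with at most three bends. Concretely, leaving endpoint $u$ along the cardinal direction assigned to it, the edge makes one bend to enter a horizontal corridor reserved for it, possibly one bend to switch into a vertical corridor, and a final bend to approach $v$ from the direction assigned there --- at most three bends overall. The main obstacle is the simultaneous non-crossing routing: one has to assign corridors to edges so that no two routes share a segment and no route cuts across a vertex. The key point is that the slack afforded by the $\Oh(n^2)$ area, compared to drawings of linear area which are known to require many more bends per edge, is precisely what provides enough pairwise-disjoint corridors; the shelling order from the previous step supplies a consistent assignment. Verifying that three bends always suffice --- and not, say, four --- is the trickiest combinatorial part, and I would handle it by a case analysis on the pair of direction labels at the two endpoints of each edge, using that exactly one of four cardinal labels is attached to each end.
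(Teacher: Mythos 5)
First, note that the paper offers no proof of this statement at all: it is imported verbatim as a known result of Liu, Morgana, and Simeone \cite{LiuMS98}, so there is nothing in the paper to compare your argument against line by line. Your outline does follow the general shape of the published linear-time orthogonal-drawing algorithms (planarity testing, an orthogonal representation, an incremental placement along a vertex ordering, and local routing), so the approach is not wrong in spirit. But as a standalone proof it has genuine gaps at exactly the points where the real work lies.

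Concretely: (i) In your second step you assert that one can assign a cardinal direction to each edge-end ``in a way that respects the cyclic order,'' but such a local assignment is not automatically realizable as a planar rectilinear drawing. An orthogonal representation must satisfy global consistency conditions --- the angles around each vertex sum to $2\pi$ and the angles and bends around each face satisfy Tamassia's face equations --- and establishing that a valid representation with at most $3$ bends per edge exists is essentially the theorem itself, not a preprocessing step. (ii) Canonical orderings and shelling orders exist for $3$-connected (or, with care, biconnected) planar graphs; a general planar graph of maximum degree $4$ may be disconnected or have cut vertices and bridges, and the published algorithms spend real effort on the decomposition into biconnected components and on reassembling the drawings. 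Your sketch does not address this. (iii) The routing step is where the proof actually lives, and it is asserted rather than proved: the claim that the $\Oh(n^2)$ area ``provides enough pairwise-disjoint corridors'' inverts the logic, since the quadratic area bound is a \emph{consequence} of showing that each insertion needs only a constant number of new rows and columns, which in turn depends on the non-crossing routing invariant you have not stated. You also explicitly defer the case analysis showing that three bends suffice, which is the quantitative heart of the statement. Since the paper uses this proposition as a black box, the honest options are to cite \cite{LiuMS98} as the paper does, or to carry out the omitted invariant and case analysis in full.
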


We use  \Cref{thm:emb} to construct node area and channels. Let $G$ be an $n$-vertex cubic graph. We assume that we are given a rectilinear embedding of $G$ with the properties guaranteed by \Cref{thm:emb}. We also assume without loss of generality that the length of every segment of a broken line representing an edge is at least three. 
This can be achieved by replacing every vertex or bend point $(x,y)$ of the embedding by the point $(3x,3y)$ and the corresponding adjustment of the segments in the broken lines.   
Notice that every segment in the embedding contains at least two integer points different from the endpoints of the segment.  
For each integer point of the rectangle containing the embedding, we construct a $2c\times 2c$ square tile, where $c$ is a sufficiently big odd positive integer (the choice of $c$ will be explained later), of one of the following four types:
(i) node tile containing a node area, (ii) horizontal/vertical  channel,  (iii) bend channel tiles to form channels, and (iv) filler tile to fill forbidden areas. Then we use these tiles to encode a graph as it is shown in \Cref{fig:emb} sticking the tiles together following the embedding.

\begin{figure}[ht]
\centering
\scalebox{0.7}{
\input{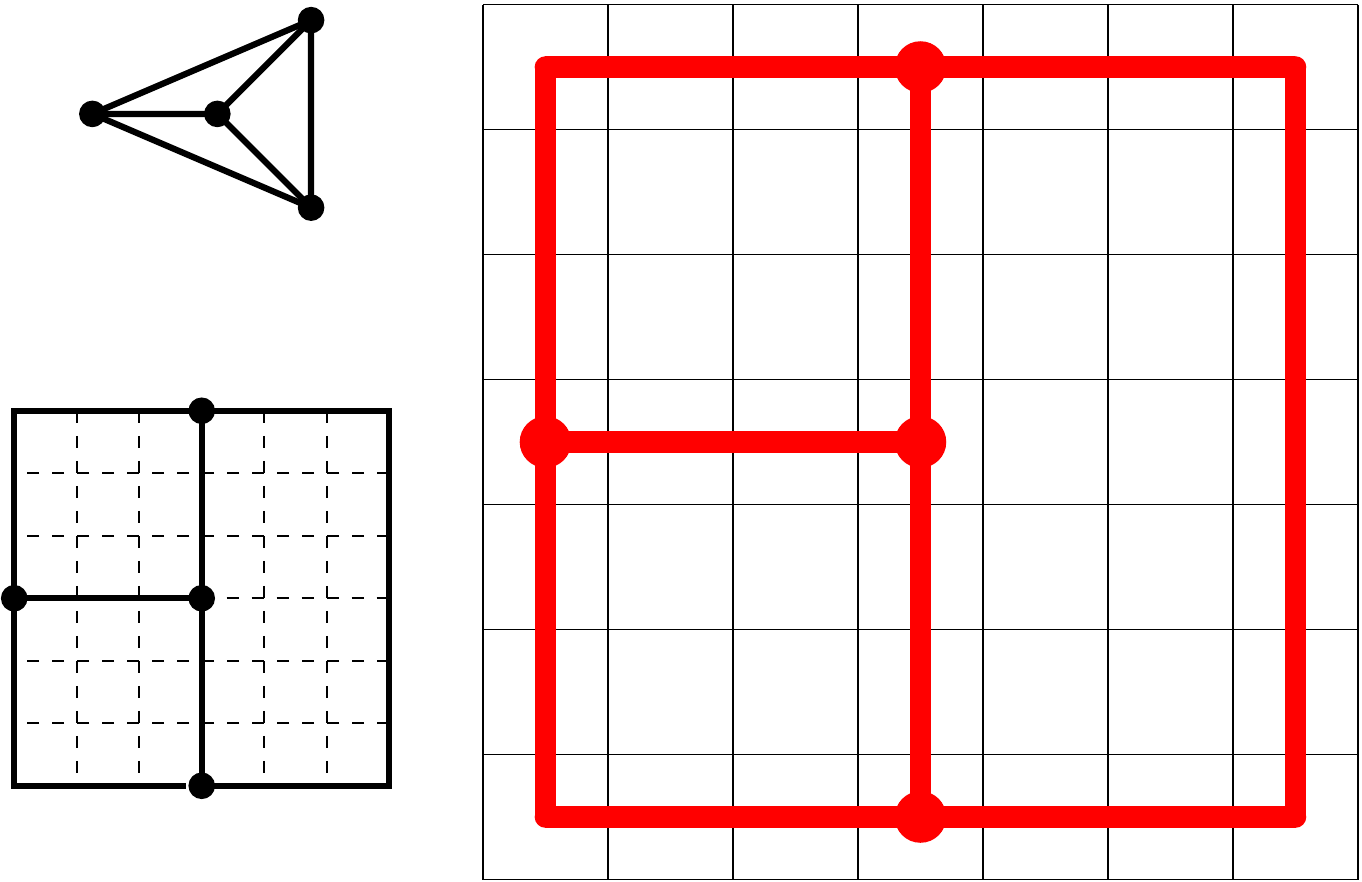_t}
}
\caption{Encoding of $G$; the node areas and channels are shown in red, the node, channel, bend and filler tiles are labeled by $N$, $C$, $B$ and $F$, respectively.}\label{fig:emb}
\end{figure}

Now we describe these tiles. We start with the construction of the filler tile which is trivial---we simply fill a $2c\times 2c$ square by disks as it is shown in \Cref{fig:filler}. 

\begin{figure}[ht]
\centering
\scalebox{0.7}{
\input{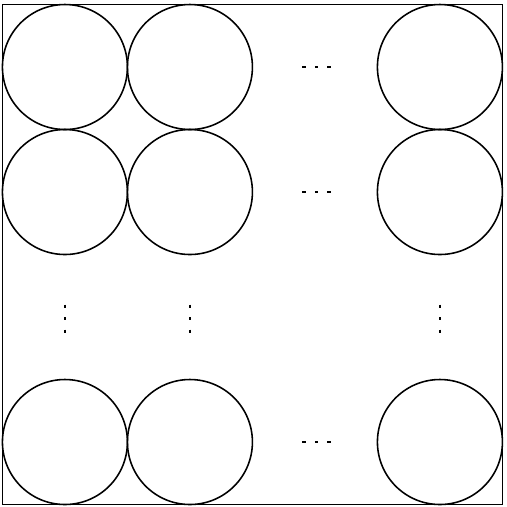_t}
}
\caption{The filler tile.}\label{fig:filler}
\end{figure}

Next, we deal with channel tiles. The construction of these tiles is more complicated. In particular, we need three kinds of such tiles because we have to adjust parities and  join them together with other tiles. However, the basic idea is the same for all kinds. Consider four touching disks with centers $A$, $B$, $C$, and $D$ shown in \Cref{fig:channel-one} (a).  Note that 
$h=2+\sqrt{3}$, $\ell=|AC|=|BC|=2\sqrt{2+\sqrt{3}}$, and the angle $\alpha=\pi/12$. Then we can make the  straightforward observation that, given disks with centers in $A$, $B$ and $C$, every disk with its center in the triangle $ABC$ has its center in $D$. Then extending this, we can make the following observation about the configuration of disks shown in \Cref{fig:channel-one} (b). We call such a configuration of disks a \emph{basic channel} of size $r$.  When we say that a disk is placed or added, we mean that the disk should be disjoint with other disks. Also we say that a disk is \emph{inside} of a channel if its center is in $B_1A_1A_rB_r$.

 \begin{figure}[ht]
\centering
\scalebox{0.7}{
\input{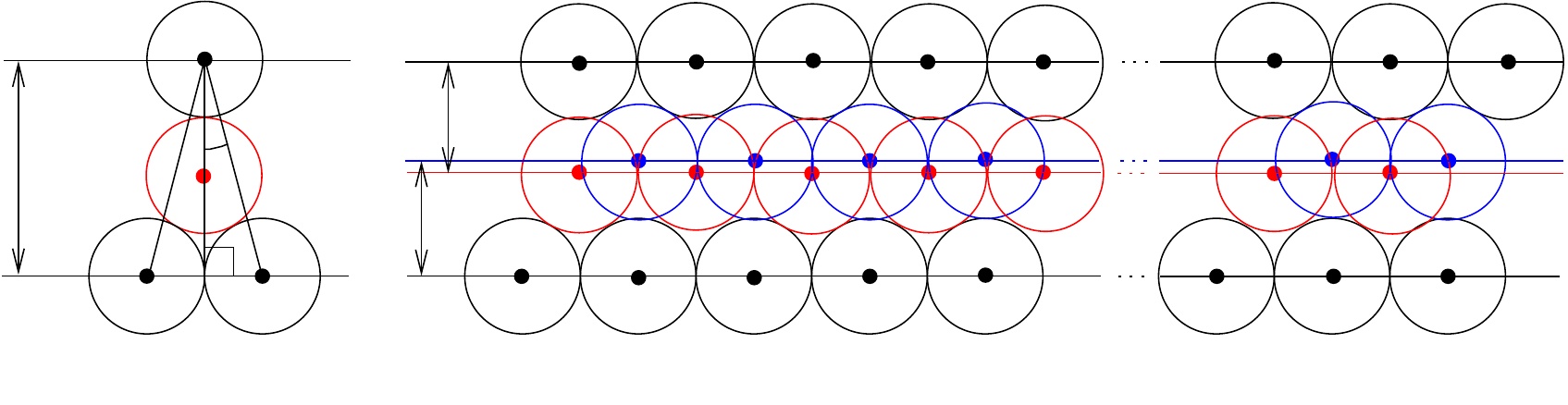_t}
}
\caption{The basic channel of size $r$; the disks shown in red and blue are not parts of the channel -- they show places where new disks can be inserted.}\label{fig:channel-one}
\end{figure}

\begin{observation}\label{obs:channel-one}
Given disks with centers in $A_1,\ldots,A_r$ and $B_1,
\ldots,B_r$ as it is shown in \Cref{fig:channel-one}~(b), any disk placed properly inside the quadrilateral $A_1B_1B_rA_r$ has its center in one of the points 
$X_1,\ldots,X_{r-1}$ or $Y_2,\ldots,Y_r$. Furthermore, if a disk with its center in $X_i$ ($Y_i$, respectively) is placed in the  quadrilateral, then no other  disk can have its center in $Y_{i}$ or $Y_{i+1}$ ($X_{i-1}$ or $X_i$, respectively). 
\end{observation}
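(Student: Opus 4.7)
The plan is to extend the basic ``pocket'' principle underlying Figure~\ref{fig:channel-one}(a) to the chain of channel disks in (b). The key geometric fact is that three pairwise-touching unit disks at $A$, $B$, $C$ with $|AB|=h=2+\sqrt{3}$, $|AC|=|BC|=\ell=2\sqrt{2+\sqrt{3}}$, and $\alpha=\pi/12$ admit a \emph{unique} admissible center for a fourth unit disk inside the triangle $ABC$, namely the point $D$ equidistant from $A$, $B$, $C$ at distance exactly~$2$. The specific values of $h$ and $\ell$ are calibrated so that this pocket collapses to a single point rather than a region of positive area, which is the engine behind the whole observation.

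First I would fix coordinates so that $A_1,\ldots,A_r$ and $B_1,\ldots,B_r$ form two parallel rows with the geometry of~(a) repeating along the channel. The quadrilateral $A_1B_1B_rA_r$ then decomposes into a strip of alternating triangular pockets: ``upward'' pockets bounded by two adjacent top centers $A_i,A_{i+1}$ and one bottom center, and ``downward'' pockets bounded by two adjacent bottom centers $B_{i-1},B_i$ and one top center. Applying the pocket principle from~(a) to each triangle, any new unit disk whose center lies properly inside the quadrilateral and avoids all channel disks must have its center at one of these unique pocket points, giving precisely the set $\{X_1,\ldots,X_{r-1}\}\cup\{Y_2,\ldots,Y_r\}$.

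For the ``furthermore'' part I would compute the distances $|X_iY_i|$ and $|X_iY_{i+1}|$ in these coordinates. Consecutive opposite-oriented pockets share a common boundary channel disk; for instance, $X_i$ and $Y_{i+1}$ both lie on the circle of radius~$2$ around a shared center, so their distance is a chord subtended by a relatively small angle, which a direct computation using the explicit values of $h$ and $\ell$ shows to be strictly less than~$2$. Hence the two corresponding unit disks overlap in their open interiors and cannot be placed simultaneously. The symmetric claim for $Y_i$ with $X_{i-1},X_i$ follows by an identical calculation, exploiting the reflection symmetry of the channel about its horizontal midline.

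The main obstacle I anticipate is the first step: verifying that each pocket really admits exactly one insertion point---i.e., that the set of points at distance $\geq 2$ from all three bounding channel disks collapses to a single point under the calibrated values of $h$, $\ell$, and $\alpha$. This reduces to a careful but routine check that each pocket's ``inradius'' equals exactly~$1$, which is precisely what the choices $h=2+\sqrt{3}$, $\ell=2\sqrt{2+\sqrt{3}}$, $\alpha=\pi/12$ encode. Once this pocket-degeneracy is established, both conclusions of the observation follow by chaining the local argument across the channel and reading off the two short-chord distances.
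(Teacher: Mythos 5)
Your proposal is correct and follows essentially the same route as the paper, which states the observation as a consequence of the configuration in \Cref{fig:channel-one}~(a): tile the parallelogram $A_1B_1B_rA_r$ by the alternating triangles of that configuration, note that each such triangle admits exactly one admissible center (its circumcenter $D$, at distance exactly $2$ from the three surrounding disk centers, so the ``pocket'' degenerates to a point), and then check that consecutive $X$- and $Y$-points are at distance $2\sqrt{2-\sqrt{3}}<2$, which is exactly your chord computation. One slip in your setup: the disks at $A$, $B$, $C$ are \emph{not} pairwise touching and $|AB|=2$ rather than $2+\sqrt{3}$ --- the quantity $h=2+\sqrt{3}$ is the height of $C$ above the line $AB$ --- but this does not propagate, since everything you actually use downstream (the pocket point at distance exactly $2$ from all three centers, and the short-chord distances) is consistent with the correct geometry.
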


We use basic channels to construct channel, bend and node tiles.  In particular, we construct the \emph{straight}  channel tile from the basic channel of size $c$ by deleting the left bottom disk and filling the space outside the channel in the $2c\times 2c$ square by additional disks as it is shown in \Cref{fig:channel-tile} (a). 
The disks with the centers  in $A$ and $B$ are called \emph{poles}. They are identified with poles of other tiles to join them together. We refer to the basic channel inside the tile as the channel of the tile. 

 \begin{figure}[ht]
\centering
\scalebox{0.63}{
\input{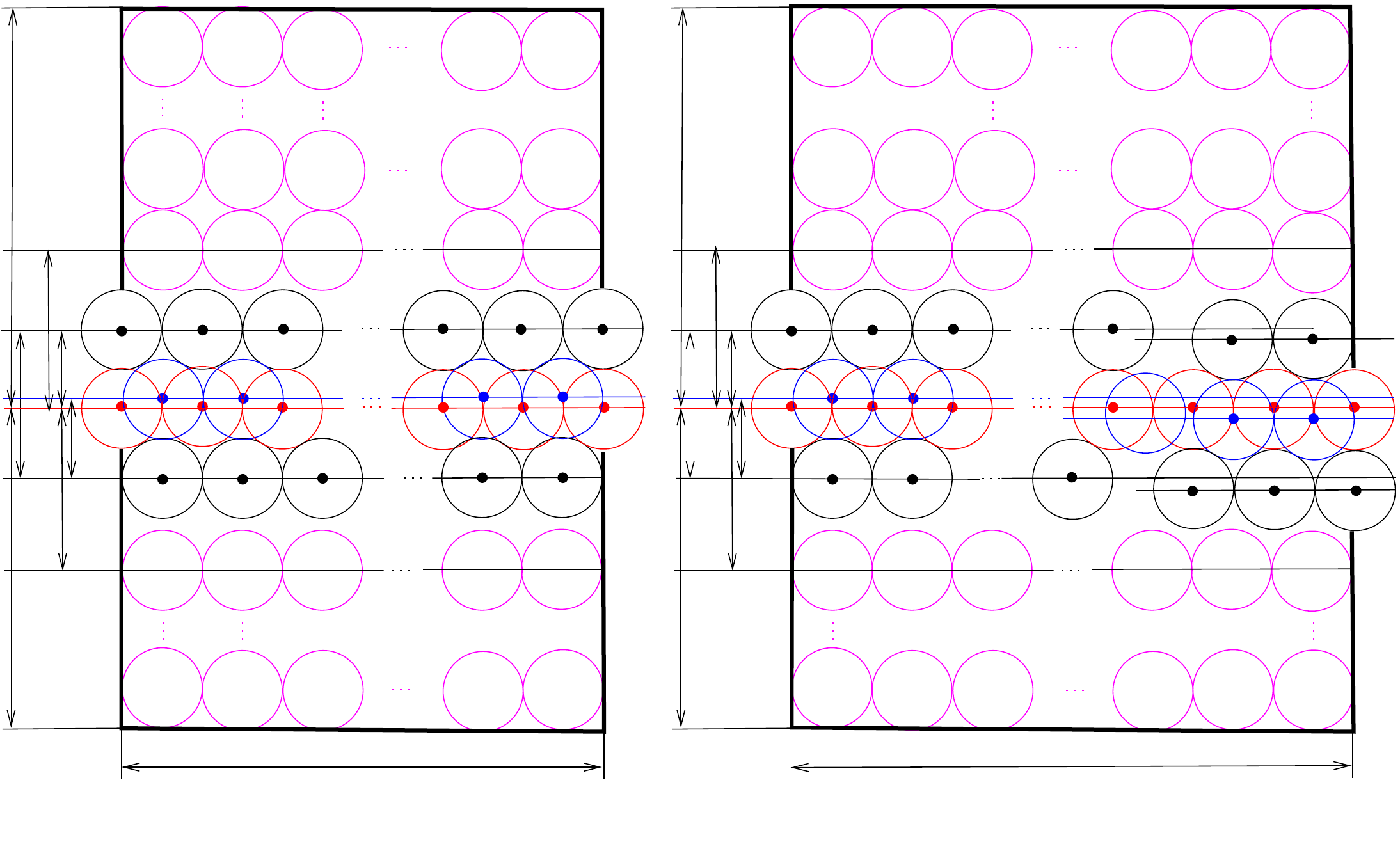_t}
}
\caption{The straight channel tile (a) and the twisted channel tile (b). The disks shown in red and blue are not parts of the gadgets, the disks shown in magenta are used to fill space.}\label{fig:channel-tile}
\end{figure}

However, we need some further configurations of disks, because we have to adjust parities and distances in tiles, and also we have to joint tiles with each other. In particular, to join channel tiles with other tiles, we have to twist basic channels in some of them as it is shown in \Cref{fig:channel-twist} (a).  Then the following observation is straightforward.


 \begin{figure}[ht]
\centering
\scalebox{0.7}{
\input{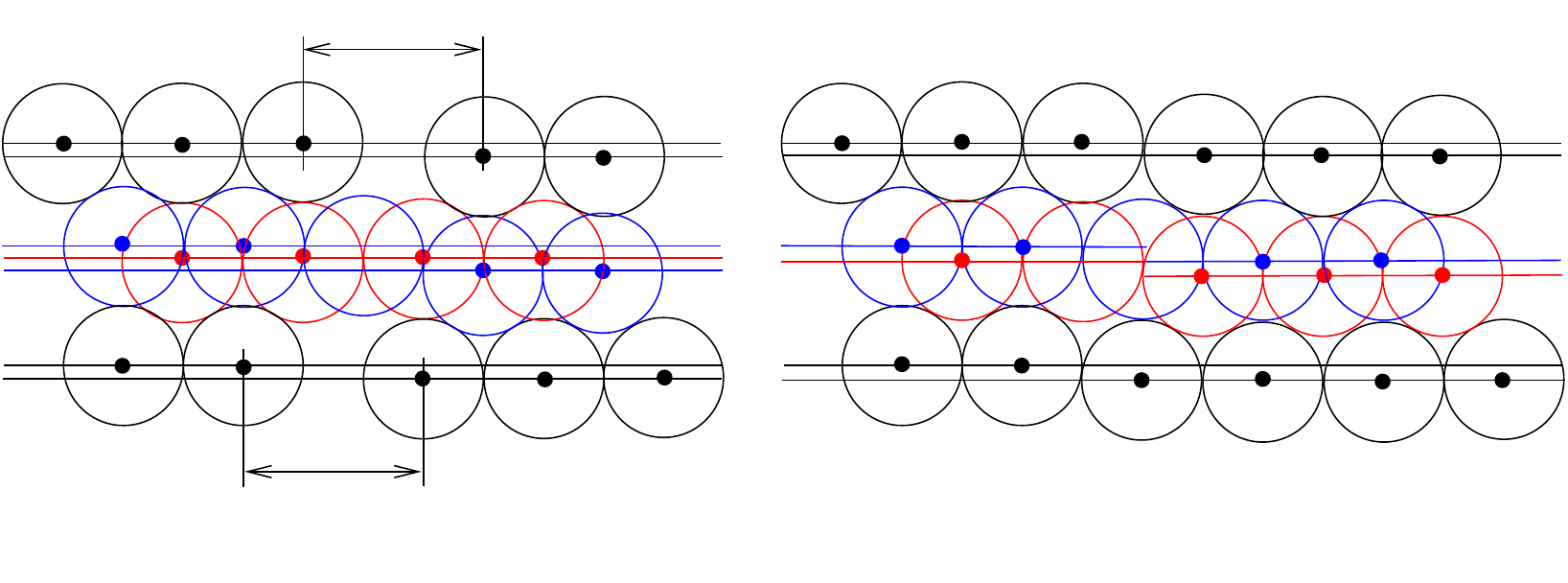_t}
}
\caption{Twisting (a) and level adjustment (b).}\label{fig:channel-twist}
\end{figure}

\begin{observation}\label{obs:channel-twist}
Given disks with centers in $A_1,A_2$ and $B_1,B_2$ as it is shown in \Cref{fig:channel-twist} (a), the following holds:
\begin{itemize}
\item if two disks with their centers in $X_1$ and $X_2$ are placed as it is shown in the figure, then at most two disks with their centers inside $A_1B_1B_2A_2$ can be added, and if two disks are placed in $A_1B_1B_2A_2$ they have centers in $Y_1$ and $Y_2$, respectively,
\item if two disks with their centers in $Z_1$ and $Z_2$ are placed as it is shown in the figure, then at most one disk with its center $A_1B_1B_2A_2$ can be added.
\end{itemize}
\end{observation}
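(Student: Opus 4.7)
The plan is to reduce Observation~\ref{obs:channel-twist} to explicit distance computations in the twisted configuration, using the constants $h = 2 + \sqrt{3}$ and $\ell = 2\sqrt{2 + \sqrt{3}}$ already computed in the setup of Observation~\ref{obs:channel-one}. I would start by fixing a coordinate system: place $A_1, B_1$ as the ``left pole pair'' and $A_2, B_2$ as the ``right pole pair'' of the twisted channel, so that the poles are the corners of the quadrilateral $A_1B_1B_2A_2$. The horizontal twist offset of $3$ indicated in Figure~\ref{fig:channel-twist}(a), combined with the vertical separation imposed by the surrounding disks, determines closed-form coordinates for each of $X_1, X_2, Y_1, Y_2, Z_1, Z_2$; I would tabulate these once and refer back to them.

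Next I would show that any disk placed properly inside $A_1B_1B_2A_2$ must have its center at one of the six points $X_i, Y_i, Z_i$. This follows by applying Observation~\ref{obs:channel-one} locally: the pair $\{A_1, B_1\}$ together with the neighboring surrounding disks reproduces the configuration of Figure~\ref{fig:channel-one}(a), as does $\{A_2, B_2\}$, so in each of the two ``sub-strips'' the admissible centers are exactly the listed ones; the small strip between the two sub-strips is too narrow (by direct width comparison to $2$) to house an independent new center. This reduces both bullets of the observation to a finite case analysis on subsets of the six candidate positions.

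For the first bullet I would assume disks are placed at $X_1$ and $X_2$ and compute, via the tabulated coordinates, the squared distances $|X_iZ_j|^2$ and $|X_iY_j|^2$ for $i, j \in \{1, 2\}$, together with $|Y_1Y_2|^2$. The claim is that $|X_iZ_j|^2 < 4$ for every $i, j$ (ruling out all $Z_j$), while $|X_iY_j|^2 \geq 4$ exactly when $i = j$, and $|Y_1Y_2|^2 \geq 4$; hence the only way to insert two additional disks is at $Y_1$ and $Y_2$. For the second bullet, assuming disks at $Z_1$ and $Z_2$, I would similarly check that $|Z_iX_j|^2 < 4$ and $|Z_iY_j|^2 < 4$ hold for all but at most one pair, so at most a single compatible candidate survives.

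The main obstacle will be the algebraic book-keeping: the twist breaks the clean symmetry of the basic channel, and several relevant squared distances evaluate to expressions of the form $a + b\sqrt{2+\sqrt{3}}$ with $a, b \in \mathbb{Q}$, whose comparison to $4$ has to be done by isolating nested radicals and squaring. I would therefore front-load the proof with a single lemma listing all squared distances $|PQ|^2$ for $P \in \{X_1, X_2, Z_1, Z_2\}$ and $Q \in \{X_1, X_2, Y_1, Y_2, Z_1, Z_2\}$ together with their comparison to $4$; once this table is established, both bullets of the observation follow by inspection.
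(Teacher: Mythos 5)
The paper offers no proof of this observation at all---it is introduced with ``Then the following observation is straightforward''---so there is no argument of the authors to measure yours against; judged on its own terms, your plan has a genuine gap at its central reduction step. You claim that every disk placed properly inside $A_1B_1B_2A_2$ must be centered at one of the six labelled points, deducing this from \Cref{obs:channel-one} applied locally to each pole pair plus a width comparison for the middle strip. But the discreteness of admissible centers in \Cref{obs:channel-one} comes precisely from the fact that consecutive disks in each row of the basic channel are at distance exactly $2$, so that any new disk is wedged into the unique position touching three placed disks. The twist gadget deliberately breaks this: the two pole pairs in \Cref{fig:channel-twist}~(a) are separated by $3$ rather than $2$, so the cell $A_1B_1B_2A_2$ has slack and its free region is a genuine two-dimensional set, not a finite list of points. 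For instance, taking $A_1=(0,0)$, $A_2=(3,0)$, $B_1=(-1,-(2+\sqrt{3}))$, $B_2=(2,-(2+\sqrt{3}))$, the center of the parallelogram is at distance greater than $2$ from all four corners, so it is not excluded by them; and the wording of the second bullet, which bounds only the \emph{number} of addable disks without pinning a position, reflects exactly this non-rigidity. Consequently a table of pairwise distances among $\{X_i,Y_i,Z_i\}$ cannot establish either bullet, however carefully the nested radicals are compared.

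What the statement actually requires is an argument about continuous regions: for the first bullet, that after removing the open $2$-neighbourhoods of $A_1,A_2,B_1,B_2,X_1,X_2$ (and the remaining channel disks) from $A_1B_1B_2A_2$, the surviving free region has diameter exactly $|Y_1Y_2|$, attained only at the pair $(Y_1,Y_2)$, so that two pairwise-disjoint added disks are forced to those centers; and for the second bullet, that the free region surviving after additionally excluding the neighbourhoods of $Z_1,Z_2$ is covered by a single set of diameter less than $2$. These are diameter and covering computations on lens-shaped intersections of disk complements, and your proposal does not contain them. A smaller slip: your table asserts $|X_iZ_j|<2$ for \emph{every} $i,j$, which cannot hold for $X_1$ and $Z_2$ sitting at opposite ends of the gadget; only the nearby pairs can conflict, and the argument must instead show that each point of the free region is close to one of $X_1,X_2$, not that each discrete candidate is.
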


We construct the  \emph{twisted} channel tile (see \Cref{fig:channel-tile} (b)) similarly to the straight channel tile --  
the difference is that we insert one twist using \Cref{obs:channel-twist}. 
The crucial properties of straight and twisted channel tiles implied by   \autoref{obs:channel-one} and \autoref{obs:channel-twist} are given in the following lemma.
We say that a point is \emph{inside a tile} if it is inside of the $2c\times 2c$ square in the tile. 

\begin{lemma}\label{lem:channel}
At most $c+1$ new disks having their centers in the (straight, twisted) channel tile can be added  and it is possible to place $c+1$ disks. Moreover, the following holds:
\begin{itemize}
\item only disks inside channels can be added,
\item if exactly $c+1$ disks are placed, then two of them have their centers in $X$ and $Y$ (see \Cref{fig:channel-tile}),
\item it is possible to place $c$ disks that have no centers in $X$ and $Y$, but then they are completely inside the tile and it is impossible to place an additional 
disk having its center inside the tile. 
\end{itemize}
\end{lemma}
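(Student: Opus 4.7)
My plan is to prove the lemma by a routine reduction to a one-dimensional packing problem on the admissible ``slot'' positions inside the channel, using Observation 1 as the main structural tool.

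\textbf{Step 1 (confining new disks to the channel).} First, I would argue that any newly placed disk must have its center inside the basic channel $A_1B_1B_cA_c$. The area of the $2c\times 2c$ square not occupied by the basic channel is tiled by the filler disks in a triangular (hexagonal) packing, so each point in this region is within distance $<2$ of the center of some existing filler disk or boundary channel disk. Consequently no additional open unit disk can have its center in the filler area without overlapping one of the fixed disks. In the same step I would observe that the same reasoning rules out placing a disk with center outside the $2c\times 2c$ bounding square but still within the tile region.

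\textbf{Step 2 (identifying admissible positions).} Applying Observation~\ref{obs:channel-one} to the basic channel, the only admissible centers strictly inside $A_1B_1B_cA_c$ are the points $X_1,\ldots,X_{c-1}$ and $Y_2,\ldots,Y_c$. Because the bottom-left disk at $B_1$ has been deleted, the left end of the channel opens up to admit the extra position labeled $X$; the right end, where $A_c$ and $B_c$ stick out of the channel quadrilateral, admits the extra position $Y$. I would justify these two boundary positions by a direct distance computation using $h=2+\sqrt{3}$ and $\ell=2\sqrt{2+\sqrt{3}}$, verifying that $X$ (respectively $Y$) is at distance $\ge 2$ from every remaining fixed disk.

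\textbf{Step 3 (conflict graph and independence number).} I would then set up the conflict graph on the set $\{X,X_1,Y_2,X_2,\ldots,X_{c-1},Y_c,Y\}$. The internal edges are exactly those forced by Observation~\ref{obs:channel-one} (each $X_i$ conflicts with $Y_i,Y_{i+1}$); the boundary positions $X$ and $Y$ will be shown, again by distance computation, to be compatible with all positions they are not adjacent to on the path. This yields a path-like conflict graph whose maximum independent set has size $c+1$, with the unique way of achieving the maximum being to include both boundary endpoints $X$ and $Y$ together with an alternating selection of internal positions. This proves the first three bullets of the lemma.

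\textbf{Step 4 (the $c$-disk sub-maximal configuration).} Finally, for the last bullet I would exhibit the alternating interior configuration $Y_2,X_2,Y_3,X_3,\ldots,Y_c$ (or its complementary pattern) that uses $c$ interior positions, all centers of which lie strictly inside the $2c\times 2c$ square, and argue via Step~2 that no admissible position remains once these are occupied. The main obstacle is the geometric verification in Step~2 (and the edge cases in Step~3) that $X$ and $Y$ are \emph{genuinely} admissible and that their conflict patterns with the internal positions really produce a path whose independence number jumps from $c-1$ (in the closed basic channel) to $c+1$; once the relevant distances are computed symbolically using the explicit constants $h$ and $\ell$, the rest of the argument is a short combinatorial check.
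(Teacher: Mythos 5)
Your overall strategy---confine new disks to the channel via the filler packing, enumerate admissible centers via Observation~\ref{obs:channel-one}, and finish with an independent-set computation in the resulting conflict graph---is exactly the route the paper intends (the paper in fact gives no written proof, asserting only that the lemma is ``implied by'' Observations~\ref{obs:channel-one} and~\ref{obs:channel-twist}). However, there are two genuine gaps. First, the lemma covers \emph{both} the straight and the twisted channel tile, and the twisted tile is built by inserting a twist whose behaviour is governed by Observation~\ref{obs:channel-twist}, not by Observation~\ref{obs:channel-one} alone; your plan never invokes Observation~\ref{obs:channel-twist}, so the twisted case is simply not handled. The conflict pattern across the twist (two slots on one side excluding all but a specific pair on the other, and the $Z_1,Z_2$ alternative killing one slot) changes the local structure of the conflict graph, so this is not a cosmetic omission.

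Second, your combinatorics in Steps~3--4 do not add up. For a closed basic channel of size $c$, Observation~\ref{obs:channel-one} gives the positions $X_1,\dots,X_{c-1},Y_2,\dots,Y_c$ with $X_i$ excluding $Y_i$ and $Y_{i+1}$; this is a path on $2(c-1)$ vertices with independence number $c-1$, not $c$. Hence the third bullet of the lemma (placing $c$ disks while avoiding $X$ and $Y$) cannot be realized by interior positions of the quadrilateral alone, and your proposed witness $Y_2,X_2,Y_3,X_3,\dots,Y_c$ is not even an independent set: $X_2$ conflicts with both $Y_2$ and $Y_3$ by the very relation you set up in Step~3. The missing ingredient is the extra interior slot created by deleting the bottom-left disk of the basic channel (and the geometry at the open right end), which is what lifts the interior count from $c-1$ to $c$ and the total to $c+1$ when $X$ and $Y$ are also used. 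You need to identify these additional admissible positions explicitly, verify their conflicts by the distance computations you defer to, and redo the independence-number argument with the corrected vertex set; as written, the skeleton is off by one and the exhibited configuration is infeasible.
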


We need a modification of the straight tile to adjust parities. Notice that disks  placed inside a basic channel may be on different levels (see the red and blue disks in \Cref{fig:channel-one} (b) with their centers on the red and blue line, respectively). Hence, we need to adjust levels as it is shown in \Cref{fig:channel-twist} (b). Then we observe the following. 

\begin{observation}\label{obs:channel-level}
Suppose that we are given disks with centers in $A_1,A_2$ and $B_1,B_2$ as it is shown in \Cref{fig:channel-twist} (b). Then if 
there are two disks with their centers in $X_1$ and $X_2$ ($Y_1$ and $Y_2$, respectively), then at most one disjoint disk with its  center in $A_1B_1B_2A_2$ can be added.
\end{observation}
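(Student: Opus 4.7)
\textbf{Proof plan for \autoref{obs:channel-level}.} My plan is to argue along the same lines as \autoref{obs:channel-one} and \autoref{obs:channel-twist}, namely by directly analyzing the free region left inside the quadrilateral $A_1B_1B_2A_2$ once the six disks (the four fixed disks centered at $A_1,A_2,B_1,B_2$ together with two disks placed in either $X_1,X_2$ or $Y_1,Y_2$) are positioned. By the vertical symmetry of the configuration in \Cref{fig:channel-twist}(b), it will be enough to treat the case in which the two added disks sit on the ``lower'' row at $X_1,X_2$; the case of $Y_1,Y_2$ then follows by reflecting across the horizontal mid-line of the quadrilateral.

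First I would write out explicit coordinates for $A_1,A_2,B_1,B_2$ using the level-shift $\Delta=3$ displayed in the figure (together with the basic-channel quantities $h=2+\sqrt{3}$, $\ell=2\sqrt{2+\sqrt{3}}$ from \Cref{fig:channel-one}), and then derive the coordinates of $X_1,X_2,Y_1,Y_2$ as the unique points that realize tangency with the relevant pair of fixed disks — exactly in the spirit of the computation that identified the point $D$ in the basic-channel picture. With those coordinates in hand, the second step is to determine the set $\Omega \subseteq A_1B_1B_2A_2$ consisting of those points that lie at Euclidean distance at least $2$ from each of the six already-fixed centers $A_1,A_2,B_1,B_2,X_1,X_2$; these are precisely the admissible centers for a seventh disk that is to be added inside the quadrilateral.

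The heart of the argument will then be to show that $\Omega$ has diameter strictly less than $2$. Once this bound is established, any two candidate centers in $\Omega$ are at distance less than $2$, so the corresponding open unit disks must intersect; hence at most one additional disk can be added, as claimed. Concretely, I expect $\Omega$ to be a small ``lens''-shaped region squeezed between the $X_1$- and $X_2$-disks from below and the $A_1$- and $A_2$-disks from above (with the $B_i$- and level-shift constraints cutting off the sides), and the diameter bound should follow from the fact that this lens fits inside an axis-aligned box whose width and height are each less than $2$; this is forced by the $3$-unit horizontal offset between the two levels together with the tangencies among the fixed disks.

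The main technical obstacle, as in the earlier observations, is therefore the explicit trigonometric/algebraic verification that $\operatorname{diam}(\Omega)<2$; this requires carrying out the distance computations with the irrational quantities $h$ and $\ell$ and checking a clean inequality at the end. All the rest — nonemptiness of $\Omega$ (so that indeed one disk \emph{can} be placed, matching the roles these tiles play in \Cref{lem:channel}) and the reduction from $Y_1,Y_2$ to $X_1,X_2$ by symmetry — is routine once the coordinates are fixed.
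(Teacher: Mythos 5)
The paper itself offers no proof of this observation --- like Observations~\ref{obs:channel-one}, \ref{obs:channel-twist} and \ref{obs:bend}, it is asserted as a direct consequence of the tangency configuration in the figure --- so your proposal has to stand on its own. The overall strategy is the right one: for open unit disks, ``at most one disk can be added'' is equivalent to the statement that the set $\Omega$ of admissible centers (points of $A_1B_1B_2A_2$ at distance at least $2$ from each of $A_1,A_2,B_1,B_2,X_1,X_2$) has diameter less than $2$, and the symmetry reduction from the $Y_1,Y_2$ case to the $X_1,X_2$ case is legitimate.

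However, as written this is a plan rather than a proof, and the deferred step is not a routine verification but the entire content of the observation. Two concrete gaps: (i) You assume without argument that $\Omega$ is a single small ``lens''. A priori $\Omega$ could be disconnected --- e.g.\ one pocket at the upper level near where $Y_1,Y_2$ would sit and another pocket elsewhere in the quadrilateral --- and then bounding the size of one lens says nothing about the distance between points in different components; the diameter bound must be established for all of $\Omega$, which requires actually exhibiting the region, not just its expected shape. (ii) The coordinates you propose to start from are not pinned down: the offset ``$3$'' you cite is labelled in \Cref{fig:channel-twist}(a) (the twisting gadget), not in (b) (the level adjustment), so the very first step of your computation rests on a misread parameter. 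Until the centers of $A_1,A_2,B_1,B_2$ in configuration (b) are fixed correctly and the inequality $\operatorname{diam}(\Omega)<2$ is carried through with the algebraic quantities $h=2+\sqrt{3}$ and $\ell=2\sqrt{2+\sqrt{3}}$, the observation has not been established.
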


To fix parity, we also have to adjust distances. For this, we observe that we can insert gaps of length $s<\sqrt{4\sqrt{3}-3}-1$ between disks in basic channels as it is shown in \Cref{fig:channel-gap}~(a).      

 \begin{figure}[ht]
\centering
\scalebox{0.7}{
\input{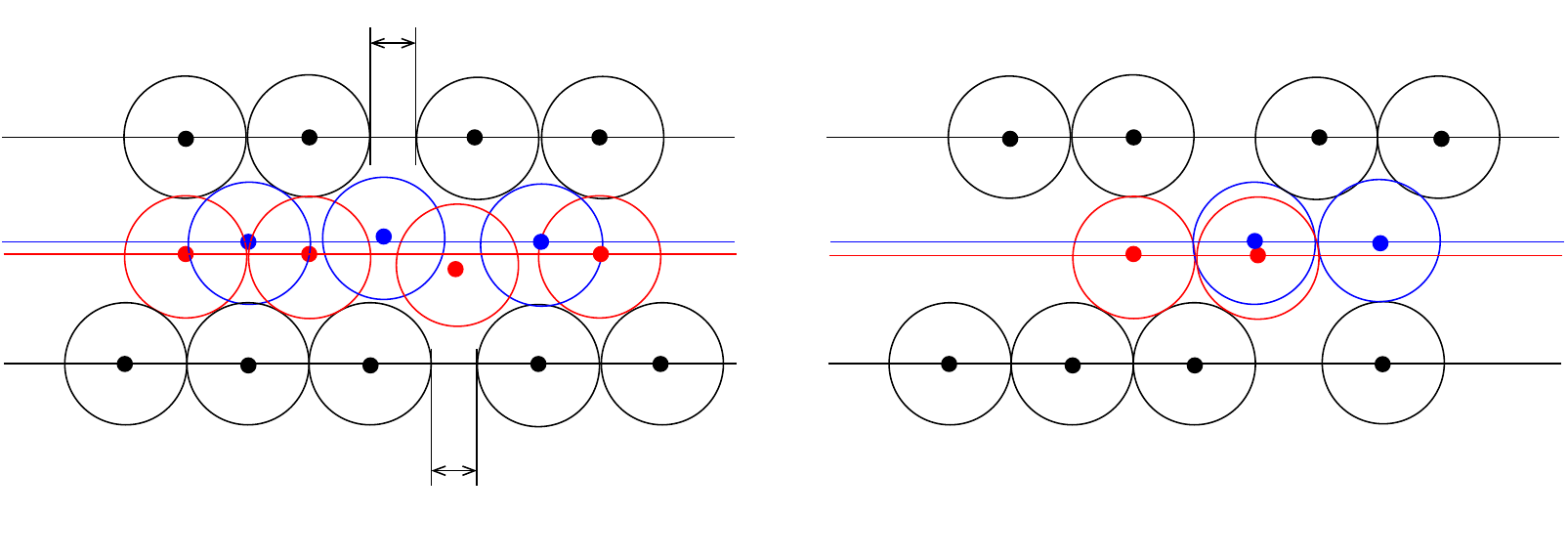_t}
}
\caption{Gap insertion.}\label{fig:channel-gap}
\end{figure}

\begin{observation}\label{obs:channel-gap}
Given disks with centers in $A_1,A_2$ and $B_1,B_2$ as it is shown in \Cref{fig:channel-gap}~(a), at most one disk with its center inside the quadrilateral $A_1B_1B_2A_2$ can be added. Furthermore, if a disk has a center inside $A_1B_1B_2A_2$, then this disk intersects the disk with its center in $X$ or the disk with its center in $Y$. 
\end{observation}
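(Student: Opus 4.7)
The observation has two parts: (i) at most one new disk can have its center in the quadrilateral $Q = A_1 B_1 B_2 A_2$, and (ii) any such new disk overlaps the hypothetical disk at $X$ or at $Y$. I would reduce both parts to checking that a bounded planar region has small diameter and is covered by two open balls. Set coordinates with $B_1 = (-1 - s/2, 0)$ and $B_2 = (1 + s/2, 0)$ on the $x$-axis; the remaining centers $A_1, A_2$ and the phantom positions $X, Y$ are then determined by the basic-channel geometry of \Cref{obs:channel-one} together with the horizontal gap of size $s$ in each row, yielding explicit coordinates for every relevant point in terms of $s$. Let
\[
\Omega \;=\; \bigl\{\, P \in Q : |P - A_i| \ge 2 \text{ and } |P - B_j| \ge 2 \text{ for all } i,j \in \{1,2\} \,\bigr\}
\]
be the \emph{admissible region}. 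A legal placement of a new disk corresponds exactly to choosing a center $P \in \Omega$, and two can be placed simultaneously iff $\Omega$ contains two points at distance at least $2$. Using that two open unit disks intersect iff their centers are at distance strictly less than $2$, (i) becomes $\mathrm{diam}(\Omega) < 2$ and (ii) becomes $\Omega \subseteq B(X, 2) \cup B(Y, 2)$, where $B(\cdot,2)$ is the open ball of radius $2$.

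For (ii), I would exploit the $180^\circ$-rotational symmetry of the configuration about the center of $Q$: it suffices to bound $|P - Y| < 2$ for every $P \in \Omega$ lying on the $Y$-side of $Q$. The boundary of this restricted region consists of arcs of the four circles $\{|P - A_i| = 2\}$ and $\{|P - B_j| = 2\}$ together with pieces of the sides of $Q$. On each such arc the function $|P - Y|$ attains its maximum at an endpoint or at a critical point, so the task reduces to a handful of one-variable inequalities in $s$. The worst case is an arc of $\{|P - A_1| = 2\}$ -- the circle around the diagonally opposite center; setting the maximum of $|P - Y|$ on that arc equal to $2$ and solving for $s$ yields the critical equation $(1+s)^2 = 4\sqrt{3} - 3$, i.e., the threshold $s = \sqrt{4\sqrt{3} - 3} - 1$. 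The strict inequality on $s$ in the hypothesis then gives the strict containment.

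Part (i) follows from the same extremal analysis: $\Omega$ consists of two small ``pockets'' near $X$ and $Y$, each of diameter less than $2$, and the two pockets are within distance less than $2$ of each other, so $\mathrm{diam}(\Omega) < 2$. The main obstacle will be the algebraic verification that the extremal point controls the entire region and not merely a single candidate. For this I would rely on strict convexity of the squared-distance functions $P \mapsto |P - Y|^2$ and $P \mapsto |P - A_i|^2$, which forces a unique maximum on each boundary arc of $\Omega$ and reduces the verification to one scalar inequality per arc; each of these inequalities can be checked by a short hand computation in the single parameter~$s$.
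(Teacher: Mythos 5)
Your proposal is, in substance, the same argument as the paper's: the paper also works with the extremal configuration, observing that at the critical gap $s=\sqrt{4\sqrt{3}-3}-1$ the only admissible centers avoiding the disks at $X$ and $Y$ are the two tangency points $Z$ and $Z'$ (each touching the $X$- and $Y$-disks), so that for strictly smaller $s$ no disk disjoint from both can be inserted; your reformulation as $\Omega\subseteq B(X,2)\cup B(Y,2)$ together with the boundary-arc/convexity analysis is just a more systematic way of certifying that same tangency threshold, and you recover the correct critical equation $(1+s)^2=4\sqrt{3}-3$.

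One small logical slip in your part (i): knowing that each of the two pockets of $\Omega$ has diameter less than $2$ and that the pockets are within distance less than $2$ of \emph{each other} does not imply $\mathrm{diam}(\Omega)<2$ --- what is needed is that the \emph{maximum} distance between a point of one pocket and a point of the other is less than $2$. This is exactly the bound your extremal machinery computes anyway (the farthest pair lies on the boundary arcs around $X$ and $Y$), so the framework survives, but the sentence as written does not justify the conclusion.
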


\begin{proof}
The claim follows from the following geometrical observation illustrated in \Cref{fig:channel-gap}~(b). Suppose that the gap is exactly $\sqrt{4\sqrt{3}-3}-1$ and there are disks with centers in $A_1$, $A_2$, $B_1$, $B_2$, $X$ and $Y$.  Then any disk with its center inside $A_1B_1B_2A_2$ either has its center in $Z$ and touches the disks with centers in $X$, $A_2$ and $Y$ or has its center in $Z'$ and touches the disks with centers $X$, $B_1$ and $Y$; note that $Z$ and $Z'$ are uniquely defined by these touching conditions. Therefore, if $s<\sqrt{4\sqrt{3}-3}-1$, a new disk cannot be inserted in $A_1B_1B_2A_2$.
\end{proof}

Now we construct the \emph{parity adjustment} channel tile from the basic channel of size $c-1$ by introducing two gaps of size $1/2<\sqrt{4\sqrt{3}-3}-1$ and one level adjustment as  it is shown in \Cref{fig:channel-parity}. Similarly to \Cref{lem:channel}, we have the following properties by making use of  \Cref{obs:channel-one}, \Cref{obs:channel-level} and \Cref{obs:channel-gap}.

 \begin{figure}[ht]
\centering
\scalebox{0.7}{
\input{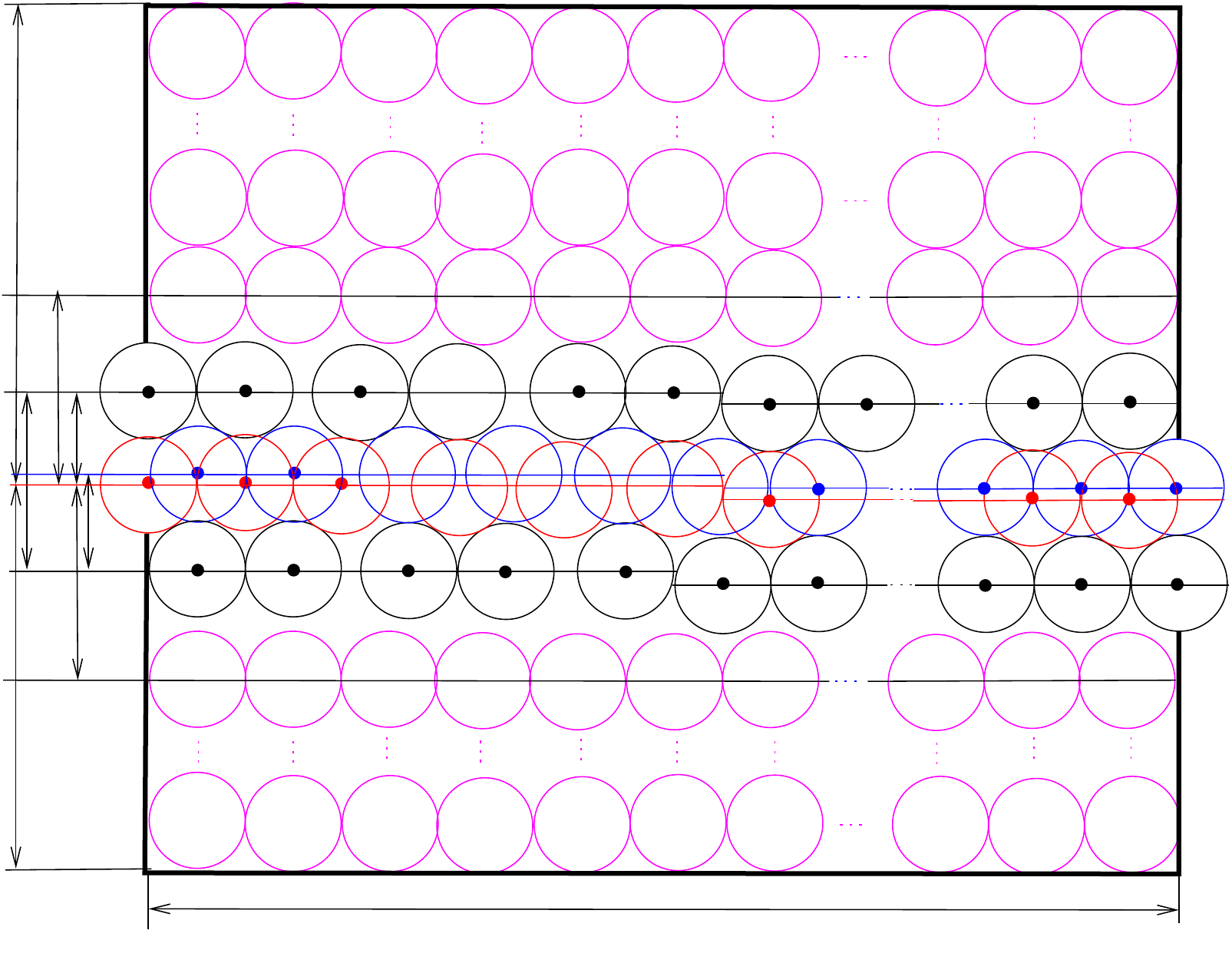_t}
}
\caption{The parity adjustment channel tile. The disks shown in red and blue are not parts of the gadgets, the disks shown in magenta are used to fill space.}\label{fig:channel-parity}
\end{figure}

\begin{lemma}\label{lem:channel-parity}
At most $c$ new disks having their centers in the  parity adjustment channel tile can be added and 
 it is possible to place $c$ disks. Moreover,  only disks inside the channel of the tile can be added and 
  if $c$ disks are added then 
either one of them has its center in $X$ (see \Cref{fig:channel-parity}) and it is impossible to add the disk having its center in $Y$ or, symmetrically,  one disk has its center in $Y$ and the disk centered in $X$ cannot be inserted. 
\end{lemma}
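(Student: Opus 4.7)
The plan is to mirror the proof of the corresponding lemma for straight and twisted channel tiles, adapting it to the new modifications (two gaps and one level adjustment). I would first observe that the filler disks placed in the $2c\times 2c$ square outside the channel immediately force every new disk to have its center inside the channel of the tile. This handles one of the claims and reduces the analysis to understanding what can be inserted into the modified channel.

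Next I would bound the number of new disks. The channel of the tile is built from a basic channel of size $c-1$, further modified by inserting two gaps of width $1/2$ and one level adjustment. I would partition its interior into (i) the inter-pillar slots of the basic channel, (ii) the two gap regions, and (iii) the level-adjustment region, and apply a different observation to each: the basic-channel observation bounds the number of disks placeable in (i), with the constraint that choosing an upper slot $X_i$ forbids the adjacent lower slots $Y_i, Y_{i+1}$ (and symmetrically); the gap-insertion observation gives at most one disk per gap in (ii); and the level-adjustment observation gives at most one disk in (iii). Summing over a basic channel of size $c-1$ (at most $c-2$ slots) plus two gaps plus one level adjustment yields the bound of $c$ disks. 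Feasibility follows by exhibiting an explicit arrangement of $c$ disks, placing consecutive upper (or lower) positions along the basic channel and one disk per gap, with a parity switch through the level adjustment.

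For the $X/Y$ dichotomy, the critical observation is that $X$ and $Y$ are the two candidate positions at the right end of the tile, sitting on opposite sides of the channel across the last $1/2$-wide gap. By the gap-insertion observation applied to that quadrilateral, any new disk inserted in this region must intersect either the disk centered at $X$ or the one centered at $Y$. Therefore, once a new disk is inserted at $X$, no disk can be inserted at $Y$, and vice versa. Combined with the counting argument, in any placement achieving exactly $c$ disks precisely one of $X, Y$ is realized, giving the stated ``symmetric'' conclusion.

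The main obstacle is the tightness of the count: one must verify that the accumulated modifications (two gaps of width $1/2$ plus the level-adjustment shift) fit within the $2c\times 2c$ tile, that the parity imposed by the level adjustment is compatible with filling both remaining slots along the basic channel so that a full $c$ disks actually fit, and that the $1/2$ gap-width is strictly below the threshold $\sqrt{4\sqrt{3}-3}-1$ so that the gap-insertion observation applies as stated. This is largely geometric bookkeeping rather than a new idea, but it is where the choice of the constants in the construction is pinned down.
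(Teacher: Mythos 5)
Your overall skeleton matches the paper's (the paper itself only sketches this lemma, asserting that it follows from \Cref{obs:channel-one}, \Cref{obs:channel-level} and \Cref{obs:channel-gap} ``similarly to \Cref{lem:channel}''), and your decomposition into basic-channel segments, gap regions, and the level-adjustment region, each handled by the corresponding observation, is the right plan. However, two concrete points in your write-up are wrong.

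First, your headline count does not give the claimed bound: $(c-2)+2+1=c+1$, not $c$. The naive ``sum the capacities of the pieces'' overcounts, because the observations you invoke are not independent across segment boundaries: by \Cref{obs:channel-gap}, the single disk placeable in a gap necessarily intersects the candidate positions immediately on \emph{both} sides of the gap, and \Cref{obs:channel-level} similarly couples the positions flanking the level adjustment. So the candidate positions of the whole channel form one conflict path, and the bound of $c$ is the maximum independent set of that path, not a sum of per-segment maxima. Second, your justification of the $X$/$Y$ dichotomy misreads the geometry: in \Cref{fig:channel-parity}, $X$ sits at the \emph{left} end of the tile (next to pole $A$) and $Y$ at the \emph{right} end (next to pole $B$); they are separated by the entire channel, not by one $1/2$-wide gap. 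Hence their mutual exclusion cannot follow from a single application of \Cref{obs:channel-gap}. It is instead the global parity statement that the tile is designed to produce: the exclusion has to be propagated along the whole conflict path (alternating upper/lower positions of the basic sub-channels, glued by the gap and level-adjustment connectors), whose length is arranged so that every independent set of size $c$ uses exactly one of the two extreme positions. Without this propagation argument the key conclusion of the lemma is unsupported.
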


 \begin{figure}[ht]
\centering
\scalebox{0.7}{
\input{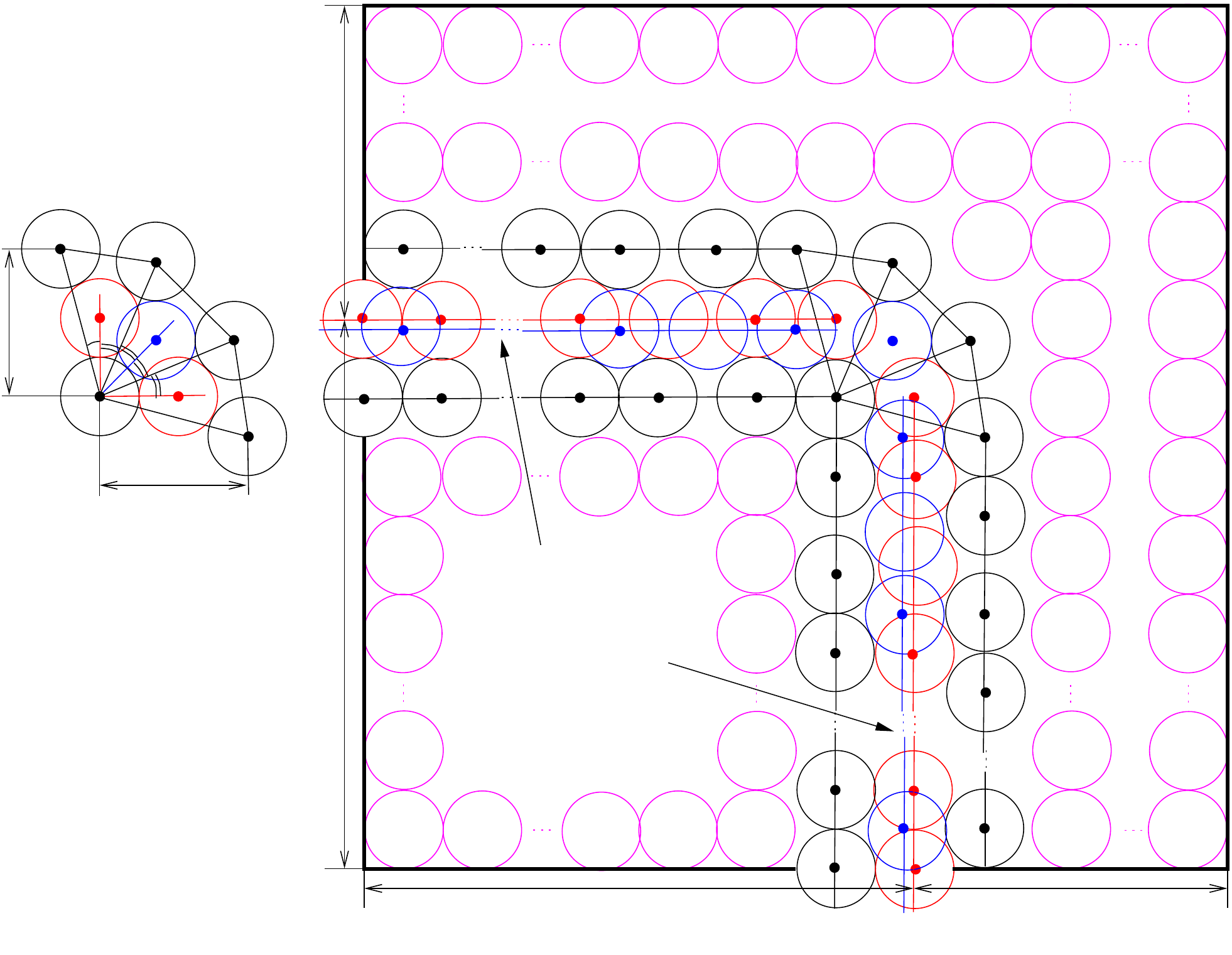_t}
}
\caption{The bend tile. The disks shown in red and blue are not parts of the gadgets, the disks shown in magenta are used to fill space.}\label{fig:bend-tile}
\end{figure}

We use basic channels and apply gap insertions to construct the bend tile.  Additionally, we  observe that we can ``bend'' basic channels (see \Cref{fig:bend-tile} (a)).    
Consider five touching disks with centers $A$, $B$, $C$, $D$, and $O$ shown in \Cref{fig:bend-tile} (a);   $h=2+\sqrt{3}$, $|OA|=|OD|=\ell=2\sqrt{2+\sqrt{3}}$, $|OB|=|OC|=2\sqrt{2+\sqrt{2}}$, and the angles $\alpha=\pi/12$ and $\beta=\pi/8$. Then we can make the following observation.

\begin{observation}\label{obs:bend}
Given disks with their centers in $A$, $B$, $C$, $D$ and $O$, only disks with centers in $X$, $Y$ and $Z$ can have their centers in $ABCDO$. Moreover, if there is a disk with center in $X$ or $Z$, then the disk with its center in $Y$ cannot be added, and if there is the disc with its center in $Y$, then no disk having its center if $X$  or $Z$ can be added.
\end{observation}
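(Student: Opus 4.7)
My plan is to prove Observation~\ref{obs:bend} by a direct computation in coordinates, in the same style as Observations~\ref{obs:channel-one}, \ref{obs:channel-twist}, and \ref{obs:channel-level}. First I place $O$ at the origin and orient the axes so that the rays $OA$ and $OD$ make the prescribed angles with the horizontal (read off from Figure~\ref{fig:bend-tile}(a) together with $\alpha=\pi/12$ and $\beta=\pi/8$). Using the given data $|OA|=|OD|=2\sqrt{2+\sqrt 3}$ and $|OB|=|OC|=2\sqrt{2+\sqrt 2}$, I write down explicit coordinates for $A$, $B$, $C$, $D$. The bend is designed so that the pentagon $ABCDO$ decomposes naturally into three curvilinear ``pockets,'' each bounded by arcs of the radius-$2$ circles around three of the five given centers, namely $\{A,B,O\}$, $\{B,C,O\}$, and $\{C,D,O\}$; the points $X$, $Y$, $Z$ are the unique points inside these pockets lying at distance exactly $2$ from each of the three bounding centers, and the choice of lengths and angles is precisely what makes each such system consistent.

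For the first assertion, recall that two open unit disks are disjoint iff their centers are at distance at least $2$, so the admissible region for a new disk center inside $ABCDO$ is exactly
\[
F \;:=\; (ABCDO)\setminus\bigl(B(A,2)\cup B(B,2)\cup B(C,2)\cup B(D,2)\cup B(O,2)\bigr),
\]
where $B(P,2)$ denotes the open disk of radius $2$ around $P$. A finite computation---checking pocket by pocket that the three bounding circles together cover the pocket except at one point---shows that $F=\{X,Y,Z\}$.

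For the second assertion, I evaluate the distances $|XY|$ and $|YZ|$ from the explicit coordinates of $X$, $Y$, $Z$ and verify that both are strictly less than $2$. This yields exactly the claimed mutual exclusions between $Y$ and $\{X,Z\}$, in both directions (since $|XY|<2$ and $|YZ|<2$ are symmetric in the two endpoints). Note that no claim is made about the pair $\{X,Z\}$, consistently with the expected inequality $|XZ|\ge 2$.

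The main obstacle is the bookkeeping of nested radicals: the relevant distances involve $\cos(\pi/8)=\tfrac12\sqrt{2+\sqrt 2}$ and $\cos(\pi/12)=\tfrac14(\sqrt 6+\sqrt 2)$, and one must manipulate half-angle identities carefully to reduce each assertion to a single polynomial inequality. Geometrically, however, no new idea beyond the tangency analysis already carried out for the straight-channel observations is required, so I would organize the proof by listing all relevant coordinates once and then reducing each of the two assertions to the verification of a single explicit inequality.
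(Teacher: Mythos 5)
The paper never actually proves \Cref{obs:bend}: like \Cref{obs:channel-one,obs:channel-twist,obs:channel-gap}, it is asserted directly from the figure and the stated metric data, so there is no argument of record to compare against, and your coordinate verification is a reasonable way to supply one. Your second half does close as stated: with $O$ at the origin and $|OX|=|OY|=|OZ|=2$, the angular gaps at $O$ are $\angle XOY=\angle YOZ=2\beta=\pi/4$, giving $|XY|=|YZ|=4\sin(\pi/8)<2$, while $\angle XOZ=\pi/2$ gives $|XZ|=2\sqrt{2}>2$, which matches exactly the asymmetric exclusion pattern in the statement. The one step you undersell is in the first half: the phrase ``a finite computation \dots\ that the three bounding circles together cover the pocket except at one point'' hides the condition that actually consumes the hypotheses. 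Concurrence of the three radius-$2$ circles at $X$ (encoded by $|OA|=4\cos\alpha$, $|OB|=4\cos\beta$ and $\angle AOB=\alpha+\beta$) is necessary but not sufficient for the pocket to collapse to the single point $X$: one must also verify that the directions from $X$ to $A$, $B$, $O$ are not contained in a closed half-plane, equivalently that the apex angles $\angle AXO=\pi-2\alpha$, $\angle OXB=\pi-2\beta$ and $\angle BXA$ sum to $2\pi$ with each less than $\pi$ (which here reduces to $\alpha+\beta<\pi/2$); only then do the three disks cover a punctured neighbourhood of $X$, and otherwise a two-dimensional sliver of admissible centers survives and the first claim fails. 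This condition does hold for $X$, $Y$ and $Z$ with $\alpha=\pi/12$, $\beta=\pi/8$, but it is the content of the verification rather than a by-product of ``consistency,'' and a written-out proof should state it. Likewise, you should add one sentence explaining why the three pockets are the only candidate components of the free set (e.g., consecutive centers among $A,B,C,D,O$ along the boundary of the pentagon are at distance less than $4$, so the boundary and the regions between adjacent disks are covered); this is routine but is a covering claim about a region, not a pointwise check.
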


\Cref{obs:bend} allows to construct the bend tile (see \Cref{fig:bend-tile} (b)). We use the configuration of disks from  \Cref{fig:bend-tile} (a) and attach two basic channels called \emph{left} and \emph{bottom} channels, respectively. To adjust distances, we insert two gaps of size $1/2$ into each channels. Then the remaining space if filled by disks. The disks with their centers in $A$ and $B$ are \emph{poles} of the tile. 
In the same way as with channel tiles, we have the following properties by making use of \Cref{obs:channel-one}, \Cref{obs:channel-gap} and \Cref{obs:bend}.   

\begin{lemma}\label{lem:bend}
At most $c-1$ new disks having their centers in the bend tile can be added and it is possible to place $c-1$ disks. Moreover, the following holds:
\begin{itemize}
\item disks can be placed only inside the channel,
\item if exactly $c-1$ disks are placed, then two of them have their centers in $X$ and $Y$ (see \Cref{fig:bend-tile} (a)),
\item it is possible to place $c-2$ disks that have no centers in $X$ and $Y$, but then they are completely inside the tile and it is impossible to place an additional 
disk having its center inside the tile. 
\end{itemize}
\end{lemma}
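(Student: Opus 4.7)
The plan is to mimic the proofs of \Cref{lem:channel} and \Cref{lem:channel-parity}, but now combining the bend argument (\Cref{obs:bend}) with the basic-channel and gap-insertion observations. First I would establish that every point outside the left channel, the bottom channel, and the corner quadrilateral $ABCDO$ of \Cref{fig:bend-tile}(a) is already covered by an existing disk of the tile (a filler disk or one of the disks $A,B,C,D,O$ of the bend). This follows directly from the construction: the filler disks are inserted precisely to saturate the complement of these three regions. Consequently, any new disk has its center in one of the two basic channels or in $ABCDO$, giving the first bullet.

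Next I would bound the number of new disks that can be placed in each region. For each basic channel, \Cref{obs:channel-one} gives at most one new disk per slot, and each of the two inserted gaps of size $1/2<\sqrt{4\sqrt{3}-3}-1$ costs exactly one slot by \Cref{obs:channel-gap}: at most one disk fits in the gap area, and if such a disk exists it must intersect an adjacent slot disk (so it cannot coexist with both of its neighbouring slot placements). Inside the bend corner, \Cref{obs:bend} permits at most two new disks (either the pair at $X$ and $Z$, or the single disk at $Y$, never both). Adding these contributions (two basic channels of the appropriate sizes with two gaps each, plus the bend corner), and observing that the disks placed at the shared endpoints $X$ and $Y$ of \Cref{fig:bend-tile}(b) are counted only once, yields the claimed upper bound of $c-1$.

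To prove the achievability and the second bullet, I would describe an explicit packing: starting from the pole near $A$, place disks at every other slot of the left basic channel so that the last slot coincides with $X$ of \Cref{fig:bend-tile}(b); pass through the corner using the $X$-$Z$ option of \Cref{obs:bend} so that the disk at $Z$ glues to the first slot of the bottom basic channel; then continue every-other-slot along the bottom channel, terminating at the slot $Y$. A count along this chain gives exactly $c-1$ disks, two of which are at $X$ and $Y$. For the third bullet, shift each of the two chains by one slot, so neither endpoint $X$ nor $Y$ is occupied. This produces $c-2$ disks, all properly inside the tile, and a single-gap analogue of \Cref{obs:channel-gap} at each shifted end rules out adding any further disk with its center inside the tile.

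The main obstacle I anticipate is the geometric bookkeeping needed to verify that, after inserting two gaps of width $1/2$ into each basic channel, the slot positions line up flush with the bend corner: one must check, using the constants $h=2+\sqrt{3}$, $\ell=2\sqrt{2+\sqrt{3}}$, $|OB|=|OC|=2\sqrt{2+\sqrt{2}}$, $\alpha=\pi/12$, and $\beta=\pi/8$, that the endmost slot of each channel coincides with a feasible placement site ($X$, $Y$, or $Z$) of the bend configuration, and that the overall tile fits in a $2c\times 2c$ square. Once this alignment is verified, the counting in steps two and three becomes essentially identical to the argument for \Cref{lem:channel}.
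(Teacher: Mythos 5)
Your proposal is correct and follows exactly the route the paper intends: the paper gives no written proof of this lemma, merely asserting that it follows from Observations~\ref{obs:channel-one}, \ref{obs:channel-gap} and \ref{obs:bend} ``in the same way as with channel tiles,'' and your decomposition into the two gapped basic channels plus the corner configuration $ABCDO$, with the slot/gap/bend counting and the two explicit packings, is precisely the argument being gestured at. The ``geometric bookkeeping'' you flag (alignment of the endmost slots with the bend positions and the fit inside the $2c\times 2c$ square) is likewise left implicit in the paper, so your write-up supplies detail the paper omits rather than diverging from it.
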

 
 \begin{figure}[ht]
\centering
\scalebox{0.7}{
\input{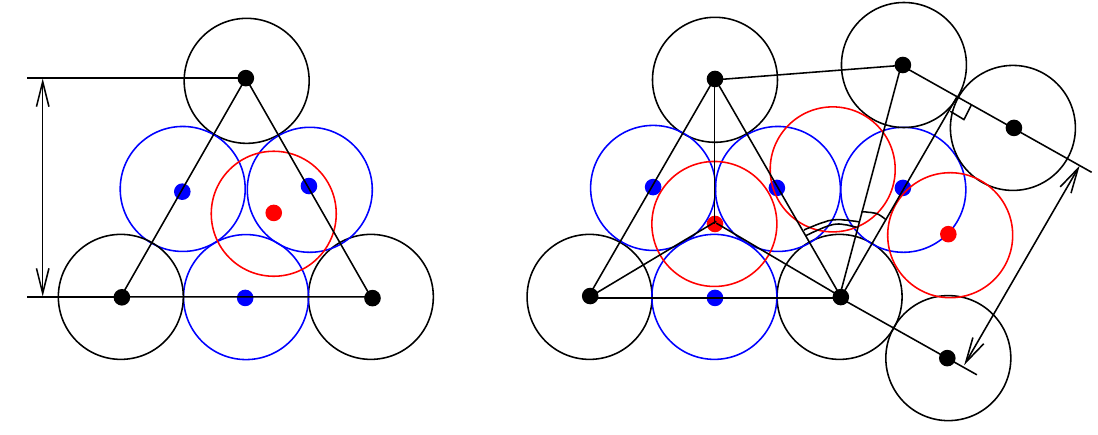_t}
}
\caption{Node area.}\label{fig:node}
\end{figure}

The construction of the node tile is based on the following geometric observations.  Consider  an equilateral triangle $ABC$ with sides of length two as shown in \Cref{fig:node} (a), $h'=2\sqrt{3}$. Suppose that there are disks with centers in $A$, $B$ and $C$. Then it is possible to place at most three disks with centers in the triangle $ABC$,   and if exactly three disks are placed, then they have their centers in $X$, $Y$ and $Z$ and touch each other. Furthermore, if a disk having its center properly inside $ABC$ is placed, then no other disk with its center inside the triangle can be added. We exploit this property and add a basic channel as it is shown in \Cref{fig:node} (b). The point $O$ is the center of $ABC$, that is, $|OA|=|OB|=|OC|$. Recall that $h=2+\sqrt{3}$ and $\alpha=\pi/12$. We set $\gamma=\pi/3-\pi/12=\pi/4$. This gives us the configuration of disks with the following properties summarized in the next observation.

\begin{observation}\label{obs:node}
Given disks with centers in $A$, $B$, $C$, $D$, $E$ and $F$ as it is shown in \Cref{fig:node}~(b), the following is fulfilled:
\begin{itemize}
\item at most one disk with its center in $BCD$ can be added,
\item if there is a disk with its center either in $Y$ or $U$, then no other disk can have its center properly in $BCD$,
\item if there are disks with their centers in $O$ and $W$, then a disk with its center in $BCD$ can be added,
\item if there is a disk having its center properly inside $ABC$, then no other disk with its center inside $ABC$ can be added.
\end{itemize}
\end{observation}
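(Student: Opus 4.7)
Each bullet reduces to an elementary distance computation once coordinates are fixed. I would place $O$ at the origin with $B$ directly above $O$; using $|OA|=|OB|=|OC|=2/\sqrt{3}$, the equilateral triangle $ABC$ is then determined. From the construction parameters $h=2+\sqrt{3}$, $\alpha=\pi/12$, $\gamma=\pi/4$ together with the touching conditions implicit in the picture ($B$ touches $D$, $D$ touches $E$, etc.), the centers $D,E,F$ of the channel disks and the candidate positions $X,Y,Z,U,W$ all acquire explicit coordinates, and what remains are finitely many Euclidean distance checks.

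The fourth bullet, concerning only the triangle $ABC$, is the classical fact already sketched in the paragraph preceding the observation. I would formalize it by describing the subregion of $ABC$ in which a new disk center may lie, namely $ABC$ minus the three open unit disks at $A,B,C$; this subregion decomposes into three pairwise-disjoint closed ``cusp'' pieces located near the three vertices, and each piece has diameter strictly less than $2$. Consequently, placing three mutually compatible disks forces one in each cusp, yielding the positions $X,Y,Z$, and placing a single disk with center \emph{properly} inside $ABC$ puts it in the interior of one cusp, whence the other two cusps both lie at distance strictly less than $2$ from it and no second disk fits. The diameter bound can be verified by computing the farthest pair in a cusp via elementary trigonometry.

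The first three bullets, all about the small triangle $BCD$, I would handle uniformly by first computing the feasibility set $F\subseteq BCD$ of points at distance $\geq 2$ from each of the disks at $B,C,D$. Since $|BC|=|BD|=2$ (from the touching conditions of the node and the channel entrance) and $|CD|$ is obtained from $\gamma$, $\alpha$ and $h$ by the law of cosines, $F$ is a single curvilinear cusp whose diameter I would show to be strictly less than $2$, proving bullet~1. For bullet~2 I would verify directly that $|PY|<2$ and $|PU|<2$ for every $P\in F$, using the known coordinates of $Y$ and $U$. For bullet~3 I would exhibit the unique point $P\in F$ characterized by $|PB|=|PC|=|PD|=2$ (this triple-touching point exists because the circumradius of $BCD$ is less than $2$) and check $|PO|\geq 2$ and $|PW|\geq 2$ by law-of-cosines calculations; the identities $\gamma=\pi/4$ and $\alpha=\pi/12$ are precisely what make both of these inequalities hold.

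The main obstacle is purely bookkeeping: there are many labeled points, and a single miscomputed angle derails the feasibility set $F$. Conceptually, however, there is nothing beyond repeated applications of the law of cosines and diameter estimates for Reuleaux-like cusp regions, and the four bullets are essentially independent once the coordinates of $A,\dots,F$ and $X,\dots,W$ have been written down.
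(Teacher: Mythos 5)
Your overall strategy---fix coordinates for $A,\ldots,F$ and the candidate points, then reduce each bullet to finitely many distance checks---is legitimate, and it is essentially what the paper leaves implicit: the observation is asserted without a written proof, as a consequence of the explicit construction described just before it. However, two of your concrete geometric claims are wrong and would derail the verification. First, the scale. For the third bullet to hold, a unit disk centered at $O$ must be disjoint from those at $A$, $B$, $C$, so $|OA|=|OB|=|OC|\geq 2$; your value $2/\sqrt{3}$ (the circumradius of a side-$2$ equilateral triangle) makes that bullet false. The construction actually needs $|AB|=4$, which is consistent with the height $h'=2\sqrt{3}$ and with unit disks fitting at the edge midpoints $X$, $Y$, $Z$ and at $O$; the paper's phrase ``sides of length two'' is misleading, but a proof must use the geometry under which the statement is true. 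The same rescaling invalidates your claim $|BC|=|BD|=2$ for the triangle $BCD$.

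Second, your description of the admissible region inside $ABC$ is incorrect. The set of points where a new center may lie is $ABC$ minus the three open disks of radius $2$ (not radius $1$) about $A$, $B$, $C$, since two disjoint unit disks have centers at distance at least $2$. For $|AB|=4$ this set is a single curvilinear triangle around the center $O$ whose corners are the midpoints $X$, $Y$, $Z$ of the sides; it does not decompose into three cusps near the vertices---points near a vertex are within distance $2$ of that vertex and are excluded. Consequently the per-cusp diameter argument does not apply. The correct argument for the fourth bullet is that this single region is contained in the medial triangle $XYZ$, an equilateral triangle of side $2$ whose diameter $2$ is attained only at pairs of its vertices; since $X$, $Y$, $Z$ lie on the boundary of $ABC$, a center that is properly inside $ABC$ is not one of these corners and hence lies at distance strictly less than $2$ from every other admissible point. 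With the coordinates corrected, your plan for the first three bullets (compute the admissible cusp in $BCD$, bound its diameter, and check the specific distances to $Y$, $U$, $O$, $W$) is the right kind of argument and would complete the verification.
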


We also using an easy observation that the basic channel construction allows to bend them by $\pi/6$ or $\pi/3$ as it is shown in \Cref{fig:bend-small}. Now we are ready to construct the node tile (see \Cref{fig:node-tile}).

 \begin{figure}[ht]
\centering
\scalebox{0.7}{
\input{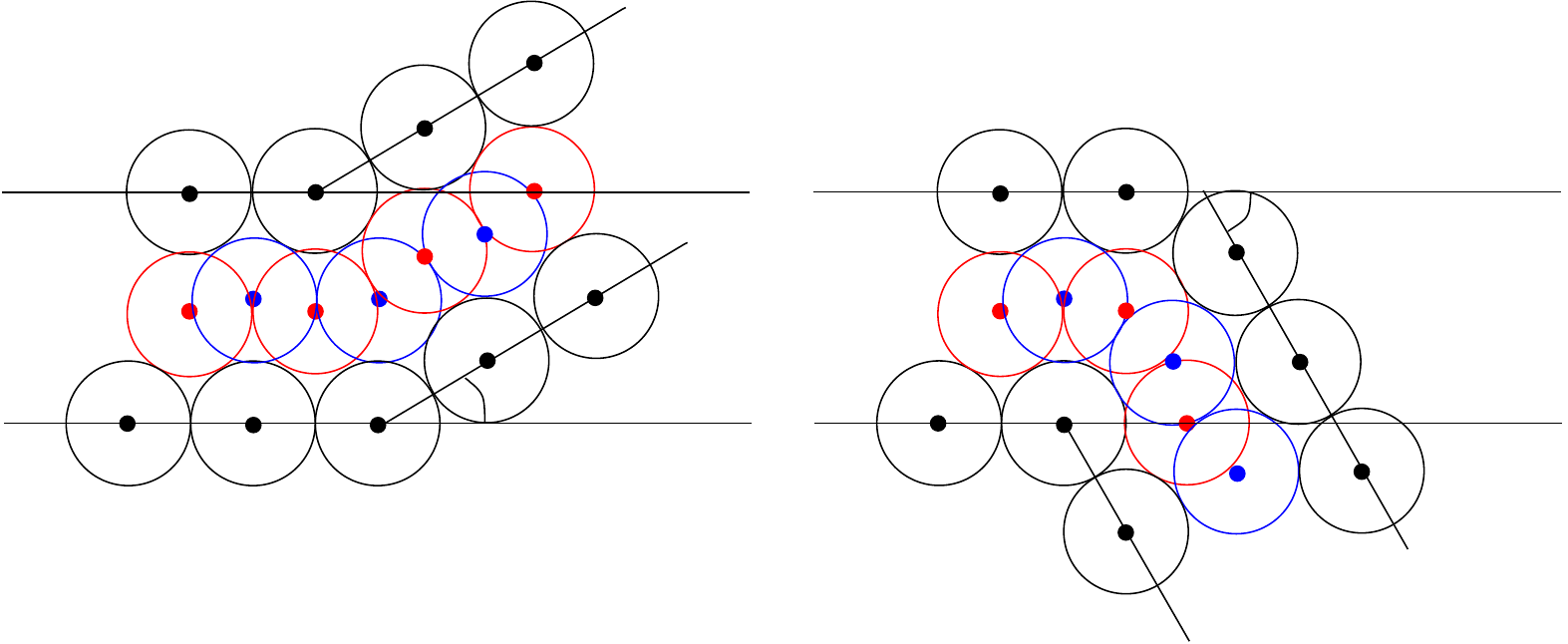_t}
}
\caption{Bending of basic channels.}\label{fig:bend-small}
\end{figure}

 \begin{figure}[h!]
\centering
\scalebox{0.7}{
\input{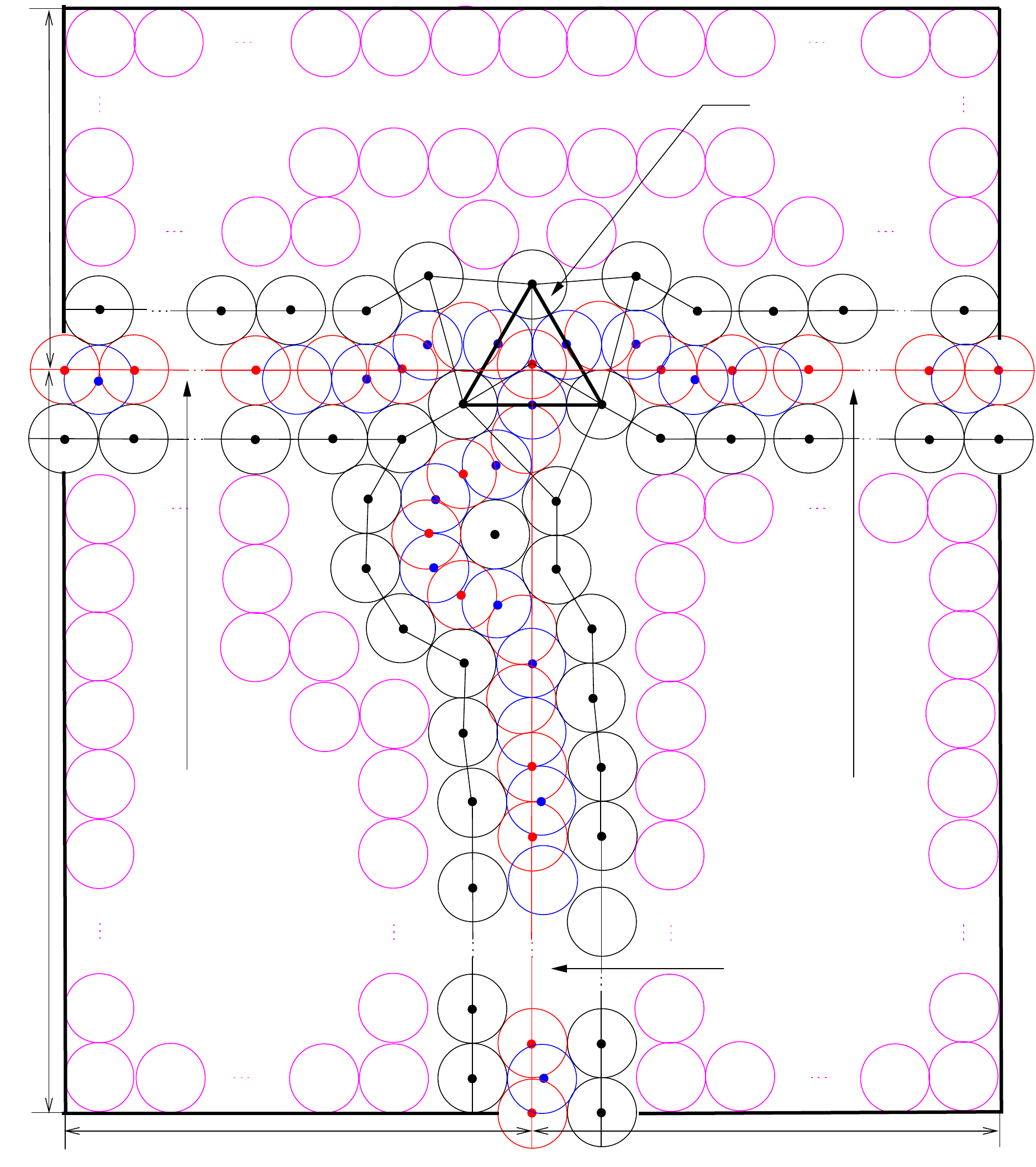_t}
}
\caption{The node tile. The disks shown in red and blue are not parts of the gadgets, the disks shown in magenta are used to fill space. The point $P$ is the center of the tile.}\label{fig:node-tile}
\end{figure}

\begin{itemize}
\item We construct the node area formed by an equilateral triangle as it is shown in \Cref{fig:node}~(a).
\item We attach three basic channels to the node area as it is shown in \Cref{fig:node} (a); the channels are called \emph{left}, \emph{right} and \emph{bottom}, respectively, as it is shown in \Cref{fig:node-tile}.
\item To construct the left (right, respectively) channel, we use a basic channel with $\pi/6$ bend as is it is shown in \Cref{fig:bend-small}~(a). Notice that $|PR|=4+\sqrt{3}$, where $R$ is the point in the channel after the bend. To make the distances integer, we insert $2-\sqrt{3}$ gap in the basic channel (see \Cref{fig:channel-gap} (a)). Then we insert two gaps of length $1/2$ to adjust distances.
\item To construct the bottom channel, we bend the basic channel as it is it shown in \Cref{fig:bend-small}~(b) to adjust the direction. Then we make the level adjustment (see \Cref{fig:channel-twist} (b)). Note that $|PQ|=2\sqrt{3}+11$, where $Q$ is the point in the channel after the bending and level adjustment. We introduce $4-2\sqrt{3}$ gap and then we add $6$ gaps of length $1/2$ between parts of the basic channel to adjust distances and compensate that we can place more disks in bends that in a straight basic channel. 
\item The remaining space around the node area and channel is filled by disks as it is shown in \Cref{fig:node-tile}.   
 \end{itemize}
The disks with their centers in $A$, $B$ and $C$ are called \emph{poles} of the node tile. 

The properties of the node tile which follow from \Cref{obs:channel-level}, \Cref{obs:channel-gap},  and \Cref{obs:node} are summarized in the next lemma.

 \begin{lemma}\label{lem:node}
At most $(c-1)/2$ disks can be placed inside each channel and it is possible to place $(c-1)/2$ disks. Also at most one disks with its center inside the node area can be placed but it is possible to place the disk with its center in $O$.  
Furthermore, the following holds.
\begin{itemize}
\item Only disks inside the channels and the node area can be added. 
\item If there is a disk whose center is properly inside the node area, then at most $(c-1)/2$ disks can be placed in the left (right and bottom, respectively) channel. If exactly $(c-1)/2$ disks are placed, then one of the disks has its center in $X$ ($Y$ and $Z$, respectively). 
\item If there is a disk with its center in $U$ ($V$ and $W$, respectively), then at most $(c-1)/2$ disks (including the disk with its center in $U$ ($V$ and $W$, respectively)) can be placed in the left (right and bottom, respectively) channel, and if exactly $(c-1)/2$ disks are placed, then they are completely inside the tile and it is impossible to place an additional disk having its center inside the tile except disks that may be placed in other channels.
\end{itemize}
\end{lemma}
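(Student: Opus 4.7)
The plan is to verify the lemma by decomposing the node tile into its four ``active'' regions---the triangular node area $ABC$ and the three attached channels (left, right, bottom)---and then invoking the already established observations about each component. First I would argue that the outer filler disks, together with the specific choice of $c$ as a sufficiently large odd integer, force every newly added disk to have its center inside one of these four regions: the filler disks tile the complement of the active regions in the same tight way as in the filler tile, so no new disk can have its center outside. This proves the first bullet.

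Next, for each channel I would combine \Cref{obs:channel-one} with the adjustments. The left and right channels are a basic channel with a $\pi/6$ bend as in \Cref{fig:bend-small}(a), a $2-\sqrt{3}$ gap to restore integer endpoint distances, and two $1/2$ gaps for parity; since $1/2 < \sqrt{4\sqrt{3}-3}-1$, \Cref{obs:channel-gap} ensures each gap contributes no new admissible slot, and the alternating chain of \Cref{obs:channel-one} then gives capacity exactly $(c-1)/2$, with the last slot pinned to $X$ (respectively $Y$) only when the basic-channel alternation starts at the node-area end. The bottom channel uses a $\pi/3$ bend, one level adjustment (where \Cref{obs:channel-level} caps the insertion count), a $4-2\sqrt{3}$ gap and six $1/2$ gaps, and the same combination yields capacity $(c-1)/2$ with the extremal slot at $Z$. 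Finally, \Cref{obs:node} directly gives that at most one disk can have its center in the node area, achieved by placing it at $O$.

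The coupling bullets are the crux. Suppose a new disk $D$ has its center properly inside $ABC$. By the last bullet of \Cref{obs:node}, $D$ blocks any further disk inside the triangle, and by the first two bullets of \Cref{obs:node} the only choices for $D$ that are compatible with a full packing of a neighboring channel lie at $X$, $Y$, or $Z$. Taking $D$ at $X$ (say), the pair of surrounding disks of the left channel next to the node area forces, via \Cref{obs:channel-one}, the next chosen slot to alternate parity; propagating this alternation down the left channel and through the gaps (using \Cref{obs:channel-gap} to rule out disks inside gap regions) yields exactly $(c-1)/2$ slots including $X$. Conversely, if the disk in the left channel lies at $U$, then by the second bullet of \Cref{obs:node} no disk can sit properly in $ABC$, and the analogous alternation starting at $U$ again gives $(c-1)/2$ slots that lie strictly inside the tile, leaving no room for any extra disk in that channel. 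The right and bottom channels are handled symmetrically and are independent of one another once the node-area slot is fixed.

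The main obstacle is checking that the distance adjustments line up precisely: that the choices of $2-\sqrt{3}$, $4-2\sqrt{3}$, and the six $1/2$ gaps for the bottom channel exactly compensate for the extra room produced by the bends (recall $|PR| = 4+\sqrt{3}$ and $|PQ| = 2\sqrt{3}+11$) so that the poles $A$, $B$, $C$ land at integer midpoints on the tile boundary, the alternating chain lengths have the correct parity to attain capacity $(c-1)/2$, and every $1/2$ gap is strictly below the $\sqrt{4\sqrt{3}-3}-1$ threshold required by \Cref{obs:channel-gap}. These verifications are direct but tedious arithmetic in the constants $h = 2+\sqrt{3}$ and $\ell = 2\sqrt{2+\sqrt{3}}$ from \Cref{fig:channel-one} and \Cref{fig:node}(b).
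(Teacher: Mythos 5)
Your proposal is correct and follows essentially the same route as the paper, which in fact gives no explicit proof of this lemma beyond asserting that it follows from \Cref{obs:channel-level}, \Cref{obs:channel-gap} and \Cref{obs:node}; your sketch is a faithful and more detailed elaboration of exactly that derivation, using the same decomposition into the node area, the three bend/gap/level-adjusted channels, and the filler disks. The only caveats are the deferred arithmetic on gap lengths and slot parities, which the paper likewise leaves unverified, and a small notational slip where you restrict the node-area disk $D$ to the triangle slots of \Cref{fig:node}(a), whereas the second bullet of the lemma must hold for any center properly inside the node area (e.g.\ at $O$) — the alternation argument you give goes through unchanged in that generality.
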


The construction of the node tile limits the choice of the constant $c$.

\begin{observation}\label{obs:c}
The (straight, twisted, parity adjustment) channel, bend, node tiles can be constructed for $c=47$.
\end{observation}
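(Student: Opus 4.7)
The plan is a direct, tile-by-tile verification that the geometric constraints of each of the five tile constructions are satisfiable with $c = 47$. A necessary preliminary is that $c$ be odd, so that $(c-1)/2$ in \Cref{lem:node} is an integer; $47$ is odd. For each tile the checks reduce to three items: (i) the configuration fits inside a $2c \times 2c = 94 \times 94$ square; (ii) the pole positions land at integer coordinates on the tile boundary, so that tiles glue together as in \Cref{fig:emb}; and (iii) enough room is left around the key structures to fill with unit disks.

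I would handle the simpler tiles first. For the straight channel tile the basic channel of size $c$ has a centre row of horizontal extent $2(c-1) = 92$, and with the constant boundary spacing shown in \Cref{fig:channel-tile} this fits comfortably in $94$. The twisted channel tile is identical in extent, since the twist of \Cref{obs:channel-twist} is a purely local modification. The parity adjustment tile uses a basic channel of size $c-1$ together with two half-unit gaps and a level adjustment, whose total horizontal extent is still $O(1)$ less than $2c$. For the bend tile, each of the two attached basic channels, together with the central five-disk bend configuration from \Cref{obs:bend}, spans at most $c$ units along its axis, and the two half-unit gaps inserted in each channel supply exactly the flexibility needed to align the poles to integer positions on the tile boundary.

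The main obstacle is the node tile, whose three channels combine bends, level adjustments and gaps that each contribute small irrational quantities which must sum to the integer $c$. I would treat the three channels separately. For the left (and symmetrically the right) channel, the pre-bend subpath satisfies $|PR| = 4+\sqrt{3}$; the inserted $2-\sqrt{3}$ gap cancels the $\sqrt{3}$ contribution, and the two half-unit gaps produce the required integer alignment so that the remaining basic channel reaches the pole at distance $c$ from $P$ for $c = 47$. For the bottom channel, after the two $\pi/3$ bends and the level adjustment one has $|PQ| = 2\sqrt{3}+11$; the $4-2\sqrt{3}$ gap cancels the $\sqrt{3}$ part, and the six half-unit gaps compensate exactly for the extra disks that bends accommodate relative to a straight basic channel, yielding total vertical distance $c$ from $P$ down to the bottom pole. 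Once these partial sums are verified to meet at integer values $\leq c = 47$, the remaining open region around the node area and channels has clearance well above the unit diameter and can be tiled by a grid of unit disks as filler, completing the construction. The only delicate step is the arithmetic bookkeeping for the bottom channel, where the integer and $\sqrt{3}$-parts must each vanish simultaneously; everything else is a routine check.
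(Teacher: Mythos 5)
Your proposal is correct and follows essentially the same route as the paper's (very terse) proof: both arguments reduce to counting the inserted gaps in each channel of the node tile (seven in the bottom channel, three in each of the left and right channels), checking that the irrational contributions $4+\sqrt{3}$ and $2\sqrt{3}+11$ are cancelled by the $2-\sqrt{3}$ and $4-2\sqrt{3}$ gaps so the remaining lengths are integers realizable within distance $c=47$, and dismissing the straight, twisted, parity-adjustment and bend tiles as easier instances of the same bookkeeping. The only nuance the paper states that you leave implicit is that each gap must sit between basic-channel segments of size at least $2$, which is what forces $c$ to be as large as it is.
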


\begin{proof}
To construct the bottom channel in the node tile, we insert 7 gaps and a gap may be inserted between basic channels of size at least 2 (see \Cref{fig:channel-gap}~(a) and \Cref{obs:channel-gap}). Then taking into account the distance between the points $O$ and $Q$ in \Cref{fig:node-tile} and the number of gaps, we obtain that the bottom channel can be constructed for $c=47$. As for constructing the left and right channels in the node tile, we insert 3 gaps, we also have that they can be constructed for $c=47$.
By similar arguments, we also can construct the  (straight, twisted, parity adjustment) channel tile and the bend tile if $c=47$.
\end{proof}

\subparagraph{Proof of \Cref{thm:compl-hard}: The final step.}
Now we have all ingredients to finish the hardness proof for \probCPack.



\medskip
Recall that we prove \classNP-hardness by reducing from \textsc{Independent Set} on planar cubic graphs~\cite{GareyJ79,Mohar01}. Let $(G,k)$ be an instance of \textsc{Independent Set} where $G$ is an $n$-vertex planar cubic graph. Let us remind the initial steps.  By \Cref{thm:emb},  we can construct a rectilinear embedding of $G$ with area $\Oh(n^2)$ in linear time.  Further, we modify the embedding to ensure that the length of every segment of a broken line representing an edge in the embedding is at least three. 
As we already pointed, this can be done by replacing every vertex or bend point $(x,y)$ of the embedding by the point $(3x,3y)$ and the corresponding adjustment of the segments in the broken lines. After this modification we still have an embedding with $\Oh(n^2)$ area.   We assume that $R=[0,a]\times[0,b]$ for $a,b\in\mathbb{N}$ is the minimum area rectangle containing the embedding (note that $a,b>0$ because $G$ is cubic and cannot be embedded on the line). 

We define  $a'=2ca$ and $b'=2cb$, where $c=47$, and set $R'=[0,a']\times [0,b']$ defining the rectangle in the output instance of  \probCPack.  Then we put tiles into $R'$ as follows.
\begin{itemize}
\item For every $(x,y)\in R$ such that the point $(x,y)$ is not a point of the embedding of $G$, put a copy of the filler tile whose bottom left corner in $(2cx,2cy)$. 
\item For every $(x,y)\in R$ such that $(x,y)$ is  a vertex of $G$ in the embedding, put a copy of the node tile with the bottom left corner in $(2cx,2cy)$. We rotate the node tile in such a way that the directions of the left, right and bottom channels coincide with the directions of line segments of the embedding  with the endpoints in $(x,y)$; note that because the distance between any two vertices in the embedding is at least three, the poles of distinct node tiles do not interfere with each other. 
\item For every $(x,y)\in R$ such that $(x,y)$ is a bend node in the embedding of an edge, put a copy of the  bend tile with the bottom left corner in $(2cx,2cy)$. We rotate the tile in such a way that the directions of the left and bottom channels coincide with the directions of line segments of the embedding  with the endpoints in $(x,y)$. Again we note that because the distance between a bend point and another bend point or a vertex in the embedding is at least three, there is no intersections between the poles of constructed tiles. 
\item For every edge $e\in E(G)$, let $P_e$ be the set of internal non-bending integer points of the embedding of $e$. 

We select an arbitrary point $(x,y)\in P_e$ and insert a copy $T$ of the parity adjustment channel tile  with the bottom left corner in $(2cx,2cy)$. We rotate $T$ in such a way that the direction of its channel  coincides with the direction of line segments of the embedding  containing  $(x,y)$. Notice that because the length of every segment of a broken line representing an edge in the embedding is at least three, $T$ may be adjacent to at most one already placed tile $T'$ whose pole intersects $T$. If such a  pole of $T'$ has the same center as the corresponding pole of $T$, we unify these disks. Otherwise, if the poles have distinct centers, we reflect $T$ to ensure that the poles have the same centers and unify them.  

For every other point $(x,y)\in P_e$, we insert a tile $T$ which is a copy of either straight or twisted channel tile. We rotate $T$ to have the same direction of the channel as the direction of the segment of the line containing $(x,y)$ in the embedding.  Observe that $T$ can have ether one or two adjacent already placed tiles whose poles intersect $T$. 
If $T$ is not adjacent to any such a tile, we select $T$ be a copy of the straight tile.
If $T$ is adjacent to one such a tile $T'$, then we select $T$ be a copy of the straight channel tile. Then we ether identify the interfering poles of $T$ and $T'$ if they have the same centers or reflect $T$ and identify the poles afterwards. Suppose that $T$ is adjacent to two tiles $T'$ and $T''$ with interfering poles. If the poles are on the same side with respect to the channel, then we choose $T$ be a copy of the straight channel tile. Then we ether identify the interfering poles of $T$ and $T'$ if they have the same centers or reflect $T$ and identify the poles afterwards. Otherwise, we select $T$ be a copy of the twisted channel tile and reflect $T$ if necessary to identify the poles.

The construction  of the tiles for $(x,y)\in P_e$ is shown in \Cref{fig:edge}~(a). 
\end{itemize}

 \begin{figure}[h!]
\centering
\scalebox{0.6}{
\input{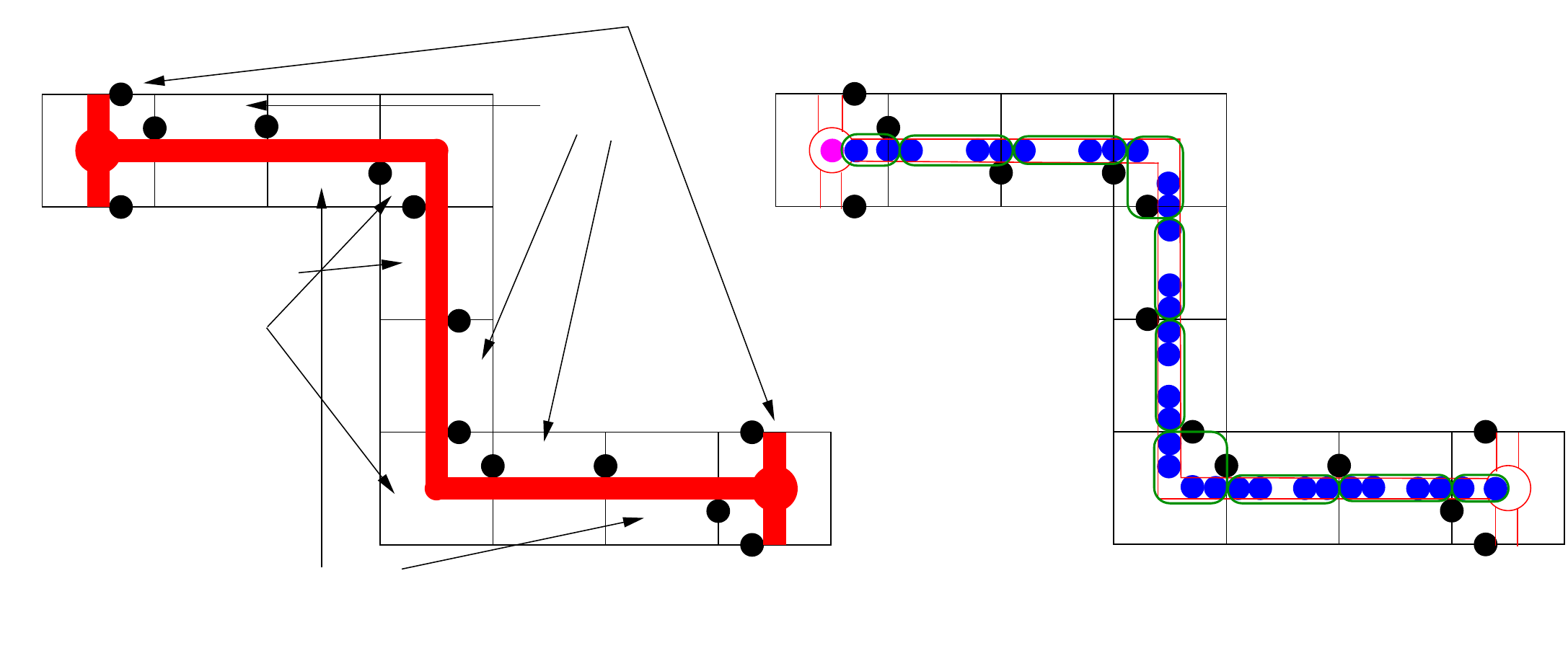_t}
}
\caption{The construction of tiles for an edge and placement of tiles. The node areas and channels are shown in red and the poles are shown by black bullets. The placement of disk in tiles associated with the edge is shown in blue and the disk that may be placed in the center of $T_u$ is shown in magenta.}\label{fig:edge}
\end{figure}

We define $\cP$ be the set of all disks in the tiles (taking into account identifications of poles). By the construction, $\cP$ is a packing of disks inside $R'$.  

Clearly, we have $n$ node tiles. Denote by $n_b$ the number of bend tiles, by $n_p=|E(G)|$ the number of parity adjustment channel tiles, and by $n_c$ the number of straight and twisted channel tiles. We set $k'=k+3n(c-1)/2+(c-2)n_b+(c-1)n_p+cn_c$. 

We clam that $G$ has an independent set of size of size at least $k$ if and only if $(R',\cP,k')$ is a yes-instance of \probCPack.

For every edge $e$ of $G$, denote by $n_b^e$ the number of bend tiles and by $n_c^e$ the number of straight and twisted tiles in the set of tiles corresponding to the embedding of $e$.

For the forward direction, assume that $G$ has an independent set $S$ of size $k$. For every vertex $v\in S$, we consider the node tile $T_v$ corresponding to $v$ and place a disk having its center in the center of the tile (point $O$ in \Cref{fig:node-tile}). 
Consider an edge $e=uv$ of $G$.
Because $S$ is an independent set $u\notin S$ or $v\notin S$. Assume without loss of generality that $v\notin S$ and it may happen that $u\in S$. Then we can insert $(c-1)/2$ disks in the channels of $T_v$ and $T_u$ corresponding to $e$ by \Cref{lem:node}, $c-2$ disks per each bend tile by \Cref{lem:bend}, $c-1$ disks in the unique parity adjustment channel tile by \Cref{lem:channel-parity}, and $c$ disks per each straight or twisted channel tile by \Cref{lem:channel} as it is shown in \Cref{fig:edge}~(b) (we associate a pole disk shared by tiles with the first tile containing it along $e$ if moving from $u$ to $v$). Thus, we placed $2(c-1)/2+(c-2)n_b^e+(c-1)+cn_c^e$ disks. 
 Summarizing over all edges and taking into account the disks corresponding to the vertices of $S$, we obtain that we placed $k'=k+3n(c-1)/2+(c-2)n_b+(c-1)n_p+cn_c$  disks.  

For the opposite direction, assume that at least $k'$ disk can be placed in $R'$ to complement the packing $\cP'$. By \Cref{lem:channel,lem:node}, new disks can be only placed inside channels and node areas of the tiles.  
Let $\cS$ be a packing of $k'$ disks in $R'$ disjoint with the disks of $\cP$ such that the number of disks in $\cS$ whose centers are properly inside of  the node areas of node tiles is minimum. 

Consider an edge $e=uv$ of $G$. By  \Cref{lem:channel,lem:node}, at most $2(c-1)/2+(c-2)n_b^e+(c-1)+cn_c^e$ disks can be placed in  
the channels of $T_v$ and $T_u$ corresponding to $e$, the bend tiles, the unique parity adjustment channel tile and all straight or twisted channel tiles (see  \Cref{fig:edge}~(b) for an illustration). Moreover, for each $w\in V(G)$, at most one disk of $\cS$ can have its center properly inside of the node area of $T_w$. 
Summarizing over all edges, we conclude that at least $k$ disks of $\cS$ have their centers in the node areas of node tiles and for every $w\in V(G)$, at most one of these disks has its center inside of the node area of $T_w$. 

Suppose that there are two disks in $\cS$ such that their  
centers are inside of the node areas of $T_u$ and $T_v$.  Then by  \Cref{lem:channel,lem:node}, we conclude that at most 
$2(c-1)/2+(c-2)n_b^e+(c-1)+cn_c^e-1$ disks are placed in  the channels of 
 $T_v$ and $T_u$ corresponding to $e$ and other tiles associated with $e$. Then by \Cref{obs:node}, we can relocate the disk with its center in the node area of $T_v$ and move it to the channel of $T_v$ associated with $e$. Then we still would be able to place $2(c-1)/2+(c-2)n_b^e+(c-1)+cn_c^e$ disks by the same arguments as in the proof for the forward direction. This means that the relocation does not decrease the number of added disks. However, this contradicts our assumption about the choice of $\cS$ as we decrease 
 the number of disks with centers that are properly inside of  the node areas of node tiles. Hence, for every $e=uv$, there is no disk with its center in the node area of $T_u$ or $T_v$. 

Let $S\subseteq V(G)$ be the set of all vertices $w$ such  
that the node tile $T_w$ has a disk of $\cS$ with its center inside of the node area. We conclude that $S$ is an independent set $G$ of size at least $k$. This completes the proof of our claim.

The number of disks in each tile is at most $c^2$. This implies that each tile can be constructed in constant time.  The area of the rectilinear embedding of $G$ constructed  by the algorithm from \Cref{thm:emb} is $\Oh(n^2)$. Therefore, we construct $\Oh(n^2)$ tiles. Since the algorithm from \Cref{thm:emb}  is polynomial, we conclude that the instance $(R',\cP,k')$ of     \probCPack is constructed in polynomial time. This completes the proof \Cref{thm:compl-hard}.

\medskip

We proved \Cref{thm:compl-hard} assuming the real RAM model. In particular, to construct tiles we used disks with algebraic coordinates of their centers. However, we can observe that our construction is robust to allow rounding of coordinates. More precisely, we can choose a sufficiently small constant $\delta>0$ and use  rational parameters $\hat{h}$ and $\hat{h}'$ such that $2+\sqrt{3}=h<\hat{h}\leq h+\delta$ and $2\sqrt{3}=h'<\hat{h}'\leq h'+\delta$ 
in the construction of the basic channels (see \Cref{fig:channel-one}) and the node areas (see \Cref{fig:node}) instead of $h$ and $h'$, respectively. This observation  allows to obtain the following corollary.

\begin{corollary}\label{cor:compl-hard-rational}
	\probCPack is \classNP-hard when constrained to the instances $(R,\cP,k)$ where the centers all disks in $\cP$ have rational coordinates. Furthermore, the problem remains \classNP-hard when it is only allowed to add new disks to $\cP$ with rational coordinates of their centers.  
\end{corollary}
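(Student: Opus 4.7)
The plan is to revisit the construction used in the proof of \Cref{thm:compl-hard} and argue that every algebraic quantity appearing in it can be replaced by a sufficiently close rational number without breaking the reduction. I would first enumerate the irrational parameters appearing in the tiles: the basic-channel height $h=2+\sqrt{3}$ (\Cref{fig:channel-one}), the node-area height $h'=2\sqrt{3}$ (\Cref{fig:node}), the side lengths $\ell=2\sqrt{2+\sqrt{3}}$ and $2\sqrt{2+\sqrt{2}}$ inside the bend (\Cref{fig:bend-tile}), and the alignment gaps of size $2-\sqrt{3}$ and $4-2\sqrt{3}$ used inside the node tile. All the remaining coordinates (the grid coming from the rectilinear embedding, the tile size $2c=94$, the fine adjustment gaps of size $1/2$, and the integer coordinates of the poles within each tile) are already rational.

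Next, I would fix a single small rational $\delta>0$ once and for all, and replace each of the irrational quantities above by a rational number within $\delta$ of the original, chosen so that channels and node areas are ever-so-slightly dilated (i.e.\ $\hat h>h$ and $\hat h'>h'$). A global choice of $\delta$ guarantees that poles shared between neighboring tiles remain identified, and, because $\delta$ can be taken arbitrarily small, the disks of $\cP$ remain pairwise disjoint after the perturbation. The correspondingly modified centers of all disks in $\cP$ are then rational.

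The main verification step is then to revisit \Cref{obs:channel-one,obs:channel-twist,obs:channel-level,obs:channel-gap,obs:bend,obs:node} and check that each continues to hold for the perturbed configuration. Each of these observations is an \emph{open} statement: it says that no unit disk can have its center in some forbidden region that has nonempty interior. So a sufficiently small perturbation of the surrounding disks keeps the region forbidden. The delicate condition is the gap-insertion bound $s<\sqrt{4\sqrt{3}-3}-1\approx 0.98$ in \Cref{obs:channel-gap}: the gaps actually used in the construction have sizes $1/2$ and $2-\sqrt{3}\approx 0.27$, both strictly bounded away from the threshold, so after replacing $\sqrt{3}$ by a sufficiently close rational they remain below it. I expect this numerical slack check, done for each observation in turn, to be the main source of bookkeeping in the argument. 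Once it is done, Lemmas \ref{lem:channel}, \ref{lem:channel-parity}, \ref{lem:bend} and \ref{lem:node} apply to the rational tiles verbatim.

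Finally, the equivalence between independent sets of $G$ of size $k$ and valid disk additions carries over with no further work. The forward direction places each new disk at a rational point (the center of a node area, or one of the positions $X_i,Y_i$ inside a channel, both of which become rational after the perturbation), so the \emph{added} disks also have rational centers. The backward direction is purely combinatorial and depends only on the preserved observations. This yields the first statement of the corollary and, simultaneously, the second: restricting additions to disks with rational centers does not change the answer, since whenever a solution exists, the explicit rational solution exhibited by the forward direction is one.
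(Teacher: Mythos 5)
Your proposal is correct and follows essentially the same route as the paper: the authors also justify the corollary by a rounding/perturbation argument, choosing rational parameters $\hat h$ and $\hat h'$ with $h<\hat h\leq h+\delta$ and $h'<\hat h'\leq h'+\delta$ and observing that the construction is robust to this change. Your write-up is in fact more explicit than the paper's one-paragraph remark — in particular in enumerating all the irrational quantities and in checking the slack against the threshold $\sqrt{4\sqrt{3}-3}-1$ in \Cref{obs:channel-gap} — but the underlying idea is identical.
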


In \probPack, the area of the packing of disks is bounded by a rectangle $R$.  However, we can consider different types of boundaries. In particular, the boundary can be defined by disks in a given packing. Then we can obtain the following corollary.

\begin{corollary}\label{cor:compl-hard-diks}
	Given a rectangle $R=[0,2a]\times [0,2b]$ for integers $a,b>0$, a packing $\cP$ of disks with their centers inside $R$ such that (i) for every $i\in\{0,\ldots,a\}$, the disks with centers $(2i,0)$ and 
	$(2i,2b)$ are in $\cP$ and (ii) for   every $j\in\{0,\ldots,b\}$, the disks with centers $(0,2j)$ and 
	$(2a,2j)$ are in $\cP$, and an integer $k\geq 0$, it is \classNP-hard to decide whether $k$ disks with their centers in $R$ can be added to $\cP$ to form a packing. Moreover, the problem remains hard if $a=b$.
\end{corollary}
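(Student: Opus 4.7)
The plan is to reduce from \Cref{thm:compl-hard}. Let $(R,\cP,k)$ be the instance produced by the construction in that theorem's proof, so that $R=[0,2ca]\times[0,2cb]$ with $c=47$ and the packing $\cP$ consists of node, bend, channel, parity-adjustment, and filler tiles. As a preliminary step I would enlarge $R$ by one grid row/column of filler tiles on every side, producing an equivalent instance in which the entire outer boundary of $R$ is covered by filler tiles only; this step preserves the answer since filler tiles admit no additional disks inside them.

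Next, I set $A=\max(ca,cb)+1$ and $\hat R=[0,2A]\times[0,2A]$. I would translate $\cP$ by $(1,1)$ so that its image $R^{\circ}=[1,2ca+1]\times[1,2cb+1]$ sits inside $\hat R$. Under this translation, the disks of the outermost ring of filler tiles, originally centered at odd integer positions $(2cx+2i+1,2cy+2j+1)$, end up centered on the even integer sublattice $(2\mathbb{Z})^{2}$. I would then place the fence disks at all points $(2i,0)$, $(2i,2A)$, $(0,2j)$, $(2A,2j)$ on $\partial\hat R$ demanded by the corollary (these too lie on $(2\mathbb{Z})^{2}$), and fill the ``moat'' $\hat R\setminus R^{\circ}$ with additional unit disks centered at every remaining lattice point of $(2\mathbb{Z})^{2}\cap(\hat R\setminus\mathrm{int}(R^{\circ}))$. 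Because all these centers lie on a common $(2,2)$-lattice and the outermost filler disks of $\cP$ have been aligned to the same lattice, pairwise distances between any two centers are at least $2$ and the whole collection $\hat{\cP}$ is a valid packing.

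The main obstacle is the geometric verification that $\hat{\cP}$ blocks any further unit disk in $\hat R\setminus R^{\circ}$. For this I would use the elementary fact that in a square $(2,2)$-lattice packing of unit disks, every point of the plane is within distance $\sqrt{2}<2$ of some lattice point; the residual empty cells are inscribed by a circle of radius $\sqrt{2}-1<1$. Applied to any candidate point $(x,y)$ in the moat, the nearest $(2,2)$-lattice point is the center of a disk already placed (a fence disk, a moat-filling disk, or an outermost translated filler disk of $\cP$), so a new unit disk at $(x,y)$ would intersect that existing disk. Hence every added disk in a solution to $(\hat R,\hat{\cP},k)$ must have its center in $R^{\circ}$, and by the translation it corresponds to a legitimate added disk in the original instance; the converse direction is immediate, since any solution of $(R,\cP,k)$ can be translated by $(1,1)$ into $\hat R$ and is automatically disjoint from the fence and moat (which lie outside $R^{\circ}$). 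Finally, the square case $a=b$ falls out of the choice $A=\max(ca,cb)+1$, and the whole construction is polynomial in the input size, giving the required polynomial-time reduction.
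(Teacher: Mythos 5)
The paper states this corollary without an explicit proof, offering only the remark that the rectangle boundary ``can be defined by disks in a given packing''; your fence-and-moat construction on the $(2,2)$-lattice is exactly the intended argument, and your key observation is sound: every point of the plane lies within distance $\sqrt{2}<2$ of a point of $(2\mathbb{Z})^2$, so a region whose even lattice points all carry disks of the packing admits no new disk center. This correctly rules out new centers in the moat and in the padding band of filler tiles, the alignment of the outermost filler disks with the fence lattice is verified correctly, and the handling of $a=b$ and polynomiality are fine.

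There is, however, one step that needs more care, and it already surfaces at your padding step. \Cref{thm:compl-hard} concerns disks \emph{contained in} $R$, whereas the corollary only requires new disks to have their \emph{centers} in $R$. Consequently there is a thin annulus of candidate centers at distance less than $1$ inside the boundary of the original construction (equivalently, inside the old $\partial R$ after padding): in the source instance such disks are excluded purely by the containment requirement, but in your target instance they must be excluded by disks of the packing. Your justification that ``filler tiles admit no additional disks inside them'' only covers centers lying \emph{in} the padding tiles, and the padding filler centers alone do not block these annulus points (a center at perpendicular distance $0.99$ inside the interface, offset by $1$ along it, is at distance about $\sqrt{1.99^2+1}\approx 2.23$ from the nearest padding disk). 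What saves the argument is that every tile of the \Cref{thm:compl-hard} construction --- not just the filler tiles --- has its outer edge lined with disks at the odd lattice points of its $2c\times 2c$ square (the channels and node areas sit in the tile interiors), so the outermost rows of the original construction together with the first padding row form a complete $(2,2)$-lattice band straddling the interface, and your $\sqrt{2}$ argument then kills the annulus as well. You should state this property of the boundary tiles explicitly; without it the forward direction of your reduction (``every added disk corresponds to a legitimate added disk in the original instance'') does not follow.
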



\newcommand{\hly}[1]{\sethlcolor{yellow}\hl{#1}\sethlcolor{yellow}}

\section{An FPT algorithm for  \probPack}\label{sec:repack-fpt} 
In this section, we prove  
that \probPack is \classFPT when parameterized by $k+h$.
\themainfpt*

 \subparagraph{Proof of \Cref{thm:repack-fpt}: Overview.}

On a high-level, the idea behind the algorithm is as follows. We first perform a greedy procedure to ensure that all ``free'' areas to place 
disks can be intersected by a set $\cal H$ of at most $k$ 
disks. Afterwards, we make use of a coloring function of $\cal P$ with the objective to color all disks in $\cal P$ that are repacked by a solution (if one exists) blue, and all disks in $\cal P$ that ``closely surround'' them by red. We need to ensure that, while relying on the initial greedy procedure, it would suffice to correctly color only $\Oh(h+k)$ disks. Indeed, this gives rise to the usage of a universal set, which is a ``small'' family of coloring functions ensured to contain, if there exists a solution, at least one coloring function that correctly colors all $\Oh(h+k)$ disks we care about.

Considering some coloring function (which 
expected to be  
 ``compatible'' with some solution), we identify ``slots'' and, more generally, ``containers'' in its coloring pattern. In simple words, a slot is just a 
disk in $R$ that does not intersect any red disk (from $\cal P$), and a container is a maximally connected region consisting of slots. We are able to prove that, if the coloring is compatible with some solution, then, for any container, either all or none of the disks in ${\cal P}$ that are contained in the container are repacked. This gives rise to a reduction from the problem of finding a solution compatible with a given coloring to the {\sc Knapsack} problem (more precisely, an extended version of it), where each container corresponds to an item whose weight is the number of disks in ${\cal P}$ that it contains, and whose value is the number of 
disks that can be packed within it. 

To execute the reduction described above, we need to be able to compute the value of each container. For this purpose, we first prove that a container can be ``described'' by only $\Oh(h+k)$ many disks from ${\cal P}\cup{\cal H}$; more precisely, we show that each container is the union of 
disks contained in $R$ that intersect at least one out of $\Oh(h+k)$ disks in ${\cal P}\cup{\cal H}$, from which we subtract the union of some other $\Oh(h+k)$ disks from ${\cal P}$. Having this at hand, to compute the value of a container, we first ``guess'', for each 
disk packed by  a (hypothetical) optimal packing of 
disks in the container, a disk from ${\cal P}\cup{\cal H}$ contained in the container (making use of its description) with whom it intersects. After that, we seek the corresponding optimal packing by making use of a system of polynomial equations (inequalities) of degree $2$, $\Oh(h+k)$ variables, and $\Oh((h+k)^2)$ equations.

\subparagraph{Proof of \Cref{thm:repack-fpt}: Free areas.}
To execute the plan above, we start with the task of handling the ``free'' areas. For this, we have the following definition and immediate observation.

\begin{definition}[{\bf Holes and Hole Cover}]
Let $({\cal P}, R,h,k)$ be an instance of \probPack. The {\em set of holes}, denoted by $\mathsf{Holes}$, is the set of all 
disks contained in $R$ that are disjoint from all disks in $\cal P$. A set ${\cal H}$ of 
disks contained in $R$ such that the set of holes of $({\cal P}\cup{\cal H}, R,h,k)$ is empty is called a {\em hole cover}.
\end{definition}

\begin{figure}
	\begin{minipage}{0.46\textwidth}
		\centering
		\vspace{-0.8cm}
		\includegraphics[scale=0.6]{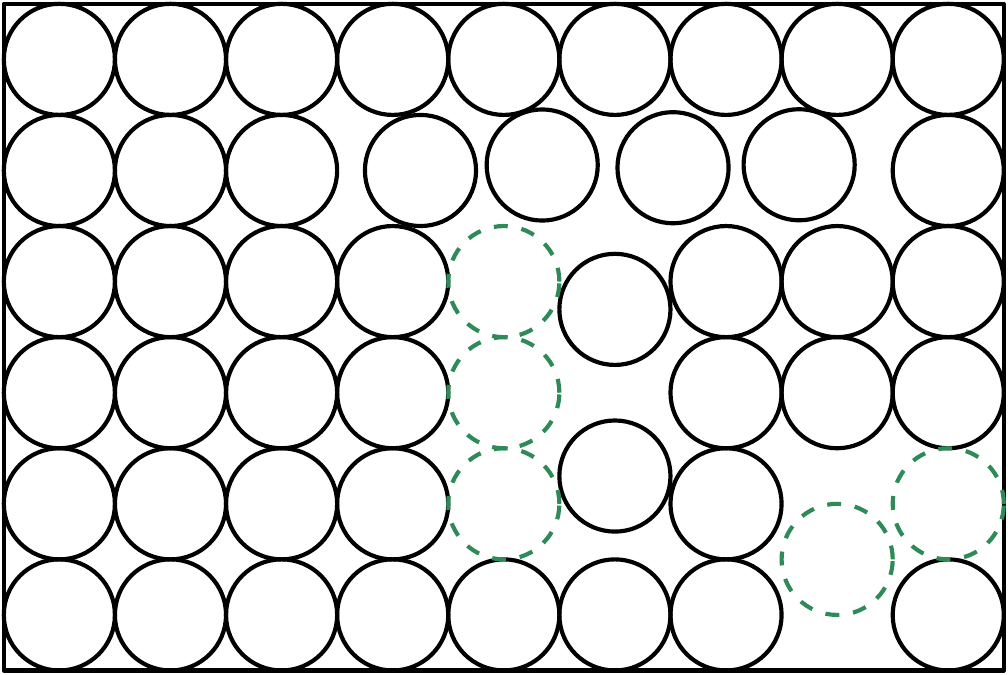}
		\caption{An instance $(\mathcal{P}, R, h = 2, k = 7)$ of \probPack. The disks in $\mathcal{P}$ are colored black. The disks in some hole cover $\mathcal{H}$ are colored green (using dashed lines).} \label{fig:repacking1}
	\end{minipage}\qquad
	\begin{minipage}{0.46\textwidth}
		\centering
		\includegraphics[scale=0.6]{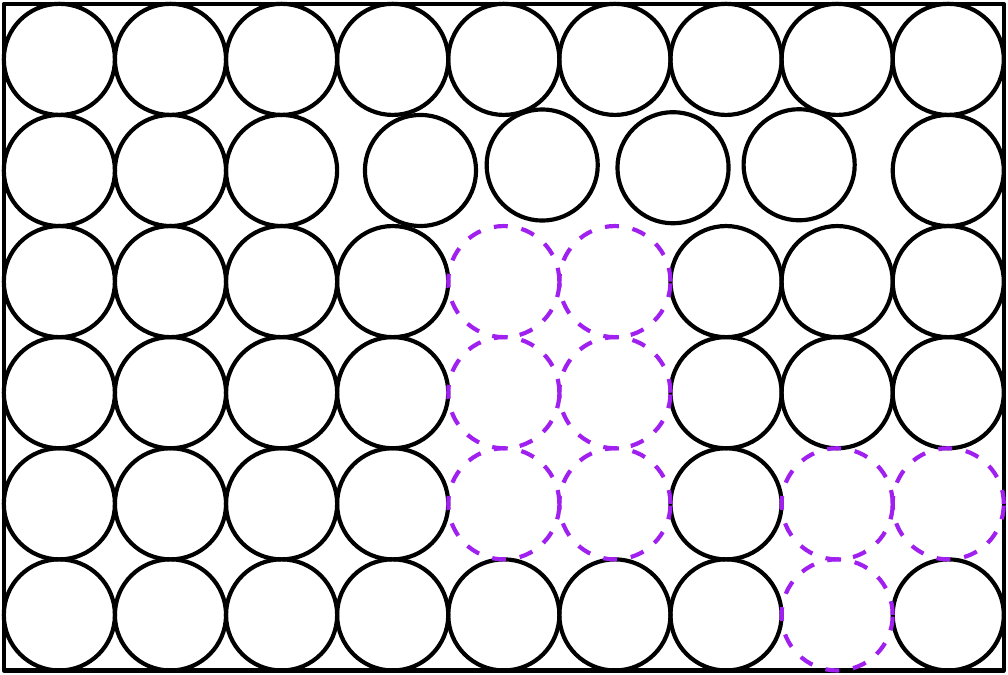}
		\caption{A solution $\mathcal{P}^*$ for the instance on the left. The disks in $\mathcal{P}^* \setminus \mathcal{P}$ are drawn in purple (using dashed lines).The set of $(\mathcal{H}, \mathcal {P}^*)$-critical disks is the set of green disks from the figure on the left and the purple disks from the figure on the right.} \label{fig:repacking2}
	\end{minipage}
\end{figure}

\begin{figure}
	\begin{minipage}{0.46\textwidth}
		\centering
		\includegraphics[scale=0.6]{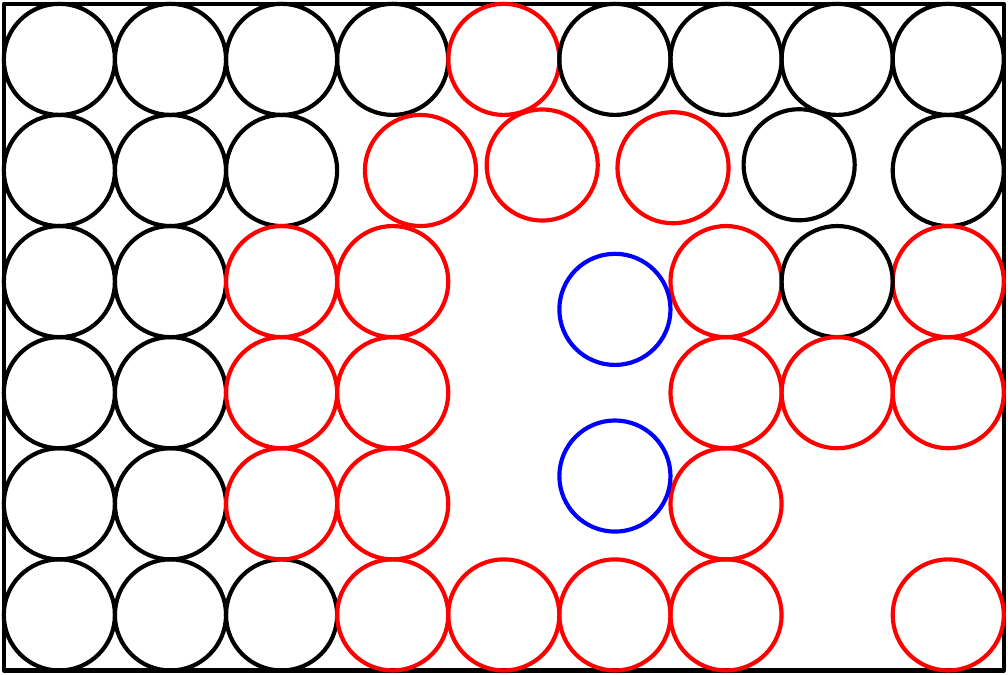}
		\caption{With respect to the instance and solution described in Figures \ref{fig:repacking1} and \ref{fig:repacking2}, the disks $(\cal{H}, \cal{P}^*)$-forced to be blue are colored blue, and the disks $(\mathcal{H},\mathcal{P}^*)$-forced to be red are colored red. Note that each of the disks colored black can be colored either blue or red by an $(\mathcal{H},\mathcal{P}^*)$-compatible coloring.} \label{fig:repacking3}
	\end{minipage}\qquad
	\begin{minipage}{0.46\textwidth}
		\centering
		\vspace{-0.8cm}
		\includegraphics[scale=0.6]{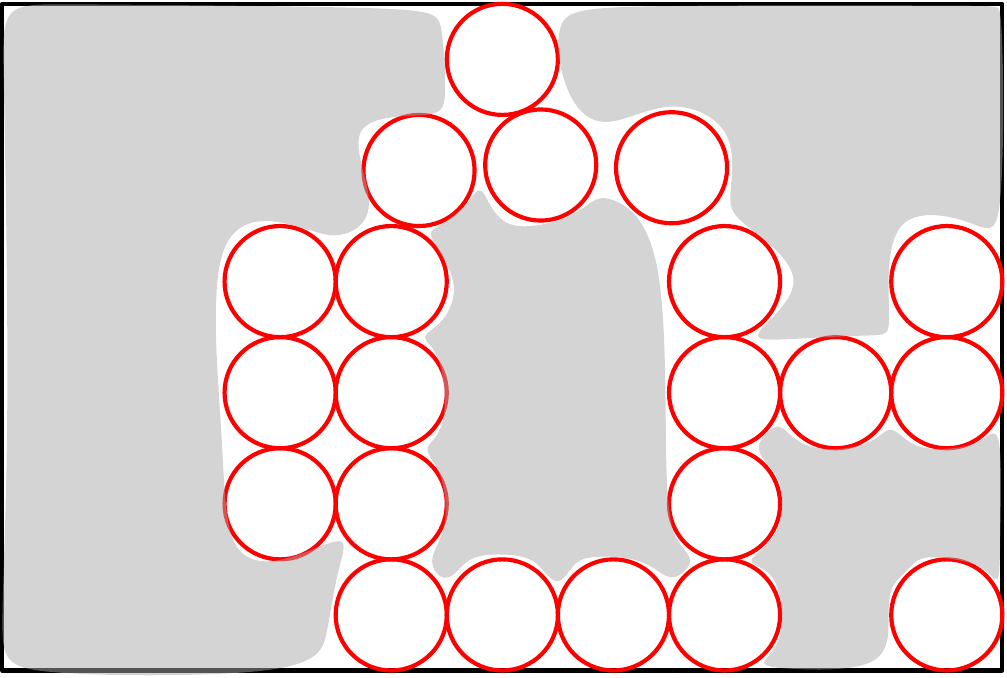}
		\caption{Consider an $(\mathcal{H}, \mathcal {P}^*)$-compatible coloring that colors blue all of the disks colored black in Figure \ref{fig:repacking3}. Then, we have four $c$-Containers, which roughly correspond to the areas colored by grey.} \label{fig:repacking4}
	\end{minipage}
\end{figure}

\begin{observation}\label{obs:cover}
Let $({\cal P}, R,h,k)$ be an instance of \probPack.  Let ${\cal H}$ be a hole cover. Then, every 
disk contained in $R$ intersects at least one disk in ${\cal P}\cup{\cal H}$.
\end{observation}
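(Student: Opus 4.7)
The statement is essentially a direct unpacking of the definition of a hole cover, so my plan is to argue it in one short step via contrapositive. First I would recall what the definition gives us: since $\mathcal{H}$ is a hole cover, the instance $(\mathcal{P}\cup\mathcal{H}, R, h, k)$ has an empty set of holes, meaning there is no unit disk contained in $R$ that is disjoint from every disk in $\mathcal{P}\cup\mathcal{H}$.

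Next I would take an arbitrary unit disk $D$ contained in $R$ and argue by contradiction. Suppose $D$ intersects no disk in $\mathcal{P}\cup\mathcal{H}$. Then $D$ is disjoint from every disk in $\mathcal{P}\cup\mathcal{H}$, so by the definition of $\mathsf{Holes}$ applied to the instance $(\mathcal{P}\cup\mathcal{H}, R, h, k)$, the disk $D$ belongs to its set of holes. This contradicts the assumption that $\mathcal{H}$ is a hole cover for $(\mathcal{P}, R, h, k)$. Hence every unit disk contained in $R$ intersects some disk in $\mathcal{P}\cup\mathcal{H}$, as required.

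\textbf{Main obstacle.} There is no real obstacle here; the observation is definitional and requires no geometric argument beyond noting that ``disjoint from every disk in $\mathcal{P}\cup\mathcal{H}$'' is exactly the membership condition for being a hole of the augmented instance. The only thing worth being careful about is consistency of terminology, namely that ``every disk contained in $R$'' refers to unit disks (matching the paper's convention that ``disk'' means open unit disk) so that the notion of hole applies verbatim.
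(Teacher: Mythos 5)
Your proof is correct and matches the paper's treatment: the paper states this as an immediate consequence of the definition of a hole cover and gives no further argument, and your contrapositive unpacking of ``empty set of holes for $(\mathcal{P}\cup\mathcal{H},R,h,k)$'' is exactly that definitional step.
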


Next, we present a definition and a lemma that will allow us to assume that we have a hole cover of small size at hand. 

\begin{definition}[{\bf Dense instance}]
Let $({\cal P}, R,h,k)$ be an instance of \probPack. We say that the instance is {\em dense} if  it has a hole cover of size smaller than $k$.
\end{definition}

\begin{lemma}\label{lem:computeHoleCover}
There exists a polynomial-time algorithm that, given an instance $({\cal P}, R,h,k)$ of \probPack,  either  correctly determines that $({\cal P}, R,h,k)$ is a yes-instance or correctly determines that $({\cal P}, R,h,k)$ is dense and returns a hole cover of size smaller than $k$. 
\end{lemma}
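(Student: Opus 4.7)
The plan is to construct $\mathcal{H}$ greedily, one disk at a time, by repeatedly searching for a ``hole'' (a unit disk contained in $R$ disjoint from all currently placed disks). Initializing $\mathcal{H} = \emptyset$, I would iterate as follows for at most $k$ rounds: test whether a hole exists in $\mathcal{P} \cup \mathcal{H}$, and, if so, add one such hole to $\mathcal{H}$. If the test ever fails, I halt and return $\mathcal{H}$; if all $k$ rounds succeed in adding a disk, I instead halt and report ``yes-instance''.

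For correctness, note that each newly added disk is by construction contained in $R$ and disjoint from every disk currently in $\mathcal{P} \cup \mathcal{H}$, so $\mathcal{P} \cup \mathcal{H}$ remains a valid packing throughout. If I report ``yes-instance'', then $|\mathcal{H}| = k$ and $\mathcal{P} \cup \mathcal{H}$ is a packing of $n + k$ disks in $R$; this directly witnesses a yes-instance of \probPack\ for any $h \ge 0$, since no disk of $\mathcal{P}$ has been relocated. If instead I return $\mathcal{H}$, then the hole search failed, so by the definition of a hole the augmented instance $(\mathcal{P} \cup \mathcal{H}, R, h, k)$ has no hole; hence $\mathcal{H}$ is a hole cover. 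Because the failure occurred before the $k$-th disk was added, $|\mathcal{H}| < k$, certifying that the instance is dense.

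The main thing left to verify is that the hole-search subroutine runs in polynomial time. Here I would use the open/closed bookkeeping that an open unit disk centered at $(x,y)$ lies in $R = [0,a] \times [0,b]$ iff $1 \le x \le a-1$ and $1 \le y \le b-1$, and that two open unit disks are disjoint iff their centers are at Euclidean distance at least $2$. Existence of a hole thus becomes the feasibility of an existential first-order sentence over the reals in $\ell = 2$ variables with $s = O(|I|)$ polynomial constraints of degree at most $d = 2$, and \Cref{prop:polyequations} together with the sampling algorithm of~\cite{basu06} decides feasibility and produces a witness $(x^*, y^*)$ in polynomial time. Since the outer loop runs at most $k \le |I|$ times, the whole procedure is polynomial. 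I do not expect a serious obstacle here; as a self-contained alternative to \Cref{prop:polyequations}, one can argue that whenever the feasible set of centers is non-empty it must contain one of the $O((n+k)^2)$ ``critical'' points --- the four corners of the inner rectangle $[1,a-1] \times [1,b-1]$, the intersections of its boundary with the radius-$2$ circles around the existing centers, and the pairwise intersections of those circles --- and enumerate and test them directly.
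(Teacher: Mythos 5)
Your proposal is correct and follows essentially the same route as the paper: a greedy hole-finding loop whose existence test is encoded as a degree-2 polynomial system in two variables and decided via \Cref{prop:polyequations}, with the same case split on whether $k$ holes are found. The only (harmless) differences are that you stop as soon as $k$ holes are collected rather than running the loop to exhaustion and then checking $|{\cal H}|\ge k$, and that you additionally sketch an elementary critical-point enumeration as a fallback.
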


\begin{proof}
We perform a simple greedy procedure. Initially, ${\cal H}=\emptyset$. Then, as long as there exists a 
disk $D$ contained in $R$ that is disjoint from all disks in ${\cal P}\cup{\cal H}$, we add such a disk $D$ to $\cal H$. The test for the existence of such a $D$ can be performed by using a system of polynomial equations of  degree $2$ with two variables denoting the $x$- and $y$-coordinates of the center of $D$. For each 
disk in ${\cal P}\cup{\cal H}$, we have an equation enforcing that the distance between its center and the center of $D$ is at least $2$, and additionally we have two linear equations to enforce that $D$ is contained in $R$. By \Cref{prop:polyequations}, testing whether this system has a solution (which corresponds to the sought disk $D$) can be done is polynomial time. Once the process terminates, the algorithm checks whether $|{\cal H}|\geq k$. If the answer is positive, then adding ${\cal H}$ (or, more precisely, any subset of size $k$ of it) to $\cal P$ is a solution, and so the algorithm answers yes, and otherwise the instance is dense and the algorithm returns $\cal H$ (which witnesses that).
\end{proof}

In the two following definitions, we identify the coloring functions that will be useful. 

\begin{definition}[{\bf $({\cal H},{\cal P}^*)$-Critical Disks}]
Let $({\cal P}, R,h,k)$ be a yes-instance of \probPack. Let ${\cal H}$ be a hole cover. Let ${\cal P}^*$ be a solution to $({\cal P}, R,h,k)$. The set of {\em $({\cal H},{\cal P}^*)$-critical disks}, denoted by $\mathsf{Crit}_{{\cal H},{\cal P}^*}$, is $({\cal P}^*\setminus {\cal P})\cup{\cal H}$.
\end{definition}

\begin{definition}[{\bf $({\cal H},{\cal P}^*)$-Compatible Colorings}]
Let $({\cal P}, R,h,k)$ be a yes-instance of \probPack. Let ${\cal H}$ be a hole cover.  Let ${\cal P}^*$ be a solution to $({\cal P}, R,h,k)$. Let $c: {\cal P}\rightarrow \{\mathsf{blue},\mathsf{red}\}$. We say that $c$ is {\em $({\cal H},{\cal P}^*)$-compatible} if:
\begin{enumerate}
\item For every $D\in{\cal P}\setminus{\cal P}^*$, we have that $c(D)=\mathsf{blue}$. We say that the disks in ${\cal P}\setminus{\cal P}^*$ are {\em $({\cal H},{\cal P}^*)$-forced to be blue}.
\item For every $D\in{\cal P}\cap{\cal P}^*$ whose center is at distance at most 4 from the center of some disk in $\mathsf{Crit}_{{\cal H},{\cal P}^*}$, we have that $c(D)=\mathsf{red}$. We say that the disks in ${\cal P}\cap{\cal P}^*$ whose center is at distance at most 4 from the center of some disk in $\mathsf{Crit}_{{\cal H},{\cal P}^*}$ are {\em $({\cal H},{\cal P}^*)$-forced to be red}.
\end{enumerate}
\end{definition}

We proceed to show that the number of disks in $\cal P$ that should be colored ``correctly'' is only   $\Oh(h+k)$. This is done using the following easy observation, in the following lemma.

\begin{observation}\label{obs:packingDisks}
The number of pairwise disjoint 
disks inside a circle of radius $r$ is at~most~$\pi r^2$.
\end{observation}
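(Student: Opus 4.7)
The statement is an essentially trivial area-comparison bound, so the plan is to prove it by a direct area argument and then simply slacken the bound from $r^2$ to $\pi r^2$.

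First, I would unpack the setup: recall that throughout the paper, ``disk'' means an open unit disk, so each of the $N$ pairwise disjoint disks under consideration has area exactly $\pi$. Since they are pairwise disjoint and all contained inside the circle of radius $r$ (whose interior has area $\pi r^2$), the sum of their areas cannot exceed $\pi r^2$. This yields
\[
N \cdot \pi \;\le\; \pi r^2,
\]
hence $N \le r^2$. Since $1 \le \pi$, we have $r^2 \le \pi r^2$ for all $r \ge 0$, so in particular $N \le \pi r^2$, which is the claimed bound. (The case $r<1$ is vacuous, since then no unit disk fits inside the circle.)

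There is essentially no technical obstacle here; the only minor point to be careful about is to make sure that ``inside a circle of radius $r$'' really means that each disk lies within the closed region bounded by the circle, so that their areas genuinely sum to at most the enclosing area. Since the disks are open and the enclosing region has area $\pi r^2$, disjointness gives the area inequality without any boundary subtleties. The looser bound $\pi r^2$ (rather than the tighter $r^2$) is stated because it is the form most convenient for the subsequent application, where we will multiply by constants depending on $h+k$ and only the asymptotic order matters.
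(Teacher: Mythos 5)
Your area argument is correct (and in fact yields the stronger bound $N\le r^2$, which trivially implies $N\le\pi r^2$); the paper states this observation without proof, and your argument is exactly the standard justification it implicitly relies on. No issues.
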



\begin{lemma}\label{lem:fewForced}
Let $({\cal P}, R,h,k)$ be a dense yes-instance of \probPack. Let ${\cal H}$ be a hole cover of size smaller than $k$.  Let ${\cal P}^*$ be a solution to $({\cal P}, R,h,k)$. Then, the number of disks $({\cal H},{\cal P}^*)$-forced to be either blue or red is altogether bounded by $\Oh(h+k)$.
\end{lemma}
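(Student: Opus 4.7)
My plan is to split the forced disks into the two groups defined—blue and red—and bound each separately, with the red-count requiring a packing argument via Observation~\ref{obs:packingDisks}.

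First, the blue-forced disks are simply the elements of $\mathcal{P} \setminus \mathcal{P}^*$, i.e.\ the disks of $\mathcal{P}$ that are relocated by the solution. Since a solution relocates at most $h$ disks, $|\mathcal{P} \setminus \mathcal{P}^*| \leq h$. Next, I want an upper bound on the critical set $\mathsf{Crit}_{\mathcal{H},\mathcal{P}^*} = (\mathcal{P}^* \setminus \mathcal{P}) \cup \mathcal{H}$. Here $|\mathcal{P}^* \setminus \mathcal{P}|$ counts the $k$ newly added disks together with the (at most $h$) relocated disks, so $|\mathcal{P}^* \setminus \mathcal{P}| \leq h+k$; and by the density hypothesis $|\mathcal{H}| < k$. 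Hence $|\mathsf{Crit}_{\mathcal{H},\mathcal{P}^*}| \leq h+2k = \mathcal{O}(h+k)$.

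The red-forced disks are the disks of $\mathcal{P} \cap \mathcal{P}^*$ whose center lies within distance $4$ of the center of some $D' \in \mathsf{Crit}_{\mathcal{H},\mathcal{P}^*}$. The plan is to charge each red-forced disk to such a witness $D'$ and argue that each $D'$ is charged only $\mathcal{O}(1)$ times. Fix $D' \in \mathsf{Crit}_{\mathcal{H},\mathcal{P}^*}$ with center $C'$. Any unit disk whose center lies within distance $4$ of $C'$ is contained in the disk of radius $5$ centered at $C'$. The red-forced disks we charge to $D'$ all belong to $\mathcal{P}^*$, hence they are pairwise disjoint. By Observation~\ref{obs:packingDisks}, the number of pairwise disjoint unit disks contained in a disk of radius $5$ is at most $25\pi = \mathcal{O}(1)$.

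Combining, the number of red-forced disks is at most $|\mathsf{Crit}_{\mathcal{H},\mathcal{P}^*}| \cdot \mathcal{O}(1) = \mathcal{O}(h+k)$, and adding the blue-forced count gives a total of $\mathcal{O}(h+k)$, as claimed. No step looks genuinely delicate: the only thing to be careful about is the arithmetic bounding $|\mathsf{Crit}_{\mathcal{H},\mathcal{P}^*}|$, and correctly inflating the radius from $4$ to $5$ before applying Observation~\ref{obs:packingDisks}.
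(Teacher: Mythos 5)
Your proof is correct and follows essentially the same route as the paper's: bound the blue-forced disks by $h$, bound $|\mathsf{Crit}_{\mathcal{H},\mathcal{P}^*}|$ by $h+2k$, and charge each red-forced disk to a critical witness, using the radius-$5$ containment together with Observation~\ref{obs:packingDisks} and pairwise disjointness of the disks in $\mathcal{P}\cap\mathcal{P}^*$ to get an $\Oh(1)$ charge per witness. No gaps.
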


\begin{proof}
Because ${\cal P}^*$ is a solution and $|{\cal H}|<k$, we have that
$|{\cal P}\setminus{\cal P}^*|\leq h$. So, at most $h$ disks are $({\cal H},{\cal P}^*)$-forced to be blue.
Further, $|\mathsf{Crit}_{{\cal H},{\cal P}^*}|=|({\cal P}^*\setminus {\cal P})\cup{\cal H}|<h+2k$. Observe that every 
disk in ${\cal P}\cap{\cal P}^*$ whose center is at distance at most 4 from the center of some disk in  $\mathsf{Crit}_{{\cal H},{\cal P}^*}$ is contained inside a circle of radius $5$ whose center is the center of some disk in  $\mathsf{Crit}_{{\cal H},{\cal P}^*}$. So, due to \Cref{obs:packingDisks} and since the disks in ${\cal P}\cap{\cal P}^*$ are pairwise disjoint, there exist at most $\pi\cdot 5^2\cdot (h+2k)=\Oh(h+k)$ disks in ${\cal P}\cap{\cal P}^*$ whose center is at distance at most 4 from the center of some disk in $\mathsf{Crit}_{{\cal H},{\cal P}^*}$. In particular, this means that at most $\Oh(h+k)$ disks are $({\cal H},{\cal P}^*)$-forced to be red.
This completes the proof.
\end{proof}

\subparagraph{Proof of \Cref{thm:repack-fpt}: Values of containers.}
Next, we present the definition of slots and containers, in which we will aim to (re)pack 
disks. The definition is followed by an observation and a lemma, which, in particular, state that if we try to repack at least one disk in a container, we can just repack all disks in that container.

\begin{definition}[{\bf  $c$-Slots and $c$-Containers}]
Let $({\cal P}, R,h,k)$ be an instance of \probPack. Let $c: {\cal P}\rightarrow \{\mathsf{blue},\mathsf{red}\}$. The set of {\em $c$-slots}, denoted by $\mathsf{Slots}_c$, is the set of 
 disks contained in $R$ that are disjoint from all disks in $\cal P$ that are colored red by $c$. The set of {\em $c$-containers}, denoted by $\mathsf{Containers}_c$, is the set of maximally connected regions in the union of all disks in $\mathsf{Slots}_c$.
\end{definition}

\begin{observation}\label{obs:disjointContainers}
Let $({\cal P}, R,h,k)$ be an instance of \probPack. Let $c: {\cal P}\rightarrow \{\mathsf{blue},\mathsf{red}\}$. Then, the regions in $\mathsf{Containers}_c$ are pairwise disjoint.
\end{observation}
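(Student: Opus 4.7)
The plan is to derive this directly from the definition of ``maximally connected region,'' for which the statement is essentially tautological. The key observation is that being in the same maximally connected subset of a topological space $X \subseteq \mathbb{R}^2$ (namely, the union $U = \bigcup_{D \in \mathsf{Slots}_c} D$) is an equivalence relation on $U$, so its equivalence classes partition $U$ and are therefore pairwise disjoint.

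More concretely, I would argue by contradiction. Suppose that two distinct regions $C_1, C_2 \in \mathsf{Containers}_c$ satisfy $C_1 \cap C_2 \neq \emptyset$. Pick any point $p \in C_1 \cap C_2$. Since both $C_1$ and $C_2$ are connected subsets of $U$ and they share the point $p$, their union $C_1 \cup C_2$ is also a connected subset of $U$ (this is the standard topological fact that the union of two connected sets with nonempty intersection is connected). But then $C_1 \cup C_2 \supsetneq C_1$ is a strictly larger connected subset of $U$, contradicting the maximality of $C_1$ as a connected region of $U$.

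I do not anticipate any real obstacle here, since the statement is essentially a restatement of the defining property of connected components. The only minor subtlety is to note that $\mathsf{Slots}_c$ consists of open disks and their union $U$ is an open subset of $\mathbb{R}^2$, for which the standard theory of connected components applies; in particular, pairwise disjointness of the equivalence classes is immediate from transitivity of the relation ``there is a connected subset of $U$ containing both points.'' The whole argument should fit in two or three lines in the final writeup.
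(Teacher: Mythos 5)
Your argument is correct and matches the paper's (implicit) reasoning: the paper states this observation without proof, treating it as immediate from the definition of $c$-containers as the maximal connected regions (i.e., connected components) of the union of $c$-slots, which is exactly the standard fact you spell out. No issues.
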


\begin{lemma}\label{lem:allOrNone}
Let $({\cal P}, R,h,k)$ be a yes-instance of \probPack. Let ${\cal H}$ be a hole cover.  Let ${\cal P}^*$ be a solution to $({\cal P}, R,h,k)$. Let $c: {\cal P}\rightarrow \{\mathsf{blue},\mathsf{red}\}$ be $({\cal H},{\cal P}^*)$-compatible. Then, for every region $X\in\mathsf{Containers}_c$, either all disks in ${\cal P}$ contained in $X$ belong to ${\cal P}\setminus{\cal P}^*$ or none of the disks in ${\cal P}\cup{\cal P}^*$ contained in $X$ belongs to $({\cal P}\setminus{\cal P}^*)\cup({\cal P}^*\setminus{\cal P})$.
\end{lemma}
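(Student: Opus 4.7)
I argue by contradiction. First I dispose of the easy case: if $X$ contains no disk of $\mathcal{P} \cap \mathcal{P}^*$, then every disk of $\mathcal{P}$ contained in $X$ is necessarily a blue slot --- the interior of a red disk of $\mathcal{P}$ cannot lie in $\bigcup \mathsf{Slots}_c$ --- and hence, being not kept, must lie in $\mathcal{P}\setminus\mathcal{P}^*$; so the first disjunct of the lemma holds. Henceforth assume there is some $D \in \mathcal{P} \cap \mathcal{P}^*$ with $D \subseteq X$. Then $D$ is itself a slot (a red disk is not disjoint from itself and hence is not a slot), so $c(D) = \mathsf{blue}$. By the compatibility of $c$, the center $O_D$ lies at distance strictly greater than $4$ from the center of every disk in $\mathsf{Crit}_{\mathcal{H},\mathcal{P}^*}$. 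I aim to show that every disk $D' \subseteq X$ with $D' \in \mathcal{P} \cup \mathcal{P}^*$ in fact satisfies $D' \in \mathcal{P} \cap \mathcal{P}^*$, and will derive a contradiction under the assumption that some such $D'$ lies in $(\mathcal{P}\setminus\mathcal{P}^*) \cup (\mathcal{P}^*\setminus\mathcal{P})$.

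The main case is $D' \in \mathcal{P}^*\setminus\mathcal{P}$, so that $D'$ is itself critical and $\|O_D - O_{D'}\| > 4$. Since $D, D' \subseteq X$ and $X$ is path-connected, I fix a continuous path $\gamma : [0,1] \to X$ with $\gamma(0) = O_D$ and $\gamma(1) = O_{D'}$ and analyze the slots met along $\gamma$. The key tool is Observation~\ref{obs:cover}: each slot $S \subseteq X$ intersects some disk in $\mathcal{P} \cup \mathcal{H}$ and, being a slot, is disjoint from every red disk of $\mathcal{P}$. My plan is to exhibit, near $O_{D'}$, a slot $S \subseteq X$ that meets a \emph{kept} disk $D_0 \in \mathcal{P} \cap \mathcal{P}^*$ whose center lies within distance $4$ of $O_{D'}$; by compatibility $D_0$ would be forced red, contradicting the fact that $S$ meets it.

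The secondary case $D' \in \mathcal{P}\setminus\mathcal{P}^*$ (where $D'$ is forced blue) reduces to the main case. Observation~\ref{obs:cover} applied to $D'$ forces $D'$ to meet some disk in $\mathcal{P} \cup \mathcal{H}$, and since $D' \in \mathcal{P}$ is disjoint from every other disk of $\mathcal{P}$, this disk must be a hole-cover disk $D_H \in \mathcal{H}$. Such $D_H$ is critical by definition, is itself a slot (being disjoint from all of $\mathcal{P}$), and by the maximality of $X$ satisfies $D_H \subseteq X$. Rerunning the main-case argument with $D_H$ in the role of $D'$ yields the contradiction.

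\paragraph{Main obstacle.}
The delicate step is the geometric argument in the main case: I must exhibit a slot $S \subseteq X$ close to $O_{D'}$ that is \emph{forced} to meet a kept disk of $\mathcal{P}$ within distance $4$ of $O_{D'}$, rather than a removed disk or a hole-cover disk. A priori the slots approaching $D'$ along $\gamma$ could evade this by meeting removed or hole-cover disks instead. I plan to control this by an inductive argument on the number of critical disks contained in $X$: any hole-cover disk encountered along $\gamma$ is itself critical and, by maximality of containers, also lies in $X$, providing a new target closer to $O_D$ for the path argument; the process terminates because $X$ contains only finitely many critical disks, and in the final step no further hole-cover detour remains, so the slot in question must meet a kept disk within the required distance --- producing the sought contradiction.
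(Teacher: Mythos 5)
There is a genuine gap, and it sits exactly where you flag the argument as delicate. The concrete error is in your secondary case: you claim that Observation~\ref{obs:cover} applied to a removed disk $D'\in\mathcal{P}\setminus\mathcal{P}^*$ forces $D'$ to meet a hole-cover disk because $D'$ is disjoint from every \emph{other} disk of $\mathcal{P}$. But Observation~\ref{obs:cover} is satisfied by $D'$ trivially: $D'$ intersects \emph{itself}, which is a disk of $\mathcal{P}$. Since $\mathcal{H}$ covers the holes of the \emph{original} packing (which still contains $D'$), a removed disk surrounded by other disks of $\mathcal{P}$ need not come anywhere near $\mathcal{H}$, so the reduction of this case to the main case collapses. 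The same blind spot undermines the main case: each slot along $\gamma$ meets a non-red disk of $\mathcal{P}\cup\mathcal{H}$, i.e.\ a blue disk of $\mathcal{P}$ or a disk of $\mathcal{H}$, and the blue disks of $\mathcal{P}$ include the removed ones. A removed disk is neither in $\mathsf{Crit}_{\mathcal{H},\mathcal{P}^*}$ (so meeting one forces nothing to be red) nor a ``new target'' for your induction, so the conclusion that ``in the final step \dots the slot in question must meet a kept disk within the required distance'' does not follow: the kept disks can be separated from $D'$ by a chain of removed disks, and compatibility in fact \emph{guarantees} that no blue kept disk is within distance $4$ of $O_{D'}$, so the object you are trying to exhibit cannot exist.

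The paper's proof is a single interface argument rather than a chase toward $D'$: join the centers of a changed disk and a kept disk by a curve inside $X$ and use connectedness together with Observation~\ref{obs:cover} to produce one slot $D^*$ that simultaneously meets a disk $A\in(\mathcal{P}\setminus\mathcal{P}^*)\cup\mathcal{H}$ and a blue kept disk $A'$; then $\|O_A-O_{A'}\|\le 4$, so $A'$ is forced red, contradicting that a slot meets it. The contradiction comes from $A'$ being close to $A$, not from $A'$ being close to $D'$, and the removed disks are deliberately lumped with $\mathcal{H}$ on the ``changed'' side of the interface. Note that this final step needs kept disks within distance $4$ of a \emph{removed} disk to be forced red, i.e.\ the removed disks must effectively be counted among the critical disks (the paper's proof uses this even though the displayed definition of $\mathsf{Crit}_{\mathcal{H},\mathcal{P}^*}$ lists only $(\mathcal{P}^*\setminus\mathcal{P})\cup\mathcal{H}$). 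Your proposal takes that definition at face value and consequently has no mechanism at all for ruling out a kept blue disk adjacent to a removed one --- which is precisely why both of your cases leak.
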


\begin{proof}
Targeting a contradiction, suppose that there exists a disk $D$  contained in $X$ that belongs to $({\cal P}\setminus{\cal P}^*)\cup({\cal P}^*\setminus{\cal P})$ and a disk $D'$ contained in $X$ that belongs to ${\cal P}\cap{\cal P}^*$.  Let $\gamma$ be a curve connecting the centers of these disks that lies entirely inside $X$. By the definition of a $c$-container and due to \Cref{obs:cover}, every point of this curve 
contained in 
a 
disk that belongs to $X$ and intersects a disk in $\cal P$ colored blue by $c$ or a disk  in $\cal H$. So, there must exist a point on $\gamma$ that is the center of a 
disk $D^*$ that intersects both a disk $A$ contained in $X$ that belongs to  $({\cal P}\setminus{\cal P}^*)\cup{\cal H}$ and a disk $A'$ contained in $X$ that belongs to ${\cal P}\cap{\cal P}^*$. From the definition of a $c$-container, $A'$ is colored blue by $c$.  Moreover, note that the center of $A'$ is at distance at most $4$ from the center of $A$, since each of the centers of $A$ and $A'$ is at distance at most 2 from the center of $D^*$. However, since $c$ is $({\cal H},{\cal P}^*)$-compatible, $A'$ is $({\cal H},{\cal P}^*)$-forced to be red and hence it is colored red by $c$. Since $c$ cannot color a disk both blue and red, we have reached a contradiction. This completes the proof.
\end{proof}

We proceed to define the weight and value of a $c$-container, which will be required for the reduction of our problem to {\sc Knapsack}.

\begin{definition}[{\bf Weight, Validity and Value of Containers}]\label{def:weightValue}
Let $({\cal P}, R,h,k)$ be an instance of \probPack. Let $c: {\cal P}\rightarrow \{\mathsf{blue},\mathsf{red}\}$. Let  $X\in\mathsf{Containers}_c$. The {\em weight} of $X$ is the number of disks in $\cal P$ that it 
contains.
We say that $X$ is {\em valid} if its weight is at most $h$. The {\em value} of $X$ is the maximum number of 
disks that can be packed inside $X$.
\end{definition}

The following is a corollary of \Cref{lem:allOrNone}.

\begin{corollary}\label{cor:allValid}
Let $({\cal P}, R,h,k)$ be a yes-instance of \probPack. Let ${\cal P}^*$ be a solution to $({\cal P}, R,h,k)$. Let $c: {\cal P}\rightarrow \{\mathsf{blue},\mathsf{red}\}$ be $(\cal{H},\cal{P}^*)$-compatible. Then, every disk in $({\cal P}\setminus{\cal P}^*)\cup({\cal P}^*\setminus{\cal P})$ is a $c$-slot, and it is contained in a valid $c$-container.
\end{corollary}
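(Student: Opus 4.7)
\medskip
\noindent\textbf{Proof proposal.} The plan is to establish the two assertions separately and then observe how they interact through \Cref{lem:allOrNone}. Let $D\in({\cal P}\setminus{\cal P}^*)\cup({\cal P}^*\setminus{\cal P})$ be arbitrary.

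First I would show that $D$ is a $c$-slot, i.e.\ that $D$ is contained in $R$ and is disjoint from every disk in $\cal P$ that $c$ colors red. The containment in $R$ is immediate since $D$ belongs either to the original packing $\cal P$ or to the solution packing ${\cal P}^*$, both of which lie in $R$. The key observation for disjointness is that every red disk lies in ${\cal P}\cap{\cal P}^*$: indeed, by $({\cal H},{\cal P}^*)$-compatibility, every disk in ${\cal P}\setminus{\cal P}^*$ is forced to be blue, so no disk in ${\cal P}\setminus{\cal P}^*$ can be colored red. Now if $D\in{\cal P}\setminus{\cal P}^*$, then $D\in{\cal P}$ and any red disk $D'\neq D$ lies in ${\cal P}$; disjointness follows because disks in the packing $\cal P$ are pairwise disjoint. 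If instead $D\in{\cal P}^*\setminus{\cal P}$, then $D\in{\cal P}^*$ and every red disk lies in ${\cal P}\cap{\cal P}^*\subseteq{\cal P}^*$; disjointness follows from the fact that ${\cal P}^*$ is a packing. Since $D$ is a $c$-slot and $D$ is a connected set contained in the union of all $c$-slots, $D$ is contained in the (unique, by \Cref{obs:disjointContainers}) $c$-container $X$ that contains it.

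Next I would check validity of $X$. Since $D\in({\cal P}\setminus{\cal P}^*)\cup({\cal P}^*\setminus{\cal P})$ is a disk in ${\cal P}\cup{\cal P}^*$ contained in $X$, the second alternative in the conclusion of \Cref{lem:allOrNone} fails for $X$. Hence the first alternative holds: every disk in $\cal P$ contained in $X$ lies in ${\cal P}\setminus{\cal P}^*$. Consequently, the weight of $X$ (the number of disks of $\cal P$ contained in $X$) is at most $|{\cal P}\setminus{\cal P}^*|$. Since ${\cal P}^*$ is obtained from $\cal P$ by relocating at most $h$ disks, we have $|{\cal P}\setminus{\cal P}^*|\leq h$, so the weight of $X$ is at most $h$ and $X$ is valid per \Cref{def:weightValue}.

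I do not anticipate a real obstacle here: the whole argument is a clean bookkeeping exercise that essentially packages the content of \Cref{lem:allOrNone} together with the definition of compatibility. The only subtlety worth flagging explicitly in the write-up is the asymmetric role of ${\cal P}\setminus{\cal P}^*$ versus ${\cal P}^*\setminus{\cal P}$ in the disjointness check, since it is what ensures that both types of ``moved'' disks are handled uniformly even though only disks of the former type are explicitly constrained by the compatibility definition.
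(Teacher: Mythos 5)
Your proof is correct and follows exactly the route the paper intends: the paper leaves the argument implicit, deriving the corollary from \Cref{lem:allOrNone}, and your write-up supplies precisely the missing definition-chasing (every red disk lies in ${\cal P}\cap{\cal P}^*$, so any disk of the symmetric difference is disjoint from all red disks and hence a $c$-slot) plus the application of \Cref{lem:allOrNone} to bound the container's weight by $|{\cal P}\setminus{\cal P}^*|\leq h$. No gaps.
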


Now, we define a way in which we can ``easily'' describe a container, and then prove that this way can be encoded compactly.

\begin{definition}[{\bf Descriptions of Containers}]
Let $({\cal P}, R,h,k)$ be an instance of \probPack. Let ${\cal H}$ be a hole cover. Let $c: {\cal P}\rightarrow \{\mathsf{blue},\mathsf{red}\}$. An {\em $\cal H$-description} (or, for short, {\em description}) of a region $X\in\mathsf{Containers}_c$ is a pair $({\cal D}_1,{\cal D}_2)$ of a subset ${\cal D}_1\subseteq{\cal P}\cup{\cal H}$ and a minimal subset ${\cal D}_2\subseteq{\cal P}$ such that $X$ equals the set of all points in $R$ at distance less than $2$ from at least one disk in ${\cal D}_1$ and at least $2$ from all disks in ${\cal D}_2$.
\end{definition}

\begin{lemma}\label{lem:smallDescription}
Let $({\cal P}, R,h,k)$ be an instance of \probPack. Let ${\cal H}$ be a hole cover. Let $c: {\cal P}\rightarrow \{\mathsf{blue},\mathsf{red}\}$.  Let $X\in\mathsf{Containers}_c$. Then, $X$ has at least one description $({\cal D}_1,{\cal D}_2)$. Moreover, every description $({\cal D}_1,{\cal D}_2)$ of $X$ satisfies $|{\cal D}_1|+|{\cal D}_2|=\Oh(h'+k')$ where $h'$ is the weight of $X$, and $k'$ is the number of disks in $\cal H$ contained in $X$.
\end{lemma}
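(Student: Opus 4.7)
The plan is to prove existence constructively and then bound the size via a charging argument. The key preliminary observation I would establish up front is that every $D\in{\cal H}$, and every disk in ${\cal P}$ colored blue by $c$, is itself a slot: each is an open unit disk contained in $R$, and each is disjoint from every disk in ${\cal P}$ colored red (by definition of a hole cover for the former, by the packing property of ${\cal P}$ for the latter). Consequently, any such disk sits entirely in some $c$-container, and two such disks whose slots intersect must lie in the same container. By contrast, no red disk lies inside any container, so the disks in ${\cal P}\cup{\cal H}$ contained in $X$ are exactly the $h'$ blue disks of ${\cal P}$ in $X$ together with the $k'$ disks of ${\cal H}$ in $X$.

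For existence I would take ${\cal D}_1$ to be the set of all disks in ${\cal P}\cup{\cal H}$ contained in $X$, and let ${\cal D}_2$ be a minimal set of red disks of ${\cal P}$ that, together with ${\cal D}_1$, carves out exactly $X$. Coverage of $X$ by the $2$-neighborhoods of ${\cal D}_1$ is essentially \Cref{obs:cover}: every slot of $X$ intersects some disk in ${\cal P}\cup{\cal H}$, and the preliminary observation upgrades that disk to one contained in $X$, hence in ${\cal D}_1$. Excluding the parts of $\bigcup_{D\in{\cal D}_1}B(c_D,2)\cap R$ that do not belong to $X$ is then routine because the only non-$R$-boundary portions of $\partial X$ are arcs of circles of radius $2$ around red centers of ${\cal P}$; each such red disk can be added to ${\cal D}_2$, and minimality is obtained by pruning.

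The bound on $|{\cal D}_1|$ follows from a direct packing argument. In any description, each $D\in{\cal D}_1$ must have its $2$-neighborhood nontrivially intersect $X$ (otherwise it contributes nothing and can be removed), so its center lies within a bounded distance of $X$. Since the disks in ${\cal P}\cup{\cal H}$ are pairwise disjoint by the greedy construction of ${\cal H}$ in \Cref{lem:computeHoleCover}, applying \Cref{obs:packingDisks} locally shows that only $\Oh(1)$ such disks can lie within a constant radius of any disk contained in $X$. Combining, $|{\cal D}_1|=\Oh(h'+k')$.

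The heart of the proof is the bound on $|{\cal D}_2|$, which I expect to be the main obstacle. By the minimality clause in the definition of description, each $D'\in{\cal D}_2$ is \emph{witnessed} by a point $q\in R$ that re-enters the described set if $D'$ is deleted from ${\cal D}_2$; such a witness necessarily sits on (or arbitrarily close to) $\partial B(c_{D'},2)$, which forces a slot of $X$ adjacent to $D'$ at distance exactly $2$ from $c_{D'}$. Applying \Cref{obs:cover} to this adjacent slot associates $D'$ with a disk of ${\cal P}\cup{\cal H}$ contained in $X$ that lies within a constant Euclidean distance of $D'$. Since \Cref{obs:packingDisks} bounds the number of pairwise disjoint unit disks within any fixed constant distance of a single disk by a constant, the association is $\Oh(1)$-to-one, giving $|{\cal D}_2|=\Oh(h'+k')$. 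The delicate step is precisely the ``witness extraction'': one must show that a minimality witness really does pin down a slot of $X$ adjacent to $D'$ rather than a generic point covered by several ${\cal D}_1$-disks, and that the associated interior disk of $X$ is close enough for the packing bound to close the charging argument.
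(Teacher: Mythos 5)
Your proof is correct and follows essentially the same route as the paper's: existence via the canonical description (blue disks and hole-cover disks are themselves slots), $|{\cal D}_1|$ by identifying its members with the $h'+k'$ disks of ${\cal P}\cup{\cal H}$ inside $X$, and $|{\cal D}_2|$ by a minimality-plus-packing charging argument. The ``delicate witness extraction'' you flag is not actually needed (and your intermediate claims that the witness lies near $\partial B(c_{D'},2)$ and pins down a slot adjacent to $D'$ are not quite right): the paper simply observes that a minimality witness for $D'\in{\cal D}_2$ lies within distance $2$ of both $c_{D'}$ and the center of some disk in ${\cal D}_1$, so $D'$ is contained in a radius-$5$ circle around a ${\cal D}_1$-center, whence \Cref{obs:packingDisks} yields $|{\cal D}_2|\leq\pi\cdot 5^2\cdot|{\cal D}_1|$.
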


\begin{proof}
By \Cref{obs:cover}, every $c$-slot intersects at least one disk in $\{D\in{\cal P}: c(D)=\mathsf{blue}\}\cup{\cal H}$ and is disjoint from all disks in $\{D\in{\cal P}: c(D)=\mathsf{red}\}$. Further, every point in every disk in $\{D\in{\cal P}: c(D)=\mathsf{blue}\}\cup{\cal H}$ is contained in a $c$-slot. So, it is immediate that $X$ has a description $({\cal D}_1,{\cal D}_2)$, and that $|{\cal D}_1|=\Oh(h'+k')$. Due to \Cref{obs:packingDisks} and since the 
disks in ${\cal P}\cup{\cal H}$ are pairwise disjoint, any circle of radius $5$ whose center is a center of some disk in $\{D\in{\cal P}: c(D)=\mathsf{blue}\}\cup{\cal H}$ can contain inside at most $\pi\cdot 5^2$ disks from $\{D\in{\cal P}: c(D)=\mathsf{red}\}$. Due to the minimality of ${\cal D}_2$ (which is a subset of $\{D\in{\cal P}: c(D)=\mathsf{red}\}$), every disk in it must be contained inside a circle of radius $5$ whose center is a center of some disk in $\{D\in{\cal P}: c(D)=\mathsf{blue}\}\cup{\cal H}$. Hence, $|{\cal D}_2|\leq |{\cal D}_1|\cdot\pi\cdot 5^2=\Oh(h'+k')$.
\end{proof}

Next, we use a description in order to efficiently compute the value of a $c$-container.

\begin{lemma}\label{lem:computeValue}
There is
an $(h+k)^{\Oh(h+k)}\cdot |I|^{\Oh(1)}$-time algorithm that, given a dense instance $I=({\cal P}, R,h,k)$ of \probPack, a hole cover ${\cal H}$ of size smaller than $k$, $c: {\cal P}\rightarrow \{\mathsf{blue},\mathsf{red}\}$ and a valid region $X$ with a description $({\cal D}_1,{\cal D}_2)$, computes the value of $X$.
\end{lemma}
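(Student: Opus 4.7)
My plan is as follows. First, I will bound the value of $X$ by $\Oh(h+k)$. By the validity of $X$ (weight $h' \leq h$) and $|{\cal H}| < k$, \Cref{lem:smallDescription} gives $|{\cal D}_1| + |{\cal D}_2| = \Oh(h+k)$. Since every point of $X$ lies at distance less than $2$ from some center in ${\cal D}_1$, the region $X$ is contained in the union of $|{\cal D}_1|$ open balls of radius $2$, so $\mathrm{area}(X) = \Oh(|{\cal D}_1|)$. Any packing of pairwise disjoint open unit disks inside $X$ has size at most $\mathrm{area}(X)/\pi = \Oh(h+k)$; call this bound $v$. The algorithm iterates $v' = 0,1,\ldots,v$, tests for each $v'$ whether $v'$ unit disks can be packed inside $X$, and returns the largest feasible $v'$.

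For a fixed $v'$, the feasibility test begins by enumerating, for each of the $v'$ would-be packed disks $D_i^*$, a ``witness'' disk $J_i \in {\cal D}_1$ that it must intersect. This guess is legitimate: if $D_i^* \subseteq X$ then its center $c_i^* \in X$, and by the description $c_i^*$ is at distance less than $2$ from some center in ${\cal D}_1$, so $D_i^*$ intersects that disk. The total number of guesses is at most $|{\cal D}_1|^{v'} = (h+k)^{\Oh(h+k)}$.

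For each guess $i \mapsto J_i$, I will set up a system of polynomial inequalities of degree at most $2$ in $\ell = 2v' = \Oh(h+k)$ real variables (the coordinates of $c_1^*,\ldots,c_{v'}^*$), with the following constraints: (i) $c_i^*$ is at distance at least $1$ from the boundary of $R$ (linear); (ii) $\|c_i^* - c_{J_i}\|^2 < 4$ (guessed intersection); (iii) $\|c_i^* - c_D\|^2 \geq 4$ for each $D \in {\cal D}_2$; (iv) $\|c_i^* - c_r\|^2 \geq 4$ for each red $r \in {\cal P}$ with $\|c_r - c_{J_i}\| < 4$; and (v) $\|c_i^* - c_j^*\|^2 \geq 4$ for $i \neq j$. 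The set in (iv) has size $\Oh(1)$ per $J_i$ by \Cref{obs:packingDisks}, so the total count of polynomials is $s = \Oh((h+k)^2)$. \Cref{prop:polyequations} then solves the system in time $s^{\ell+1} d^{\Oh(\ell)} = (h+k)^{\Oh(h+k)}$. Summing over guesses and over $v'$, and preprocessing the nearby-red-disk sets in polynomial time, yields the claimed total of $(h+k)^{\Oh(h+k)} \cdot |I|^{\Oh(1)}$.

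The main obstacle will be arguing that constraints (i)--(v), which involve only the $\Oh(h+k)$ disks of the description and the $\Oh(h+k)$ nearby red disks, actually suffice to enforce $D_i^* \subseteq X$ rather than the weaker $c_i^* \in X$. I will handle this in two steps. First, (ii) combined with (iv) and the triangle inequality implies $\|c_i^* - c_r\| > 2$ for \emph{every} red $r \in {\cal P}$: if $\|c_r - c_{J_i}\| \geq 4$, then $\|c_i^* - c_r\| \geq \|c_r - c_{J_i}\| - \|c_i^* - c_{J_i}\| > 4 - 2 = 2$. Together with (i) this shows that the unit disk at $c_i^*$ is a valid $c$-slot. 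Second, (ii) and (iii) together with the description imply $c_i^* \in X$. Since $X$ is a connected component of the union of all $c$-slots, and the slot at $c_i^*$ is connected and contains $c_i^*$, the slot --- hence $D_i^*$ itself --- must lie entirely in $X$, as required.
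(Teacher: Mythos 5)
Your proposal is correct and follows essentially the same route as the paper: for each candidate count, guess for every new disk a witness from ${\cal D}_1$ that it must intersect, and decide feasibility with a degree-$2$ polynomial system via \Cref{prop:polyequations}. You are in fact somewhat more careful than the paper's own write-up, which omits the pairwise-disjointness constraints and does not spell out why the constraints force each new disk to lie entirely inside $X$ (your constraints (iv)--(v) and the triangle-inequality/connectivity argument fill this in), and your explicit $\Oh(h+k)$ area bound on the value replaces the paper's implicit capping of the search at $h+k$.
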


\begin{proof}
Given $I=({\cal P}, R,h,k), {\cal H}, c, X$ and $({\cal D}_1,{\cal D}_2)$, the algorithm works as follows. For $\ell=h+k,h+k-1,\ldots,1$, and for every vector $(D_1,D_2,\ldots,D_\ell)\in {\cal D}_1\times{\cal D}_1\times\cdots\times{\cal D}_1$, it tests whether there exist $\ell$ 
disks $S_1,S_2,\ldots,S_\ell$ such that, for every $i\in\{1,2,\ldots,\ell\}$, $S_i$ intersects $D_i$, is contained in $R$ and is disjoint from all disks in ${\cal D}_2$. The test is done by constructing a system of polynomial equations of degree $2$ with $2\ell$ variables and $\ell\cdot(|{\cal D}_2|+2)$ equations as follows. For every $i\in\{1,2,\ldots,\ell\}$, we have two variables, denoting the $x$- and $y$-coordinates of the center of $S_i$, one equation enforcing that the distance between the center of $S_i$ and the center of $D_i$ is smaller than $2$, $|{\cal D}_2|$ equations enforcing that the distance between the center of $S_i$ and the center of each of the disks in ${\cal D}_2$ is at least $2$, and two linear equations enforcing that $S_i$ is contained inside $R$. If the answer is positive, then the algorithm returns that the value of $X$ is $\ell$ and terminates; else, it proceeds to the next iteration. Observe that, when $\ell=1$, the algorithm necessarily terminates (since $X$ contains at least one $c$-slot).

The correctness of the algorithm is immediate from the definition of a description and the exhaustive search that it performs. As for the running time, first observe that, by \Cref{lem:smallDescription} and since $X$ is valid and $|{\cal H}|<k$, $|{\cal D}_1|+|{\cal D}_2|\leq\Oh(h+k)$. So, for a given $\ell$, we have $|{\cal D}_1|^{\Oh(\ell)}=(h+k)^{\Oh(h+k)}$ choices of vectors. Now, consider the iteration corresponding to some $\ell$ and some vector. Then, we solve a system of polynomial equations of degree $2$ with $\Oh(h+k)$ variables and $\Oh((h+k)^2)$ equations. By \Cref{prop:polyequations}, this can be done in time $(h+k)^{\Oh(h+k)}\cdot |I|^{\Oh(1)}$. Thus, the algorithm indeed runs in time $(h+k)^{\Oh(h+k)}\cdot |I|^{\Oh(1)}$.
\end{proof}

The following definition captures the set of all descriptions.

\begin{definition}[{\bf Blueprint}]
Let $({\cal P}, R,h,k)$ be an instance of \probPack. Let ${\cal H}$ be a hole cover. Let $c: {\cal P}\rightarrow \{\mathsf{blue},\mathsf{red}\}$. An {\em $({\cal H},c)$-blueprint} is 
a collection of pairs of sets $\mathsf{Blueprint}\subseteq 2^{{\cal P}\cup{\cal H}}\times 2^{\cal{P}}$, where the first elements of the pair are pairwise-disjoint subsets of $\cal{P}\cup\cal{H}$, such that each region in $\mathsf{Containers}_c$ has exactly one description in $\mathsf{Blueprint}$, and every pair in $\mathsf{Blueprint}$ is a description of a region in $\mathsf{Containers}_c$.
\end{definition}

Next, we show how to compute blueprints.

\begin{lemma}\label{lem:blueprint}
There exists a polynomial-time algorithm that, given an instance $({\cal P}, R,h,k)$ of \probPack, 
a hole cover ${\cal H}$, and $c: {\cal P}\rightarrow \{\mathsf{blue},\mathsf{red}\}$, outputs an $({\cal H},c)$-blueprint.
\end{lemma}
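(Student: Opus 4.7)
Let ${\cal B} = \{D \in {\cal P} : c(D) = \mathsf{blue}\} \cup {\cal H}$ and ${\cal R} = \{D \in {\cal P} : c(D) = \mathsf{red}\}$. Since ${\cal P}$ is a packing and every disk of ${\cal H}$ is a hole for ${\cal P}$, each disk in ${\cal B}$ is disjoint from every disk in ${\cal R}$; in particular, every disk in ${\cal B}$ is itself a $c$-slot. Together with \Cref{obs:disjointContainers}, this means each disk of ${\cal B}$ belongs to exactly one $c$-container. The plan is to compute, for each container $X_i$, a first component ${\cal D}_1^i \subseteq {\cal B}$ collecting the blue/H disks that generate $X_i$, and then a minimal second component ${\cal D}_2^i \subseteq {\cal R}$ that carves $X_i$ out of that generated region.

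\textbf{Partitioning ${\cal B}$ by connectivity.} Two disks $D_1, D_2 \in {\cal B}$ share a $c$-container iff their centers lie in the same connected component of the semi-algebraic region $Z := \{p \in R : B(p,1) \subseteq R \text{ and } |p - c_D| \ge 2 \text{ for all } D \in {\cal R}\}$ of valid slot centers. I build the planar arrangement of the $|{\cal R}|$ circles of radius $2$ around the red centers, clipped to the shrunken rectangle; this arrangement has $\Oh(|{\cal R}|^2)$ faces and is computable in polynomial time by standard tools of computational geometry. For each face, I use \Cref{prop:polyequations} to produce a sample point and to decide face-adjacency and whether the face belongs to $Z$; a connected-components search on the resulting adjacency graph partitions the faces of $Z$ into center-components, and each $D \in {\cal B}$ is assigned (via its center, which lies in $Z$) to the unique component containing it. This yields pairwise-disjoint classes ${\cal B}_1, \ldots, {\cal B}_m$, one per $c$-container $X_i$; I set ${\cal D}_1^i := {\cal B}_i$.

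\textbf{Computing and minimizing ${\cal D}_2^i$.} For each $i$, initialize ${\cal D}_2^i$ to the subset of $D \in {\cal R}$ whose center is at distance less than $4$ from some center in ${\cal D}_1^i$; any red disk farther than this can neither bound nor intersect $X_i$ and is therefore irrelevant to the description. Then iteratively attempt to delete each $D \in {\cal D}_2^i$: the deletion is safe iff there is no point $p \in R$ that simultaneously satisfies $|p - c_{D'}| < 2$ for some $D' \in {\cal D}_1^i$, $|p - c_{D''}| \ge 2$ for all $D'' \in {\cal D}_2^i \setminus \{D\}$, and $|p - c_D| < 2$. This is a decision problem over two real variables with $\Oh(|{\cal P}| + |{\cal H}|)$ degree-$2$ inequalities, which \Cref{prop:polyequations} resolves in polynomial time. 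After at most $|{\cal R}|$ such rounds, ${\cal D}_2^i$ is minimal, and the pair $({\cal D}_1^i, {\cal D}_2^i)$ is a description of $X_i$; the collection $\{({\cal D}_1^i, {\cal D}_2^i)\}_{i=1}^m$ is the desired $({\cal H}, c)$-blueprint. The only subtle step is the connectivity test in the first part, but since all bounding curves are circles of the same radius, it reduces to a routine planar-arrangement computation entirely supported by polynomial-time calls to \Cref{prop:polyequations}.
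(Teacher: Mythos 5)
Your reduction of container connectivity to connectivity of the slot-center space $Z$ is where the argument breaks. The direction you need --- that two disks of ${\cal B}$ lying in the same $c$-container have centers in the same component of $Z$ --- is false. A $c$-container is a connected component of the union of all slots, i.e.\ of the Minkowski sum $Z\oplus B(0,1)$, and this union can be connected across a place where $Z$ itself is pinched apart: two slots may overlap even though no path of slot centers joins their centers. Concretely, place red centers along a line at mutual distance $4-\varepsilon$, say at $(j(4-\varepsilon),0)$, extending this wall to the forbidden margin of $R$ at both ends. The two intersection points $P_\pm=(2-\varepsilon/2,\pm\sqrt{2\varepsilon-\varepsilon^2/4})$ of the radius-$2$ circles about $(0,0)$ and $(4-\varepsilon,0)$ both lie in $Z$, and $|P_+-P_-|=2\sqrt{2\varepsilon-\varepsilon^2/4}<2$ for small $\varepsilon$, so the slots $B(P_+,1)$ and $B(P_-,1)$ intersect and hence lie in one container; yet every point of the line $y=0$ spanned by the wall is at distance strictly less than $2$ from some red center, so the wall separates $Z$ and $P_+,P_-$ lie in different components of $Z$. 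Adding blue disks centered near $(2-\varepsilon/2,\,2.4)$ and $(2-\varepsilon/2,\,-2.4)$ (each intersects the corresponding $B(P_\pm,1)$, and both placements are consistent with a packing) yields two disks of ${\cal B}$ in the same container that your first phase assigns to different classes. The output then contains two pairs neither of which is a description of any container, so it is not an $({\cal H},c)$-blueprint; the error would also propagate into the \textsc{Knapsack} reduction.

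The paper sidesteps exactly this pitfall by testing connectivity at the level of slots rather than slot centers: it merges $A,B\in{\cal B}$ whenever there exist two slots $C,C'$ that intersect each other, with $C$ meeting $A$ and $C'$ meeting $B$, and then closes transitively; since by \Cref{obs:cover} every slot meets some disk of ${\cal B}$, chains of pairwise-intersecting slots witnessing connectivity of the container are faithfully simulated by this relation. Your second phase (restricting to red disks within center-distance $4$ and greedily deleting redundant ones via \Cref{prop:polyequations}) matches the paper and is fine, but the first phase must be replaced by a test of this ``two overlapping slots'' kind, or equivalently by computing the components of $Z\oplus B(0,1)$ rather than of $Z$.
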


\begin{proof}
We will perform a simple greedy procedure to identify, for each disk in $\{D\in{\cal P}: c(D)=\mathsf{blue}\}\cup{\cal H}$, the description of the region that contains it. Observe that every $c$-container contains at least one disk in $\{D\in{\cal P}: c(D)=\mathsf{blue}\}\cup{\cal H}$ (due to \Cref{obs:cover} and the definition of a $c$-container). So, if for every disk $D\in\{D\in{\cal P}: c(D)=\mathsf{blue}\}\cup{\cal H}$ we will take exactly one description $({\cal D}_1,{\cal D}_2)$ among the descriptions we identified such that $D$ is contained in ${\cal D}_1$, we will obtain an $({\cal H},c)$-blueprint. 

To describe the greedy procedure, consider some $D\in \{D\in{\cal P}: c(D)=\mathsf{blue}\}\cup{\cal H}$. Let us first show how to attain ${\cal D}_1$. For this purpose, we initialize ${\cal D}_1=\{D\}$. Then, for every pair of disks $A\in {\cal D}_1$ and $B\in (\{D\in{\cal P}: c(D)=\mathsf{blue}\}\cup{\cal H})\setminus{\cal D}_1$, we test whether there exists a pair of 
disks $C$  and $C'$ that are contained in $R$, intersect each other, are disjoint from all disks in $\{D\in{\cal P}: c(D)=\mathsf{red}\}$, and such that $C$ intersects $A$ and $C'$ intersects $B$. The test for the existence of such a $C$ is performed by using a system of polynomial equations of  degree $2$ with four variables denoting the $x$- and $y$-coordinates of the centers of $C$ and $C'$. For each 
disk in $\{D\in{\cal P}: c(D)=\mathsf{red}\}$, we have two equations enforcing that the distances between its center and the centers of $C$ and $C'$ are each at least $2$. Additionally, we have three equations to enforce that the distance between the centers of $C$  and $C'$ is smaller than $2$, the distance between the centers of $C$  and $A$ is smaller than $2$, and the distance between the centers of $C'$  and $B$ is smaller than $2$, as well as four linear equations to enforce that $C$ and $C'$ are contained in $R$. By \Cref{prop:polyequations}, testing whether this system has a solution (which corresponds to the sought disks $C$ and $C'$) can be done is polynomial time. If the answer is positive, then we add $B$ to ${\cal D}_1$. In case at least one pair $(A,B)$ resulted in the addition of $B$  to ${\cal D}_1$, then we repeat the entire loop, iterating again over all pairs $(A,B)$ (where the domain from which they are taken is updated as a new disk was added to ${\cal D}_1$). Notice that we can perform at most $|{\cal P}|$ repetitions, and that each repetition results in at most $|{\cal P}\cup{\cal H}|^2$ many iterations, each taking polynomial time. Hence, the procedure, so far, runs in polynomial time. 

Now, let us show how to attain ${\cal D}_2$. For this purpose, we initialize ${\cal D}_2=\{D\in{\cal P}: c(D)=\mathsf{red}\}$. Now, for every $A\in\{D\in{\cal P}: c(D)=\mathsf{red}\}$, we test whether there exists a 
disk $C$ that is contained in $R$ and intersects both $A$ and at least one disk in ${\cal D}_1$, and is disjoint from all disks in ${\cal D}_2\setminus\{A\}$. The test can be performed by iterating over every disk $B\in{\cal D}_1$, and using a system of polynomial equations of  degree $2$ with two variables denoting the $x$- and $y$-coordinates of the center of $C$. For each 
disk in ${\cal D}_2\setminus\{A\}$, we have an equation enforcing that the distance between its center and the center of $C$ is at least $2$, and additionally we have two equations to enforce that the distance between the center of $C$ and each of the centers of $A$ and $B$ is smaller than $2$, as well as two linear equations to enforce that $C$ is contained in $R$. By \Cref{prop:polyequations}, testing whether this system has a solution (which corresponds to the sought disk $C$) can be done is polynomial time. If the answer is positive, then we remove $A$ from ${\cal D}_2$. Notice that this phase of the procedure also runs in polynomial time. Moreover, the correctness of the entire procedure directly follows from the definitions of a $c$-container and a description.
\end{proof}

We proceed to define the (extended version of the) {\sc Knapsack} problem and the instances of this problem that our reduction produces.

\begin{definition}[{\bf (Extended) Knapsack}]
In the {\sc (Extended) Knapsack} problem, we are given a collection of $n$ items $U$, where each item $u\in U$ has a weight $w(u)\in\mathbb{N}_0$ and a value $v(u)\in\mathbb{N}_0$, and an integer $W\in\mathbb{N}_0$. The objective is to find, for every $W'\in\{0,1,\ldots,W\}$, the maximum $V_{W'}\in\mathbb{N}_0$ for which there exists a subset of items $S\subseteq \{1,2,\ldots,n\}$ such that $\sum_{i\in S}w(u)\leq W'$ and $\sum_{i\in S}v(u)\geq V_{W'}$.
\end{definition}

\begin{definition}[{\bf $({\cal H},c)$-{\sc Knapsack} instance}]
Let $({\cal P}, R,h,k)$ be an instance of \probPack. Let ${\cal H}$ be a hole cover. Let $c: {\cal P}\rightarrow \{\mathsf{blue},\mathsf{red}\}$. The {\em $({\cal H},c)$-{\sc Knapsack} instance} is the instance $(U,w,v,W,V)$ of {\sc Knapsack} defined as follows: $U$ is the set of all valid regions in $\mathsf{Containers}_c$; for each $X\in U$, $w(X)$ and $v(X)$ are the weight and value of $X$ (see \Cref{def:weightValue}); $W=h$; $V=h+k$.
\end{definition}

\begin{proposition}[\cite{DBLP:books/daglib/0023376}]\label{prop:knapsack}
The {\sc (Extended) Knapsack} problem is solvable in time $\Oh(|U|\cdot W)$.
\end{proposition}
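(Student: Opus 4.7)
The plan is to prove the statement by the standard pseudo-polynomial dynamic program for \textsc{Knapsack}, observing that this algorithm already produces the entire table of optimal values $V_{W'}$ for $W' \in \{0,1,\ldots,W\}$ as a byproduct, so the ``extended'' version comes for free. Fix an arbitrary ordering $u_1, u_2, \ldots, u_n$ of the items in $U$. For $i \in \{0,1,\ldots,n\}$ and $W' \in \{0,1,\ldots,W\}$, define
\[
f(i, W') \;=\; \max\Bigl\{ \sum_{j \in S} v(u_j) \;:\; S \subseteq \{1,\ldots,i\},\; \sum_{j \in S} w(u_j) \leq W' \Bigr\},
\]
with the convention that the empty sum is $0$, so $f(0, W') = 0$ for every $W'$.

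Next I would establish the recurrence: for $i \geq 1$, if $w(u_i) > W'$ then $f(i, W') = f(i-1, W')$, and otherwise
\[
f(i, W') \;=\; \max\bigl\{\, f(i-1, W'),\; f(i-1, W' - w(u_i)) + v(u_i)\,\bigr\}.
\]
Correctness is a routine induction on $i$: any optimal subset $S \subseteq \{1,\ldots,i\}$ either omits $u_i$ (giving the first term of the max, by the induction hypothesis applied to $\{1,\ldots,i-1\}$) or includes $u_i$ (in which case $S \setminus \{u_i\}$ is a feasible subset for the sub-problem with capacity $W' - w(u_i)$, giving the second term). Conversely, both terms on the right-hand side correspond to genuine feasible subsets for the sub-problem at $(i, W')$, so equality holds.

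Finally, I would compute the table $f$ in the natural order of increasing $i$ and, within each $i$, arbitrary order of $W'$. The table has $(n+1)(W+1)$ entries, each computed in $\Oh(1)$ time from two previously computed entries, giving total time $\Oh(n \cdot W) = \Oh(|U| \cdot W)$. The required outputs are then $V_{W'} = f(n, W')$ for each $W' \in \{0, 1, \ldots, W\}$, read directly from the last row of the table. There is no real obstacle here --- the result is essentially textbook --- the only point worth emphasizing is that one should not run the DP $W+1$ times separately for each target capacity, but rather observe that a single execution already fills the entire row $f(n, \cdot)$ within the claimed budget.
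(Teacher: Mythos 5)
Your proof is correct and is exactly the standard pseudo-polynomial dynamic program that the paper's citation points to; the paper itself gives no proof, treating this as a known textbook result. Your observation that a single run of the DP fills in all of $V_0,\ldots,V_W$ simultaneously is precisely why the ``extended'' version costs nothing extra, and the $\Oh(|U|\cdot W)$ bound follows as you state.
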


We now to prove the correspondence between our problem when we restrict the solution set to solutions compatible with a given coloring and the {\sc Knapsack} problem.

\begin{lemma}\label{lem:reductionPart}
Let $({\cal P}, R,h,k)$ be an instance of \probPack. Let ${\cal H}$ be a hole cover. Let $c: {\cal P}\rightarrow \{\mathsf{blue},\mathsf{red}\}$. Then, there exists a solution ${\cal P}^*$ to $({\cal P}, R,h,k)$ such that $c$ is compatible with ${\cal P}^*$ if and only if for the $({\cal H},c)$-{\sc Knapsack} instance $(U,w,v,W,V)$, there exists $W'\in\{0,1,\ldots,W\}$ such that $V_{W'}\geq W'+k$.
\end{lemma}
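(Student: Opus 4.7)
I prove the two implications separately, exploiting the correspondence between valid $c$-containers and items of the Knapsack instance that Lemma~\ref{lem:allOrNone} and Corollary~\ref{cor:allValid} afford.

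\textbf{Forward direction.} Assume a solution $\mathcal{P}^*$ compatible with $c$ exists. Let $S$ be the set of valid $c$-containers that contain at least one disk of $(\mathcal{P}\setminus\mathcal{P}^*)\cup(\mathcal{P}^*\setminus\mathcal{P})$. By Corollary~\ref{cor:allValid} every such container lies in $U$, and Lemma~\ref{lem:allOrNone} rules out its second alternative, so every disk of $\mathcal{P}$ contained in an $X\in S$ belongs to $\mathcal{P}\setminus\mathcal{P}^*$. Since compatibility forces each $D\in\mathcal{P}\setminus\mathcal{P}^*$ to be blue, $D$ is a $c$-slot and lies in a unique $c$-container, which must belong to $S$. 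Setting $W':=\sum_{X\in S}w(X)$, I obtain $W'=|\mathcal{P}\setminus\mathcal{P}^*|\leq h=W$. A symmetric count for the newly added disks (each of which sits inside a unique container of $S$ by Corollary~\ref{cor:allValid}, and together they form a packing within each such container) yields $\sum_{X\in S}v(X)\geq |\mathcal{P}^*\setminus\mathcal{P}|=k+W'$, hence $V_{W'}\geq W'+k$.

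\textbf{Backward direction.} Given $W'$ and a realizing subset $S\subseteq U$, construct $\mathcal{P}^*$ by keeping every disk of $\mathcal{P}$ lying outside $\bigcup_{X\in S}X$ and, in each $X\in S$, inserting an optimal packing of $v(X)$ disks; trim newly added disks if necessary to get exactly $n+k$ disks. Observation~\ref{obs:disjointContainers} together with the disjointness of each $c$-container from every red disk of $\mathcal{P}$ guarantees that $\mathcal{P}^*$ is a valid packing inside $R$ with $|\mathcal{P}\setminus\mathcal{P}^*|\leq W'\leq h$. Compatibility condition (1) is immediate: each disk of $\mathcal{P}\setminus\mathcal{P}^*$ lies in a $c$-container, hence is a $c$-slot, hence is blue. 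For compatibility condition (2) I argue by contradiction. Suppose some $D\in\mathcal{P}\cap\mathcal{P}^*$ with $c(D)=\mathsf{blue}$ has its center within distance~$4$ of the center of some $E\in(\mathcal{P}^*\setminus\mathcal{P})\cup\mathcal{H}$. Then both $D$ and $E$ are $c$-slots, and a \emph{path-of-slots} argument (exploiting that every red disk of $\mathcal{P}$ has distance at least $2$ from both $D$ and $E$, since $D\in\mathcal{P}$ is blue while $E$ is disjoint from $\mathcal{P}$) places $D$ and $E$ in a common $c$-container $X^\star$. Tracing $E$ back either to the packing placed inside some $X\in S$ (giving $X^\star=X\in S$ directly) or to an element of $\mathcal{H}$ sitting in a container close to the blue disk $D$ (for which one extends $S$ by $X^\star$, using the slack $W'\leq h$ to accommodate the added weight) forces $X^\star\in S$, so $D$ should have been removed, contradicting $D\in\mathcal{P}^*$.

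\textbf{Main obstacle.} The crux is the path-of-slots geometric step in compatibility condition~(2). The constant $4$ is calibrated precisely so that any red disk that could obstruct the straight segment $\overline{DE}$ is forced by its disjointness from both $D$ and $E$ to leave a thin corridor admitting a short piecewise-linear detour of pairwise overlapping unit disks, each disjoint from every red disk and contained in $R$. Formalising this detour---verifying that it can be routed globally so as to avoid \emph{all} red disks simultaneously while staying inside the rectangle $R$, and that in the $E\in\mathcal{H}$ case the resulting container is genuinely accommodated in the Knapsack witness---is the main technical calculation of the proof.
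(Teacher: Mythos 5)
Your forward direction and the construction in your backward direction coincide with the paper's own proof, and both are sound: the selected valid containers have total weight $W'=|\mathcal{P}\setminus\mathcal{P}^*|\leq h$ and total value at least $|\mathcal{P}^*\setminus\mathcal{P}|=W'+k$; and, conversely, a Knapsack witness yields a valid packing of $n+k$ disks with at most $W'\leq h$ relocations (your remark that every blue disk of $\mathcal{P}$ meeting a container is wholly contained in it, being itself a $c$-slot, is exactly what makes the new packing legal).

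The genuine gap is the ``path-of-slots'' step, and it cannot be repaired along the lines you sketch, because the claim you need there is false: two $c$-slots whose centers are at distance at most $4$ need not lie in the same $c$-container. A chain of red disks of $\mathcal{P}$ with collinear centers spaced $2$ apart forbids slot centers in a strip of half-width $\sqrt{3}$ around that line, so the union of slots can split into two components with $D$ on one side and $E$ on the other while their centers are exactly $4$ apart (the constant $4$ is calibrated for \Cref{lem:allOrNone}, not for this detour). Consequently a kept blue disk can lie within distance $4$ of a newly inserted disk, or of a disk of $\mathcal{H}$ --- a case in which no choice of $\mathcal{P}^*$ helps at all --- without belonging to any selected container; and your fallback of adjoining $X^\star$ to $S$ breaks down when $X^\star$ is invalid (weight exceeding $h$) or the weight budget is exhausted. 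So condition (2) of compatibility genuinely can fail for the constructed $\mathcal{P}^*$, and in such instances for every solution, even when the Knapsack condition holds: the ``only if'' direction of the lemma, read literally, claims more than the construction delivers. You should know that the paper's proof does not close this step either --- it simply asserts that the constructed $\mathcal{P}^*$ is compatible with $c$. The saving grace is that \Cref{lem:reductionFull} uses from this direction only the weaker conclusion that $(\mathcal{P},R,h,k)$ is a yes-instance (note its ``in particular''), which your construction does establish. The correct fix is to weaken the backward implication to ``\ldots then $(\mathcal{P},R,h,k)$ is a yes-instance'' and stop once $\mathcal{P}^*$ is verified to be a valid solution, which is exactly where your proof is already complete.
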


\begin{proof}
In one direction, suppose that there exists a solution ${\cal P}^*$ to $({\cal P}, R,h,k)$ such that $c$ is compatible with ${\cal P}^*$.  Let $X_1,X_2,\ldots,X_\ell$ be the $c$-containers that contain at least one disk from $({\cal P}\setminus{\cal P}^*)\cup({\cal P}^*\setminus{\cal P})$. By \Cref{obs:disjointContainers}, these $c$-containers are pairwise disjoint, by \Cref{lem:allOrNone} and since $c$ is compatible with ${\cal P}^*$, all disks in ${\cal P}$ contained in $X_1\cup X_2\cup\cdots\cup X_\ell$ belong to ${\cal P}\setminus{\cal P}^*$, and by \Cref{cor:allValid}  and since $c$ is compatible with ${\cal P}^*$, all disks in $({\cal P}\setminus{\cal P}^*)\cup({\cal P}^*\setminus{\cal P})$ are contained in $X_1\cup X_2\cup\cdots\cup X_\ell$ and all of these $c$-containers are valid. So, because ${\cal P}^*$ can repack $h$ disks from $\cal P$, the total weight of these $c$-containers must be some $W'\in \{0,1,\ldots,h\}=\{0,1,\ldots,W\}$, and since ${\cal P}^*$ also packs $k$ additional 
disks, the total value of these $c$-containers must be at least $W'+k$ (to accommodate all of the repacked and $k$ newly packed 
disks). Thus, $V_{W'}\geq W'+k$.

In the other direction, suppose that there exists $W'\in\{0,1,\ldots,W\}$ such that $V_{W'}\geq W'+k$. This means that there exist $c$-containers $X_1,X_2,\ldots,X_\ell$ whose total weight is $W'\in\{0,1,\ldots,h\}$ and whose total value is at least $W'+k$. However, because these $c$-containers are pairwise disjoint (by \Cref{obs:disjointContainers}), this means that we can construct a solution ${\cal P}^*$ such that $c$ is compatible with ${\cal P}^*$ by repacking all the disks in $\cal P$ that are contained in $X_1,X_2,\ldots,X_\ell$ (there are at most $h$ such disks) and, additionally, inserting $k$ new 
disks, within $X_1,X_2,\ldots,X_\ell$. This completes the proof.
\end{proof}

The following is a corollary of  \Cref{lem:computeValue,lem:blueprint}.

\begin{corollary}\label{cor:computeKnapsackInstance}
There exists an $(h+k)^{\Oh(h+k)}\cdot |I|^{\Oh(1)}$-time algorithm that, given a dense instance $I=({\cal P}, R,h,k)$ of \probPack, a hole cover ${\cal H}$ of size smaller than $k$ and  $c: {\cal P}\rightarrow \{\mathsf{blue},\mathsf{red}\}$, computes the $({\cal H},c)$-{\sc Knapsack} instance.
\end{corollary}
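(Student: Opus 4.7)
The plan is to directly combine the two preceding lemmas: first use \Cref{lem:blueprint} to enumerate all descriptions of $c$-containers, and then use \Cref{lem:computeValue} to attach a value to each valid container. The only additional ingredient needed is the computation of weights and the validity check, both of which are straightforward once the blueprint is in hand.

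Step by step, the algorithm I would describe runs as follows. First, invoke the polynomial-time procedure of \Cref{lem:blueprint} on $({\cal P}, R,h,k)$, ${\cal H}$, and $c$ to obtain an $({\cal H},c)$-blueprint $\mathsf{Blueprint}$, which contains one description $({\cal D}_1,{\cal D}_2)$ per region $X\in\mathsf{Containers}_c$. Next, for each such pair, compute the weight $w(X)$ as $|{\cal D}_1\cap{\cal P}|$: by the greedy construction in the proof of \Cref{lem:blueprint}, ${\cal D}_1$ contains exactly those disks of $\{D\in{\cal P}:c(D)=\mathsf{blue}\}\cup{\cal H}$ that lie in $X$, and every disk of ${\cal P}$ contained in $X$ must be colored blue (a red disk cannot be a $c$-slot and hence cannot lie in any container). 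If $w(X)>h$ the region is not valid and is discarded; otherwise, we add $X$ as an item to the Knapsack instance. Finally, for each valid $X$, invoke the algorithm of \Cref{lem:computeValue} on the description $({\cal D}_1,{\cal D}_2)$ to obtain $v(X)$. Set $W=h$ and $V=h+k$ and output $(U,w,v,W,V)$.

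For the running time, the blueprint step is polynomial. The number of descriptions is at most $|{\cal P}|+|{\cal H}|=\Oh(|I|)$, since every $c$-container contains at least one disk of $\{D\in{\cal P}:c(D)=\mathsf{blue}\}\cup{\cal H}$ (by \Cref{obs:cover}). Computing each weight reduces to reading off $|{\cal D}_1\cap{\cal P}|$, and the validity test is just a comparison to $h$. The bottleneck is the value computation: \Cref{lem:computeValue} requires $(h+k)^{\Oh(h+k)}\cdot|I|^{\Oh(1)}$ time per valid region, and we invoke it at most $\Oh(|I|)$ times, giving the claimed overall bound.

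There is essentially no main obstacle; the only point that needs care is verifying that $w(X)$ is correctly read from the blueprint. Since $\cal P$ is a packing, every disk $D\in{\cal P}$ with $c(D)=\mathsf{blue}$ is disjoint from all red disks, hence is itself a $c$-slot and therefore lies in exactly one container, namely the one whose description places $D$ in ${\cal D}_1$ (as enforced by the blueprint procedure). Conversely no red disk is a $c$-slot, so no red disk is contained in any container. This justifies the identity $w(X)=|{\cal D}_1\cap{\cal P}|$ and completes the reduction.
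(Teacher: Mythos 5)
Your proposal is correct and matches the paper's intent exactly: the paper states this corollary as an immediate consequence of \Cref{lem:blueprint} and \Cref{lem:computeValue}, which is precisely the combination you carry out, with the weight/validity bookkeeping filled in correctly (in particular, your observation that every blue disk is itself a $c$-slot and hence lies in exactly one container, while no red disk can lie in any container, correctly justifies $w(X)=|{\cal D}_1\cap{\cal P}|$). Nothing further is needed.
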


To compute coloring functions, we will use the following definition and proposition.

\begin{definition}[{\bf $(U, k)$-Universal Set}]
For a universe $U$ and $k\in\mathbb{N}$, a {\em $(U, k)$-universal set} is a collection $\cal C$ of functions $f: U\rightarrow\{\mathsf{blue},\mathsf{red}\}$ such that for every pair of disjoint sets $B,R\subseteq U$ whose union has size at most $k$, there exists $c\in{\cal C}$ that colors all integers in $B$ blue and all integers in $R$ red.
\end{definition}

\begin{proposition}[\cite{DBLP:conf/focs/NaorSS95}]\label{prop:universalSet}
There exists an algorithm that, given a universe $U$ of size $n$ and $k\in\mathbb{N}$, constructs a $(U, k)$-universal set of size $2^{k+\Oh(\log^2 k)}\log n$ in time $2^{k+\Oh(\log^2 k)}n\log n$.
\end{proposition}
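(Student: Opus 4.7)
The plan is to construct $(U,k)$-universal sets by composing two ingredients: a \emph{splitter} (equivalently, a perfect hash family) that reduces the effective universe size from $n$ down to $\poly(k)$, and a small-universe universal set of size $2^{k+\Oh(\log^2 k)}$ built via $\varepsilon$-almost $k$-wise independence. The composition then gives the stated bounds on size and running time.

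First I would construct a family $\mathcal{S}$ of functions $h : U \to [m]$ with $m = \Theta(k^2)$ such that every $k$-subset of $U$ is hashed injectively by some $h \in \mathcal{S}$. Using polynomial hashing over $\mathrm{GF}(q)$ with $q = \poly(k)$ and an iterative ``splitter of splitters'' derandomization, one can achieve $|\mathcal{S}| = k^{\Oh(\log k)} \log n$, constructible in time $k^{\Oh(\log k)} n \log n$. The key pull-back property is: if $B, R \subseteq U$ are disjoint with $|B|+|R| \le k$ and $h$ is injective on $B \cup R$, then any coloring $c' : [m] \to \{\mathsf{blue},\mathsf{red}\}$ realizing the desired pattern on $h(B) \cup h(R)$ induces, via $c := c' \circ h$, a coloring of $U$ realizing the desired pattern on $B \cup R$.

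Second, on the small universe $[m]$ with $m = \poly(k)$ I would build a $([m], k)$-universal set $\mathcal{U}$ of size $2^k \cdot k^{\Oh(1)}$. The route is $\varepsilon$-almost $k$-wise independent distributions on $\{0,1\}^m$: any such distribution with $\varepsilon < 2^{-k}$ assigns positive mass to each of the $2^k$ patterns on every $k$-subset, so its support is universal. Classical constructions (Naor--Naor or Alon--Goldreich--H{\aa}stad--Peralta $\varepsilon$-biased spaces, concatenated to yield almost $k$-wise independence) give support size $\poly(k, \log m, 1/\varepsilon)$; for $m = \poly(k)$ and $\varepsilon = 2^{-k-1}$ this is $2^k \cdot k^{\Oh(1)}$. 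Composing the two stages, the final family $\{c' \circ h : h \in \mathcal{S},\ c' \in \mathcal{U}\}$ has size $k^{\Oh(\log k)} \log n \cdot 2^k k^{\Oh(1)} = 2^{k+\Oh(\log^2 k)} \log n$, and enumerating it while writing each composed coloring in time $\Oh(n)$ gives overall running time $2^{k+\Oh(\log^2 k)} n \log n$.

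The main obstacle, and the genuine technical content of the NSS construction, is obtaining the splitter bound $k^{\Oh(\log k)} \log n$: naive $k$-wise independent hashing yields $k^{\Omega(k)} \log n$, far too large. The sharper bound is achieved by a recursive construction that at each level shrinks the target universe by a factor of roughly $k$ at a cost of a $k^{\Oh(1)}$ multiplicative overhead, so $\Oh(\log k)$ levels suffice and the overhead compounds to $k^{\Oh(\log k)}$. Verifying that the ``injectivity on every $k$-subset'' property propagates correctly through the recursive composition (and calibrating the parameters so that the $\Oh(\log^2 k)$ term in the exponent is tight) is the most delicate bookkeeping in a full writeup.
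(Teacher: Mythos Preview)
The paper does not prove this proposition at all: it is stated with a citation to Naor, Schulman, and Srinivasan and used as a black box. There is therefore no ``paper's own proof'' to compare your proposal against.

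That said, your sketch is a faithful outline of the NSS construction: compose an $(n,k,k^2)$-splitter (perfect hash family) of size $k^{\Oh(\log k)}\log n$ with a $(k^2,k)$-universal set of size $2^k\cdot k^{\Oh(1)}$ built from an $\varepsilon$-almost $k$-wise independent sample space with $\varepsilon<2^{-k}$. Your identification of the splitter bound as the nontrivial ingredient is accurate, and the pull-back argument you give for why composition preserves universality is correct. One minor calibration: in the standard presentation the recursion that yields the $k^{\Oh(\log k)}$ splitter bound proceeds by first splitting $[n]$ into $\Oh(k)$ buckets of roughly equal load and then recursing on the buckets, rather than shrinking the target universe by a factor of $k$ per level as you phrase it; the arithmetic comes out the same, but the mechanism is slightly different from what you wrote.
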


Based on the definition of a universal set, we define the collection of {\sc Knapsack} instances relevant to our reduction.

\begin{definition}[{\bf $({\cal H},{\cal C})$-{\sc Knapsack} Collection}]\label{def:knapsackCollection}
Let $({\cal P}, R,h,k)$ be an instance of \probPack. Let ${\cal H}$ be a hole cover. Let ${\cal C}$ be a $({\cal P},q(h+k))$-universal set, where $q$ is the constant hidden in the $\Oh$-notation in \Cref{lem:fewForced}. Then, the {\em $({\cal H},{\cal C})$-{\sc Knapsack} collection} is the collection of {\sc Knapsack} instances that includes, for every $c\in{\cal C}$, the $({\cal H},c)$-{\sc Knapsack} instance.
\end{definition}

The following is a corollary of \Cref{cor:computeKnapsackInstance}.

\begin{corollary}\label{cor:computeKnapsackCollection}
There exists an $(h+k)^{\Oh(h+k)}\cdot |I|^{\Oh(1)}$-time algorithm that, given a dense instance $I=({\cal P}, R,h,k)$ of \probPack, a hole cover ${\cal H}$ of size smaller than $k$ and  a $({\cal P},q(h+k))$-universal set $\cal C$, computes the $({\cal H},{\cal C})$-{\sc Knapsack} collection.
\end{corollary}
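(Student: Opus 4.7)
The plan is to obtain the $(\mathcal{H},\mathcal{C})$-\textsc{Knapsack} collection by simple iteration: for each coloring $c \in \mathcal{C}$, invoke the algorithm guaranteed by \Cref{cor:computeKnapsackInstance} on the input $(I,\mathcal{H},c)$ to produce the corresponding $(\mathcal{H},c)$-\textsc{Knapsack} instance, and then output the union of all these instances as the collection. Correctness is immediate from \Cref{def:knapsackCollection}, which defines the $(\mathcal{H},\mathcal{C})$-\textsc{Knapsack} collection as exactly the family of $(\mathcal{H},c)$-\textsc{Knapsack} instances ranging over $c \in \mathcal{C}$.

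The only nontrivial task is to check that the total running time fits within $(h+k)^{\Oh(h+k)} \cdot |I|^{\Oh(1)}$. First, since $\mathcal{C}$ is a $(\mathcal{P},q(h+k))$-universal set and $|\mathcal{P}| \leq |I|$, by \Cref{prop:universalSet} we have
\[
|\mathcal{C}| \;\leq\; 2^{q(h+k)+\Oh(\log^2(q(h+k)))}\log |I| \;=\; 2^{\Oh(h+k)} \cdot |I|^{\Oh(1)}.
\]
Each call to the algorithm of \Cref{cor:computeKnapsackInstance} takes time $(h+k)^{\Oh(h+k)} \cdot |I|^{\Oh(1)}$, so multiplying by $|\mathcal{C}|$ yields the claimed bound, since $2^{\Oh(h+k)}$ is absorbed by $(h+k)^{\Oh(h+k)}$.

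There is no substantive obstacle here beyond this bookkeeping: the universal set $\mathcal{C}$ is provided as input (so we do not need to construct it ourselves), and \Cref{cor:computeKnapsackInstance} already does all the geometric work per coloring (computing a blueprint via \Cref{lem:blueprint}, testing validity of each $c$-container against the bound $h$ on its weight, and computing the value via \Cref{lem:computeValue}). The only point deserving a brief sentence in the final write-up is that densitivy of $I$ and $|\mathcal{H}| < k$ are preserved across all iterations and are precisely the hypotheses needed to apply \Cref{cor:computeKnapsackInstance} in each call.
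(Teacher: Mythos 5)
Your proposal is correct and matches the paper's (implicit) argument: the paper simply notes that this statement is a corollary of \Cref{cor:computeKnapsackInstance}, and the intended proof is exactly your iteration over all $c\in\mathcal{C}$ with the size bound $|\mathcal{C}|=2^{\Oh(h+k)}\cdot|I|^{\Oh(1)}$ from \Cref{prop:universalSet} absorbed into $(h+k)^{\Oh(h+k)}\cdot|I|^{\Oh(1)}$.
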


Next, we prove the correspondence between our problem and the collection of {\sc Knapsack} instances we have just defined.

\begin{lemma}\label{lem:reductionFull}
Let $({\cal P}, R,h,k)$ be an instance of \probPack. Let ${\cal H}$ be a hole cover. Let ${\cal C}$ be a $({\cal P},q(h+k))$-universal set. Then, $({\cal P}, R,h,k)$ is a yes-instance of \probPack\ if and only if the $({\cal H},{\cal C})$-{\sc Knapsack} collection contains an instance $(U,w,v,W,V)$ for which there exists $W'\in\{0,1,\ldots,W\}$ such that $V_{W'}\geq W'+k$.
\end{lemma}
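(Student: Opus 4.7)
The plan is to use Lemma \ref{lem:reductionPart} as the core reduction and combine it with the universal set guarantee to bridge ``there exists an $({\cal H},{\cal P}^*)$-compatible coloring'' and ``some coloring in $\cal C$ works.''

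For the forward direction, I would fix any solution ${\cal P}^*$ to $({\cal P},R,h,k)$ and aim to produce a $c\in{\cal C}$ that is $({\cal H},{\cal P}^*)$-compatible; Lemma \ref{lem:reductionPart} applied to this $c$ would then immediately yield the required $W'\in\{0,\dots,W\}$ with $V_{W'}\geq W'+k$ in the corresponding $({\cal H},c)$-{\sc Knapsack} instance of the collection. To produce such a $c$, observe that compatibility places constraints on $c$ only on the set $B$ of disks $({\cal H},{\cal P}^*)$-forced to be blue and the set $\mathsf{R}$ of disks $({\cal H},{\cal P}^*)$-forced to be red; any coloring that respects these two sets is automatically compatible. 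Since $B\subseteq{\cal P}\setminus{\cal P}^*$ and $\mathsf{R}\subseteq{\cal P}\cap{\cal P}^*$ are disjoint, and by Lemma \ref{lem:fewForced} satisfy $|B|+|\mathsf{R}|\leq q(h+k)$, the universal-set property of $\cal C$ guarantees the existence of some $c\in{\cal C}$ that colors every disk in $B$ blue and every disk in $\mathsf{R}$ red, which is the desired compatible coloring.

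For the backward direction, suppose the $({\cal H},{\cal C})$-{\sc Knapsack} collection contains an instance $(U,w,v,W,V)$, arising from some $c\in{\cal C}$ via Definition \ref{def:knapsackCollection}, for which there exists $W'\in\{0,\dots,W\}$ with $V_{W'}\geq W'+k$. Then Lemma \ref{lem:reductionPart} applied directly to this particular $c$ produces a solution ${\cal P}^*$ to $({\cal P},R,h,k)$ compatible with $c$. In particular, $({\cal P},R,h,k)$ is a yes-instance, completing the equivalence.

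The main subtlety, and essentially the only step requiring care, is the invocation of Lemma \ref{lem:fewForced} in the forward direction: that lemma is stated under the hypothesis that the instance is dense and the hole cover $\cal H$ has size smaller than $k$. The intended reading is therefore that $|{\cal H}|<k$, which simultaneously makes the instance dense and lets Lemma \ref{lem:fewForced} apply so that the forced-disk count fits inside the $q(h+k)$ budget of the universal set. (In the overall algorithm of Theorem \ref{thm:repack-fpt}, this dichotomy is supplied by Lemma \ref{lem:computeHoleCover}: either a hole cover of size less than $k$ is found, in which case we are in the dense regime, or the instance has already been declared a yes-instance because $k$ pairwise disjoint holes were produced.) Once this is acknowledged, the remainder of the proof is a clean two-line application of Lemma \ref{lem:reductionPart} in each direction.
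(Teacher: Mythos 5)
Your proof is correct and follows essentially the same route as the paper: the forward direction combines Lemma~\ref{lem:fewForced} with the universal-set property to obtain a compatible $c\in{\cal C}$ and then applies Lemma~\ref{lem:reductionPart}, and the backward direction applies Lemma~\ref{lem:reductionPart} to the coloring $c$ underlying the witnessing instance. Your observation that the lemma implicitly needs $|{\cal H}|<k$ (so that Lemma~\ref{lem:fewForced} applies) is a fair point about the statement's hypotheses that the paper glosses over, and your resolution via the dichotomy of Lemma~\ref{lem:computeHoleCover} is the intended one.
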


\begin{proof}
In one direction, suppose that $({\cal P}, R,h,k)$ is a yes-instance. By the definition of a $({\cal P},q(h+k))$-universal set and due to \Cref{lem:fewForced}, there exists $c\in{\cal C}$ that is compatible with ${\cal P}^*$. So, the $({\cal H},c)$-{\sc Knapsack} instance is contained in the $({\cal H},{\cal C})$-{\sc Knapsack} collection $(U,w,v,W,V)$, and by \Cref{lem:reductionPart}, for this instance there exists $W'\in\{0,1,\ldots,W\}$ such that $V_{W'}\geq W'+k$.

In the other direction, suppose that the $({\cal H},{\cal C})$-{\sc Knapsack} collection contains an instance $(U,w,v,W,V)$ for which there exists $W'\in\{0,1,\ldots,W\}$ such that $V_{W'}\geq W'+k$. This instance is a $({\cal H},c)$-{\sc Knapsack} instance for some $c\in{\cal C}$. So, by \Cref{lem:reductionPart}, $({\cal P}, R,h,k)$ is, in particular, a yes-instance of \probPack.
\end{proof}

%

\subparagraph{Proof of \Cref{thm:repack-fpt}: Putting it all together.}
We are now ready to make the final step of the proof  of \Cref{thm:repack-fpt}.

The algorithm works as follows. Given an instance $({\cal P}, R,h,k)$ of \probPack, it calls the algorithm in \Cref{lem:computeHoleCover} to either  correctly determine that $({\cal P}, R,h,k)$ is a yes-instance or correctly determine that $({\cal P}, R,h,k)$ is dense and obtain a hole cover $\cal H$ of size smaller than $k$.  In the first case, the algorithm is done. In the second case, the algorithm proceeds as follows. First, it calls the algorithm in \Cref{prop:universalSet} to obtain a $({\cal P},q(h+k))$-universal set $\cal C$. Then, it calls the algorithm in \Cref{cor:computeKnapsackCollection} to obtain the $({\cal H},{\cal C})$-{\sc Knapsack} collection. Afterwards, it uses the algorithm of \Cref{prop:knapsack} to determine whether the $({\cal H},{\cal C})$-{\sc Knapsack} collection contains an instance $(U,w,v,W,V)$ for which there exists $W'\in\{0,1,\ldots,W\}$ such that $V_{W'}\geq W'+k$.

The correctness of the algorithm follows from \Cref{lem:reductionFull}. The runtime bound of $(h+k)^{\Oh(h+k)}\cdot |I|^{\Oh(1)}$ follows from the runtimes bounds of the algorithms that the algorithm calls, stated in \Cref{lem:computeHoleCover}, \Cref{prop:universalSet}, \Cref{cor:computeKnapsackCollection}, and \Cref{prop:knapsack}.

This concludes the proof of \Cref{thm:repack-fpt}.
\section{Conclusion and open questions}\label{sec:conclusion} 
We have shown in \Cref{thm:compl-hard} that  \probPack problem is \classNP-hard even if $h=0$.  On the other hand, by \Cref{thm:repack-fpt},  \probPack is \classFPT when parameterized by $k$ and $h$. Both theorems naturally lead to the question about parameterization by $k$ only. The difficulty here is that even for  adding  one disk, one has to relocate many disks. Already for $k=1$, we do not know, whether the problem is in \classP \, or is \classNP-hard.

Another natural question stemming  from \Cref{thm:repack-fpt} is about kernelization of  \probPack. 
Does \probPack admit a polynomial kernel with parameters  $k$ and $h$? (We refer to books~\cite{CyganFKLMPPS15,FominLSZ19} for an introduction to kernelization).

Finally,  approximation  of   \probPack is an interesting research direction. In \Cref{cor:eptas} we demonstrated that our \classFPT algorithm can be used to construct an \classFPTAS with respect to $h$ for \probMPack. We leave open the question about polynomial approximation.  
Another open question concerns the approximability of the minimum number of relocations $h$ for a given $k$.
Already for $k=1$ finding a good approximation of $h$ is  a challenging problem.

\bibliographystyle{siam}
\bibliography{Disks}

\end{document}